\documentclass[11pt]{article}
\usepackage[utf8]{inputenc}

\usepackage{amsmath, amssymb, amsthm, amsfonts}
\usepackage{mathtools}
\usepackage{thm-restate}
\usepackage{float}
  
\usepackage[dvipsnames,svgnames,x11names]{xcolor}
\definecolor{ForestGreen}{rgb}{0.1333,0.5451,0.1333}
\usepackage[colorlinks=true,linkcolor=ForestGreen,citecolor=blue]{hyperref}

\usepackage[margin=1in]{geometry}
\usepackage{graphics}
\usepackage{pifont}
\usepackage{tikz}
\usepackage{bbm}
\usepackage[T1]{fontenc}

\DeclareMathOperator*{\argmin}{argmin}
\usetikzlibrary{arrows.meta}
\usepackage{environ}
\usepackage{framed}
\usepackage{url}
\usepackage[linesnumbered,ruled,vlined]{algorithm2e}
\usepackage[noend]{algpseudocode}
\usepackage[noabbrev,capitalize,nameinlink]{cleveref}
\crefname{equation}{}{}
\usepackage[labelfont=bf]{caption}
\usepackage{cite}
\usepackage{framed}
\usepackage[framemethod=tikz]{mdframed}
\usepackage{appendix}
\usepackage{graphicx}
\usepackage[textsize=tiny]{todonotes}
\usepackage{tcolorbox}
\allowdisplaybreaks[1]

\newcommand\remove[1]{}

\newtheorem{lemma}{Lemma}[section]
\newtheorem{theorem}{Theorem}
\newtheorem*{lemma*}{Lemma}

\newtheorem*{corollary*}{Corollary}

\theoremstyle{definition}

\newtheorem*{theorem*}{Theorem}
\newtheorem{definition}[lemma]{Definition}

\newtheorem*{rem*}{Remark}

\newcommand{\eps}{\varepsilon}

\newcommand{\pe}{\preceq}
\newcommand{\one}{\mathbbm{1}}
\newcommand{\abs}[1]{\left|#1\right|}
\newcommand{\norm}[1]{\left\lVert#1\right\rVert}
\crefname{algocf}{Algorithm}{Algorithms}
\crefname{claim}{Claim}{Claims}

\renewcommand{\l}{\langle}
\renewcommand{\r}{\rangle}

\newcommand{\poly}{\mathsf{poly}}

\renewcommand{\bar}{\overline}
\renewcommand{\hat}{\widehat}

\renewcommand{\bar}{\overline}

\newcommand{\g}{\nabla}

\newcommand{\beps}{\bar{\eps}}
\newcommand{\assign}{\leftarrow}
\newcommand{\new}{\mathsf{new}}
\newcommand{\E}{\mathbb{E}}
\newcommand{\Var}{\mathsf{Var}}
\newcommand{\stab}{\mathsf{stab}}

\newcommand{\decr}{\mathsf{decr}}
\newcommand{\inte}{\mathsf{int}}
\newcommand{\final}{\mathsf{final}}

\newcommand{\tp}{\top}
\newcommand{\Otil}{\widetilde{O}}

\newcommand{\Atil}{\widetilde{A}}
\newcommand{\tautil}{\widetilde{\tau}}
\newcommand{\qhat}{\widehat{q}}

\newcommand{\albert}[1]{\textbf{\color{green}[Albert: #1]}}

\newcommand{\R}{\mathbb{R}}
\renewcommand{\O}{\widetilde{O}}
\newcommand{\that}{\widehat{t}}
\newcommand{\lev}{\mathsf{lev}}

\newcommand{\xhat}{\widehat{x}}
\newcommand{\shat}{\widehat{s}}
\newcommand{\Deltahat}{\widehat{\Delta}}
\newcommand{\xbar}{\overline{x}}
\newcommand{\sbar}{\overline{s}}

\newcommand{\epscheck}{\epsilon_{\mathrm{checker}}}
\newcommand{\epsjl}{\epsilon_{\mathrm{JL}}}
\newcommand{\epshh}{\epsilon_{\mathrm{hh}}}

\begin{document}

\title{Adaptive Matrix Sparsification and Applications to \\ Empirical Risk Minimization}

\author{
Yang P. Liu \\ Carnegie Mellon University \\ yangl7@andrew.cmu.edu
\and
Richard Peng \\Carnegie Mellon University\\ yangp@cs.cmu.edu
\and
Colin Tang \\Carnegie Mellon University\\ cstang@andrew.cmu.edu
\and
Albert Weng \\ Georgia Institute of Technology \\ albweng@gatech.edu
\and
Junzhao Yang \\ Carnegie Mellon University \\ junzhaoy@andrew.cmu.edu}

\date{}

\clearpage\maketitle

\begin{abstract}
Consider the \emph{empirical risk minimization} (ERM) problem, which is stated as follows. Let $K_1, \dots, K_m$ be compact convex sets with $K_i \subseteq \R^{n_i}$ for $i \in [m]$, $n = \sum_{i=1}^m n_i$, and $n_i\le C_k$ for some absolute constant $C_k$.
Also, consider a matrix $A \in \R^{n \times d}$ and vectors $b \in \R^d$ and $c \in \R^n$. Then the ERM problem asks to find
\[
\min_{\substack{x \in K_1 \times \dots \times K_m\\ A^\top x = b}}
    c^\top x.
\]
We give an algorithm to solve this to high accuracy in time $\widetilde{O}(nd + d^6\sqrt{n}) \le \widetilde{O}
(nd + d^{11})$~\footnote{Throughout, we use $\widetilde{O}$ to hide constants in $C_K$ as well as logarithmic dependencies in $n, d$ and the accuracy $\eps$.}, which 
is nearly-linear time in the input size when $A$ is dense and $n \ge d^{10}$.

Our result is achieved by implementing an $\widetilde{O}(\sqrt{n})$-iteration interior point method (IPM) efficiently using dynamic data structures. In this direction, our key technical advance is a new algorithm for maintaining leverage score overestimates of matrices undergoing row updates. Formally, given a matrix $A \in \R^{n \times d}$ undergoing $T$ batches of row updates of total size $n$ we give an algorithm which can maintain leverage score overestimates of the rows of $A$ summing to $\widetilde{O}(d)$ in total time $\widetilde{O}(nd + Td^6)$. This data structure is used to sample a spectral sparsifier within a robust IPM framework to establish the main result.
\end{abstract}

\newpage

\tableofcontents

\renewcommand{\Atil}{\widetilde{\mathit{A}}}
\newcommand{\Gtil}{\widetilde{\mathit{G}}}
\newcommand{\tauhat}{\widehat{\mathit{\tau}}}

\newpage

\section{Introduction}
\label{sec:intro}

Empirical risk minimization (ERM) is a general convex optimization problem which captures several fundamental tasks such as linear regression, $\ell_p$ regression \cite{C05,DDHKM08,BCLL18,AKPS19}, LASSO \cite{Tib96}, logistic regression \cite{Cox58,HLS13}, support vector machines (SVM) \cite{CV95}, quantile regression \cite{K00,KH01}, and AdaBoost \cite{FS97}.
A more comprehensive discussion of ERM is in~\cite{LSZ19}.
There, the problem is formally defined as:
\begin{equation} \min_{y \in \R^d}\sum_{i=1}^m
f_i\left(A_i y - c_i\right) \label{eq:erm1} \end{equation}
where $A_i \in \R^{n_i \times d}$, $c_i \in \R^{n_i}$ for integer dimensions $n_1, \dots, n_m$, and $f_i: \R^{n_i} \to \R$ are convex functions.
In this paper, one should think of $n_i$ being small constants and $m$ as being much larger than $d$. When all $n_i = 1$ this is known as a generalized linear model (GLM).

The ERM as stated in \eqref{eq:erm1} translates to the more
convenient form as stated in the abstract via an application of duality for convex programs.
Note each $f_i$ is convex, its convex conjugate $f_i^{*}$ is convex, 
and standard Sion's min-max duality manipulations (which we defer to \cref{sec:ERMDuality}) give
\[
\min_{y \in \R^d}\sum_{i=1}^m
f_i\left(A_i y - c_i\right)
=
\max_{x \in \R^{\sum n_i}, A^\top x = 0}
\sum_{i=1}^m -c_i^\top x_i - f_i^*\left(x_i\right)
=
- \min_{x \in \R^{\sum n_i}, A^\top x = 0}
c^\tp x + \sum_{i=1}^m f_i^*\left(x_i\right).
\]
Now introducing for each $i$ a new scalar $x^{obj}_i \in \R$
and defining the convex sets $K_i$ on $(x_i, x^{obj}_i)$ as
\[
K_i
\coloneqq
\left\{\left(x_i, x^{obj}_i\right) \in \R^{n_i} \times \R
: x^{obj}_i \ge f_i^*\left(x_i\right)\right\}
\]
allows us to write the objective in the maximum dot product
subject to containment in convex set form shown in the abstract:
\[
\min_{x \in \R^{\sum n_i}, A^\top x = 0}
\sum_{i=1}^m c_i^\top x_i + f_i^*\left(x_i\right)
=
\min_{
\substack{
\left[x_1; x^{obj}_1; x_2; x^{obj}_2;\ldots;x_m; x^{obj}_m\right]
\in K_1 \times \dots \times K_m\\
A^\top x = 0}}
\begin{bmatrix}c\\ \one\end{bmatrix}^{\top}
\begin{bmatrix}x \\ x^{obj} \end{bmatrix}
\]
which is the form in the abstract
\begin{equation}
\min_{\substack{x \in K_1 \times \dots \times K_m\\ A^\top x = b}}
    c^\top x
\label{eq:ermmain}
\end{equation}
with the same $d$, $m$, and $K_i$s,
$b$ set to $0$,
and $n_i$, $A$, $x$ and $c$ adjusted for the increase in row counts caused by the extra variables $x^{obj}$.

\subsection{ERM and Linear Programming}

ERM is a direct generalization of linear programming:
when $K_i = \{x : x \ge 0\}$,
\eqref{eq:ermmain} exactly reduces to the standard primal form of linear programming.
The more general form of $\ell_p$ regression, i.e., $\min_x \|Ax-b\|_p$,
is also captured by \eqref{eq:erm1} when all $n_i = 1$ and $f_i(x) = |x|^p$.

There has been a significant body of work on designing faster linear programming/GLM algorithms, largely based on interior point methods (IPMs) \cite{Vaidya89} and other second order methods. Classical IPMs for linear programming use about $\sqrt{n}$ iterations \cite{Ren88}, each of which requires solving a linear system of the form $A^\top DA$ for nonnegative diagonal matrix $D$. The recent runtime improvements largely focus on using dynamic data structures to efficient implement each iteration, sometimes in sublinear time. In the case where $n \approx d$, the state of the art runtimes for linear programming (which use strengthenings of the $\sqrt{n}$ iteration IPM) are $n^{\max\{\omega, 2+1/18\}}$ \cite{JSWZ21} where $\omega$ is the matrix multiplication exponent. There is also a corresponding result for ERM, solving \eqref{eq:ermmain} in time $n^{\max\{\omega, 2+1/6\}}$ \cite{LSZ19}, building off \cite{CLS19} who achieved the same runtime for linear programming, and recent work on $\ell_p$ regression in this regime \cite{AKPS19,AJK25}. See also \cite{Brand20} for deterministic versions of these algorithms, and \cite{Brand21} for a simplified presentation.

A somewhat separate line of work on linear programming focuses on the case where $n$ is much larger than $d$, which we refer to as the setting where the input matrix is \emph{tall}. Previous works in this setting have used an IPM of Lee-Sidford \cite{LS14} which only uses $\sqrt{d}$ iterations as opposed to $\sqrt{n}$. This opened the door to further speedups \cite{LS15,BLSS20}, and the current best runtimes are $\O(nd + d^{2.5})$ \cite{BLLSSSW21}. In other words, if the constraint matrix $A$ is sufficiently \emph{tall} ($n \ge d^{1.5}$) and is \emph{dense}, i.e., the number of nonzero entries in $A \in \mathbb{R}^{n \times d}$ is $\widetilde{\Omega}(nd)$, then the algorithm runs in nearly-linear time in the input size. However, these results have not been extended to the ERM setting, largely because it is not known whether the Lee-Sidford IPM can be extended beyond the setting of linear programming.

We also mention that several of the ideas in these works have been combined with graph theoretic primitives to design fast algorithms for maximum flow and minimum-cost flow \cite{LS14,BLNPSSSW20,BLLSSSW21,GLP21:journal,BGJLLPS22,BZ23} -- we refer the reader to \cite{CKLPPS25} for a more complete history.

\subsection{Our Results}

Our main result is an algorithm for solving the ERM problem in \eqref{eq:ermmain} in nearly-linear time for tall-dense inputs, when the dimensions $n_i$ of the underlying convex sets $K_i$ are constant (this is the same assumption as was made in \cite{LSZ19}). Formally, we assume that the sets $K_i$ are given by self-concordant barriers on them (\cref{def:sc}), and the algorithm is given access to values, gradients, and Hessians of the barrier functions at any point in constant time (we elaborate in \cref{subsec:ipmoverview}).

\begin{theorem}
\label{thm:main}
There is an algorithm that takes an ERM instance as in \eqref{eq:ermmain} such that:
\begin{enumerate}
\item each $K_i$ is given by self-concordant barriers,
bounded by $\kappa$ in magnitude $K_i \subseteq [-\kappa, \kappa]^{n_i}$,
and $n_i \leq C_K$ for some absolute constant $C_K$,
\item $A, b, c$ have entries at most $\kappa$,
and $A$ has minimum singular value at least $1/\kappa$
($A^\tp A \succeq \kappa^{-1} I$),
\end{enumerate}
outputs $x$ such that $x \in K_1 \times \dots \times K_m$, $A^\top x = b$, and
\[ c^\top x \le \eps + \min_{\substack{x \in K_1 \times \dots \times K_m\\ A^\top x = b}}
    c^\top x \]
in total time $\O(nd + d^6\sqrt{n})$,
where $\O$ hides factors of $C_K$, as well as logs of $n$, $d$, $\kappa$, and $1/\eps$.
\end{theorem}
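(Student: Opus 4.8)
The plan is to instantiate a robust $\widetilde{O}(\sqrt{n})$-iteration interior point method for the ERM problem in \eqref{eq:ermmain}, and to implement each iteration in amortized time $\widetilde{O}(d^5)$ plus an overall preprocessing cost of $\widetilde{O}(nd)$, so that the total cost is $\widetilde{O}(nd + d^6\sqrt{n})$. First I would set up the IPM: using the self-concordant barriers on the $K_i$ (combined into a single barrier of complexity $\widetilde{O}(n)$ since each $n_i \le C_K$), follow a central path parametrized by $t \to 0$, and take Newton-type steps on the relaxed optimality conditions $A^\top x = b$, $s/t + \nabla \Phi(x) \in \text{range}(A)$. The standard $\sqrt{n}$-iteration bound, together with the assumptions that $A$, $b$, $c$ are polynomially bounded and $A^\top A \succeq \kappa^{-1} I$, guarantees that $\widetilde{O}(\sqrt{n})$ iterations bring us to an $\eps$-approximate solution, and a rounding/projection step at the end produces a genuinely feasible $x$. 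The ``robust'' version of the IPM (in the style of the ERM/LP literature cited, e.g.\ \cite{LSZ19,BLSS20}) is what I would use, because it only requires that the iterates be maintained \emph{approximately} — within a constant multiplicative slack in the local norm — which is exactly what a spectral sparsifier of the relevant matrix provides.

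Next, the core of each IPM iteration is solving (or applying) a linear system in $A^\top H A$ where $H$ is a block-diagonal PSD matrix (the blocks coming from the barrier Hessians, reweighted by the step size). The idea is to replace $A^\top H A$ by $A^\top \widetilde H A$ where $\widetilde H$ keeps only a spectral sparsifier's worth of blocks — $\widetilde{O}(d)$ of them — sampled with probability proportional to leverage scores of the (reweighted) rows of $A$. Between consecutive iterations, the robust IPM only changes $O(\cdot)$ coordinates' worth of the Hessian in an amortized sense, i.e.\ the effective matrix undergoes \emph{row updates}, and over the whole run the total number of row-update entries is $\widetilde{O}(n)$ (this is the standard amortization: large blocks of coordinates change rarely). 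This is precisely the regime handled by the new data structure highlighted in the abstract: maintaining leverage-score overestimates summing to $\widetilde{O}(d)$ under $T$ batches of row updates of total size $n$, in time $\widetilde{O}(nd + Td^6)$. I would invoke that data structure with $T = \widetilde{O}(\sqrt{n})$, giving $\widetilde{O}(nd + d^6\sqrt{n})$ for maintaining the sampling distribution, and then re-sample a fresh $\widetilde{O}(d)$-row sparsifier each iteration, solving the resulting $d \times d$ system exactly in $\widetilde{O}(d^\omega) \le \widetilde{O}(d^6)$ time — within budget since $T = \widetilde{O}(\sqrt{n})$.

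The remaining pieces are: (i) translating the ERM step into the form where only row updates to a tall matrix occur, which uses the block structure and the bound $n_i \le C_K$ so that each block is $O(1) \times d$ and contributes $O(1)$ to the update size per changed coordinate; (ii) propagating the approximation error from the sparsifier through the robust-IPM potential argument so that $\widetilde{O}(\sqrt n)$ iterations still suffice — here I would cite the robustness guarantees of the IPM framework, checking that a $(1\pm c)$-spectral approximation of $A^\top H A$ at each step keeps the iterate in the region of quadratic convergence; and (iii) maintaining the primal/dual iterates $x, s$ themselves (not just the sampling matrix) approximately under the low-rank-plus-sparse updates, which is a by-now-standard use of a ``vector maintenance'' / heavy-hitter data structure layered on top. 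I expect the main obstacle to be step (ii) together with correctly setting up the amortization in (i): one has to argue that the adaptive re-sampling does not break the martingale/potential arguments (the sampling distribution depends on the current iterate, which was itself produced using earlier random samples — hence ``adaptive'' sparsification, and the need for the leverage-score \emph{overestimate} guarantee rather than exact scores), and that the total row-update volume fed to the data structure is genuinely $\widetilde{O}(n)$ rather than $\widetilde{O}(n\sqrt n)$. Handling the adaptivity — ensuring the overestimates remain valid against an adaptive adversary — is where the new data structure's design does the real work, so the proof of \cref{thm:main} is essentially a reduction to that data structure plus a careful but routine IPM accounting.
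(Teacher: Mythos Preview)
Your proposal is correct and follows essentially the same approach as the paper: a robust $\widetilde{O}(\sqrt n)$-iteration IPM for ERM, with each step using a spectral sparsifier sampled from the leverage-score overestimates maintained by the adaptive data structure of \cref{thm:sparsify} (invoked with $Q=\widetilde{O}(\sqrt n)$ batches and $\widetilde{O}(n)$ total row updates), plus heavy-hitter/vector-maintenance machinery for $\bar x,\bar s$. The paper's proof fills in exactly the pieces you flag as remaining --- in particular it subsamples the primal step via a ``valid'' random diagonal $R$ (\cref{def:valid}, \cref{lem:validsampling}) and uses an auxiliary $\ell_2$-stable sequence $\hat x$ (\cref{lem:xstable}) to certify the $\widetilde{O}(n)$ total-update amortization for $\bar x$ --- but your outline already anticipates these as the expected obstacles.
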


The two assumptions we make in the statement of \cref{thm:main} are standard -- see e.g. \cite[Theorem C.3]{LSZ19}.
We remark that in the case that $n_i$ is not bounded by an absolute constant, our running time depends polynomially on $\max_{i\in[m]} n_i$.

Perhaps surprisingly, our algorithm is \emph{not} based on extending the Lee-Sidford IPM to ERM instances -- this remains an interesting open problem. Instead we argue that a $\sqrt{n}$ iteration IPM for ERM (i.e., combining the IPMs of \cite{LSZ19} and the log-barrier IPM of \cite{BLNPSSSW20}) can be implemented in nearly-linear time for tall dense instances. Our key technical advance is an algorithm that dynamically maintains a spectral sparsifier of a matrix undergoing adaptive row insertions/deletions. More precisely, the algorithm maintains leverage score overestimates of the rows of $A$ which sum to at most $\O(d)$ at all times.

\begin{theorem}
\label{thm:sparsify}
There is an algorithm that given a dynamic matrix $A \in \R^{n \times d}$ undergoing $Q$ batches of adaptive row insertions/deletions with total size at most $O(n)$,
and a parameter $\kappa$ such that at all times the Gram matrix
$A^{\tp} A$ satisfies $\frac{1}{\kappa} I \preceq A^{\tp} A \preceq \kappa I$,
maintains leverage score overestimates $\tautil_i$ for all the rows satisfying:
\begin{itemize}
\item $\tautil_i \ge a_i^\top(A^\top A)^{-1} a_i$, and
\item $\sum_{i=1}^n \widetilde{\tau}_i \le \O(d)$.
\end{itemize}
The total runtime is at most $\O(nd + Qd^{6})$.
Here (and throughout this paper)
$\O(\cdot)$ hides polylog factors in $n$, $d$, and $\kappa$.
\end{theorem}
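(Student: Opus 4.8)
The plan is to reduce the problem to dynamically maintaining a spectral sparsifier of $A$ of size $\O(d)$. Concretely, I would keep a reweighted submultiset $\Atil$ of the rows of the current $A$, with at most $\O(d)$ rows, such that $M := \Atil^\tp\Atil$ is a constant‑factor spectral approximation, say $\tfrac12 A^\tp A \preceq M \preceq 2 A^\tp A$, and report $\tautil_i := 4\,a_i^\tp M^{-1} a_i$ for any requested row $i$ (through a sketch, described next). From $M \preceq 2A^\tp A$ we get $a_i^\tp M^{-1}a_i \ge \tfrac12\, a_i^\tp(A^\tp A)^{-1}a_i$, so $\tautil_i$ dominates the true leverage score; from $M \succeq \tfrac12 A^\tp A$ we get $\sum_i a_i^\tp M^{-1} a_i = \mathrm{tr}(M^{-1}A^\tp A) \le 2d$, so the overestimates sum to $\O(d)$. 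Thus both bullets follow once $M$ is maintained, and, importantly, we need not materialize $\tautil_i$ for all $n$ rows at every moment: it suffices to be able to report $\tautil_i$ for any row in $\O(d)$ time, so that the $\O(nd)$ budget comfortably absorbs even $\Omega(n)$ reporting queries.

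For $\O(d)$‑time queries I would maintain, besides a Cholesky factorization of $M$ (hence of $M^{-1}$), a Johnson--Lindenstrauss sketch: a random $\O(1)\times\O(d)$ matrix $\Pi$ together with the $\O(1)\times d$ matrix $\Pi\Atil M^{-1}$, so that $a_i^\tp M^{-1}a_i = \|\Atil M^{-1}a_i\|_2^2 \approx \|\Pi\Atil M^{-1}a_i\|_2^2$ up to a $1\pm o(1)$ factor (absorbed into the constant in $\tautil_i$). The factorization and the sketch recompute in $\poly(d)$ time whenever $\Atil$ changes by $\poly(d)$ rows. The initial $\Atil$ is built once in $\O(nd + d^\omega)$ time by the standard iterative (repeated‑halving) leverage‑score sampling scheme.

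The technical heart is maintaining $M$ as a constant‑factor approximation of the evolving $A^\tp A$ under a batch of $b_j$ row updates, at amortized cost $\O(b_j d + \poly(d))$ per batch, against an \emph{adaptive} adversary. Insertions alone are comparatively easy: compute, against the current $M$, the overestimates of the newly inserted rows, sample each with probability proportional to $\O(1)$ times its overestimate, add the samples with inverse‑probability weights to $\Atil$, and periodically resparsify $\Atil$ back to $\O(d)$ rows, charging the resparsification to the insertions that caused the growth. Deletions are the obstacle: one cannot simply ``subtract'' a sparsifier of the deleted rows from $M$, since the deleted mass need not be small relative to $A^\tp A$, and a single deletion can inflate the true leverage score of a retained row far beyond what we would report. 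I would handle deletions by (i) removing from $\Atil$ every deleted row it stored; (ii) when $b_j$ is below a fixed $\poly(d)$ threshold, treating the effect on $M$ as a $\poly(d)$‑rank update, running a $\poly(d)$ spectral test to re‑certify $\tfrac12 A^\tp A \preceq M \preceq 2A^\tp A$, and locally resampling the $\O(d)+b_j$ involved rows; (iii) when $b_j$ exceeds the threshold, rebuilding $\Atil$ from scratch — the $b_j$‑values of all such large batches sum to $O(n)$, so these rebuilds cost $\O(nd)$ in total; and (iv) triggering a from‑scratch rebuild whenever the accumulated update mass or the spectral test signals that $M$ has drifted, again charged against that accumulated mass.

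The hypothesis $\kappa^{-1}I \preceq A^\tp A \preceq \kappa I$ is load‑bearing here in two ways. First, it keeps $A^\tp A$ uniformly well‑conditioned, so $\log\det(A^\tp A)$ stays in an interval of width $O(d\log\kappa)$; a $\log\det$‑potential argument then bounds the total number of row resamplings and $\Atil$‑membership changes over the entire run (each heavy‑row insertion multiplies $\det M$ by $1+\Omega(1)$), so the per‑batch $\poly(d)$ overhead telescopes to $\O(Q\,\poly(d)) \le \O(Qd^6)$, the exponent $6$ being a generous bound on the $\poly(d)$ operations (inversions, self‑leverage‑score computations, resparsifications, spectral tests) on $\O(d)$‑sized matrices. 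Second, it prevents any deletion from collapsing a direction of $A^\tp A$ to near‑zero, which is what bounds how far a retained row's leverage can jump and lets the local resampling in step (ii) catch it. For adaptivity, every resampling draws fresh random bits independent of all past outputs; conditioned on the adversarially chosen current matrix, matrix Chernoff makes $M$ a valid $2$‑approximation with probability $1-1/\poly(n)$, and since the adversary commits to batch $j$ before seeing the corresponding fresh bits, a union bound over the $\le n$ resamplings gives correctness throughout with high probability. I expect the main difficulty to be making the amortization in steps (ii)--(iv) rigorous — proving that no retained row's reported overestimate is ever stale and that the true amortized cost per batch is $\poly(d)$ with no hidden factor of $n$ — which is exactly where the conditioning hypothesis and the potential argument must be pushed through carefully.
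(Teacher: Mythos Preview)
Your high-level framing --- maintain a constant-factor spectral approximation $M\approx A^\top A$ via an $\O(d)$-row $\Atil$ and report $\tautil_i=\Theta(a_i^\top M^{-1}a_i)$ --- is sound, and is essentially what the paper's \emph{checker} does at each phase. But two load-bearing steps in your deletion handling do not go through, and the missing ingredient is precisely what the paper supplies.

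First, your ``$\poly(d)$ spectral test to re-certify $\tfrac12 A^\top A\preceq M\preceq 2A^\top A$'' is not implementable in the stated budget: any such test requires access to $A^\top A$, and maintaining $A^\top A$ explicitly costs $\Theta(d^2)$ per row update, hence $\Theta(nd^2)$ in total, which already exceeds $\O(nd+Qd^6)$ in the regime $n\gg Qd^4$ where the theorem is interesting. Without the test you have no trigger for step (iv), and $M$ can silently drift: if a batch deletes rows of $A$ none of which happen to lie in $\Atil$, then $A^\top A$ shrinks while $M$ is unchanged, so $M\preceq 2A^\top A$ can fail even after a single batch. Your ``local resampling of the $\O(d)+b_j$ involved rows'' does not help here, since the failure is in a direction carried by rows you never touched.

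Second, the rebuild accounting in step (iii) is wrong. A from-scratch rebuild of $\Atil$ via iterative halving costs $\O(n_{\mathrm{current}}\,d)$, not $\O(b_j\,d)$; the fact that the $b_j$'s sum to $O(n)$ bounds the \emph{number} of large batches by $O(n/\poly(d))$, giving total rebuild cost $\Omega(n^2 d/\poly(d))$. Likewise, even if step (iv) could be triggered correctly, the $\log\det$ potential bounds the number of drift events by $\O(d)$, so those rebuilds alone cost $\O(nd^2)$.

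The paper sidesteps both issues by never rebuilding from scratch. It (a) replaces deletions by row \emph{halvings} so successive Gram matrices stay within a constant factor; (b) buffers halvings until their removed leverage mass reaches $\Theta(1)$, so the $\log\det$ potential yields only $\O(d)$ phases; and, crucially, (c) stores the overestimates $\tautil_i$ explicitly and updates only those that have grown, locating them via an $\ell_2$ heavy-hitter on the rows of $A$ queried against a dyadic telescoping of inverse-Gram differences $(1+\epscheck)(\Gtil^{(q)})^{-1}-(1-\epscheck)(\Gtil^{(\hat q)})^{-1}$. This detects, in $\O(d)$ per candidate and $\O(n)$ candidates total, exactly which of the $n$ rows had their leverage score increase by $\gtrsim d/n$, without ever scanning all of $A$. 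The per-phase sparsifier $\Gtil^{(q)}$ is then sampled from the \emph{stored} overestimates in $\poly(d)$ time rather than recomputed from $A$. This heavy-hitter locator, together with a freshly-randomized checker per phase to handle adaptivity, is the idea your proposal is missing.
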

Here, a batch means that several row insertions/deletions are all given to the algorithm at once, and all must be processed before the next batch is given.
We remark that obtaining a runtime like $\O(nd + \poly(d))$ is not possible for \cref{thm:sparsify}, which partially justifies the necessity of the additive $Qd^6$ term. Indeed, consider the case where $Q = n/d$ and each batch simply removes and adds $2d$ fresh rows, for which we have to provide leverage score overestimates. Since each instance is completely unrelated, and the best known runtime for each $2d \times d$ matrix is $\O(d^\omega)$, this input requires time at least $Qd^{\omega} = nd^{\omega-1}$, which is not bounded by $\O(nd + \poly(d))$.

\section{Preliminaries}
\label{sec:prelim}

Here we introduce the formal notations that we use throughout this paper.

\subsection{General Notation}
\label{subsec:notation}

We let $[n] \coloneqq \{1, 2, \dots, n\}$.
We use the subscript $i$ to index into functions.
So $i \in [m]$, where recall $m$ is the number of functions.
Let $S_i \subseteq [n]$ denote the set of coordinates which interact with the convex set $K_i$. Given a vector $x \in \R^n$ we let $x_i \in \R^{S_i}$ to denote the restriction of $x$ to $S_i$.

\subsection{Approximations}
\label{subsec:approx}

We use asymptotic notation and write $a \lesssim b$ as shorthand for $a = O(b)$. We write $a \approx_{\alpha} b$ if $e^{-\alpha}a \le b \le e^{\alpha} a$.

For matrices, we use Loewner ordering $A \preceq B$ to indicate $B - A$ is positive semidefinite.

We also generalize the approximation notation and use
$A \approx_{\alpha} B$ to denote $e^{-\alpha} A \preceq B \preceq e^{\alpha} A$.

\subsection{Random Projections and Heavy Hitters}
\label{subsec:hh}

This paper, like previous works on implementing IPMs with dynamic data structures, makes heavy use of $\ell_2$ sketches and heavy hitters. We start by introducing to classical JL sketch.
\begin{lemma} [Johnson–Lindenstrauss, \cite{JL84}] \label{lem:JL}
    For any $\epsjl \in (0, 1/2)$ and $n$ vectors $v_1, v_2, \dots, v_n$, let $A \sim \mathcal{N}(0,1)^{m \times d}$ where $m = O(\log n / \epsjl^2)$, it holds with probability $1 - n^{-c}$ such that 
    \[
        \left(1-\epsjl\right) \left\| v_i\right\|_2
        \le
        m^{-1/2}\left\| A v_i \right\|_2
        \le
        \left(1+\epsjl\right) \left\| v_i\right\|_2 
    \]
    for all $i \in [n]$.
\end{lemma}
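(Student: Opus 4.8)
The plan is to reduce the statement to a one-dimensional concentration bound and then take a union bound over the $n$ vectors. First I would observe that by homogeneity it suffices to treat a single fixed $v \in \R^d$ with $\|v\|_2 = 1$. Writing $A_1, \dots, A_m \in \R^d$ for the rows of $A$, each coordinate of $Av$ equals $\langle A_j, v\rangle$, a linear combination of independent standard Gaussians with coefficient vector $v$; hence $\langle A_j, v\rangle \sim \mathcal{N}(0, \|v\|_2^2) = \mathcal{N}(0,1)$, and these are independent across $j \in [m]$. Consequently $\|Av\|_2^2 = \sum_{j=1}^m \langle A_j, v\rangle^2$ is a chi-squared random variable with $m$ degrees of freedom and $\E\|Av\|_2^2 = m$.

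The technical heart of the argument is then the two-sided tail bound $\Pr\big[\,\big|\,m^{-1}\|Av\|_2^2 - 1\,\big| > \epsjl\,\big] \le 2\exp(-c_0\epsjl^2 m)$ for an absolute constant $c_0 > 0$ and all $\epsjl \in (0, 1/2)$. I would prove this by the standard moment-generating-function (Chernoff) method: using $\E[e^{t\|Av\|_2^2}] = (1-2t)^{-m/2}$ for $t < 1/2$, Markov's inequality gives $\Pr[\|Av\|_2^2 \ge (1+\epsjl)m] \le \exp\!\big(-\tfrac{m}{2}(\epsjl - \ln(1+\epsjl))\big)$ after the optimal choice $t = \epsjl/(2(1+\epsjl))$, and the elementary inequality $\epsjl - \ln(1+\epsjl) \ge \tfrac12(\epsjl^2 - \epsjl^3) \ge \tfrac14\epsjl^2$ finishes the upper tail; the lower tail $\Pr[\|Av\|_2^2 \le (1-\epsjl)m]$ is handled symmetrically with $t < 0$, using $-\epsjl - \ln(1-\epsjl) \ge \tfrac14\epsjl^2$. (Alternatively one could invoke a black-box sub-exponential / Laurent–Massart bound and skip this computation.)

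Finally I would pick the hidden constant in $m = O(\log n / \epsjl^2)$ large enough that $2\exp(-c_0\epsjl^2 m) \le n^{-(c+1)}$, apply the single-vector bound to each of $v_1, \dots, v_n$, and union bound: with probability at least $1 - n\cdot n^{-(c+1)} = 1 - n^{-c}$ we have $(1-\epsjl)\|v_i\|_2^2 \le m^{-1}\|Av_i\|_2^2 \le (1+\epsjl)\|v_i\|_2^2$ for every $i \in [n]$ simultaneously. Taking square roots and using $1-\epsjl \le \sqrt{1-\epsjl}$ and $\sqrt{1+\epsjl} \le 1+\epsjl$ (both valid for $\epsjl \in [0,1]$) gives the claimed inequalities. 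The only real obstacle is the chi-squared concentration step in the previous paragraph; everything else is bookkeeping.
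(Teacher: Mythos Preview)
Your argument is correct and is the standard proof of the Johnson--Lindenstrauss lemma. Note, however, that the paper does not actually prove this statement: it is simply stated with a citation to \cite{JL84} and used as a black box, so there is no ``paper's own proof'' to compare against. Your write-up would serve perfectly well as a self-contained proof if one were desired.
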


We require the following standard $\ell_2$ heavy hitter data structure from \cite{KNPW11}. Please see the statement of \cite[Lemma 5.1]{GLP21:journal} for the precise statement of the Lemma below.

\begin{theorem}
\label{thm:hh}
There is an algorithm $\textsc{Build}$ that for any
error parameter $0 < \epshh < 1 / \log{n}$ and integer $n$,
$\textsc{Build}(\epshh, n)$ returns in time $\O(n)$ a random matrix $Q \in \{-1, 0, 1\}^{N \times n}$
with $N = O(\epshh^{-2}\log^3 n)$ such that every column of
$Q$ has $O(\log^3 n)$ nonzero entries.

Additionally, there is an algorithm $\textsc{Recover}$ such that for any vector
$x \in \R^n$ with $\|x\|_2 \le 1$ and access to $y = Qx \in \R^N$,
$\textsc{Recover}(y)$ returns in time $O(\epshh^{-2} \log^3 n)$
a set $S \subseteq [n]$ with size at most $O(\epshh^{-2})$
that with high probability contains all indices $i$ with $|x_i| \geq \epshh$.
\end{theorem}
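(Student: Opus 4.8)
The plan is to instantiate the classical hierarchical Count-Sketch construction underlying $\ell_2$ heavy hitters (\cite{KNPW11}). Assume $n$ is a power of two (pad otherwise) and impose a complete binary tree $\mathcal{T}$ on the coordinate set $[n]$ with $L = \log_2 n + 1$ levels; for a coordinate $i$ and level $\ell$ write $\mathrm{anc}_\ell(i)$ for its level-$\ell$ ancestor, and for a tree node $v$ write $x|_v \in \R^n$ for the restriction of $x$ to the leaves below $v$, so $\|x|_v\|_2^2 = \sum_{i \text{ below } v} x_i^2$ and $\sum_{v \text{ at level } \ell}\|x|_v\|_2^2 = \|x\|_2^2 \le 1$. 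For each level $\ell$ and each of $R = \Theta(\log n)$ independent repetitions $r$, sample a pairwise-independent hash $g_{\ell,r}$ sending the $2^\ell$ level-$\ell$ nodes into $B = \Theta(\epshh^{-2})$ buckets; to each bucket $b$ attach a block of $\Theta(\log n)$ rows, each of the form $\sum_{i:\, g_{\ell,r}(\mathrm{anc}_\ell(i)) = b} \rho_i x_i$ for a fresh $4$-wise-independent sign vector $\rho \in \{-1,+1\}^n$, so that this block is a median-of-means AMS estimator for the $\ell_2^2$-mass of the coordinates hashed to $b$. Stacking all such rows over all levels, repetitions, and buckets yields $Q \in \{-1,0,1\}^{N \times n}$ with $N = L \cdot R \cdot B \cdot \Theta(\log n) = O(\epshh^{-2}\log^3 n)$; a coordinate $i$ has a nonzero only in the $\Theta(\log n)$ rows of its bucket in each of the $R$ repetitions at each of the $L$ levels, hence $O(\log^3 n)$ nonzeros per column; and since the $\O(1)$-word random seeds specifying $Q$ are drawn in $\O(n)$ time (any entry is then computable in $O(1)$), $\textsc{Build}$ runs in $\O(n)$ time. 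This gives the first half of the statement.

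Given read access to $y = Qx$ — equivalently, for each level $\ell$, repetition $r$, bucket $b$, the $\Theta(\log n)$ numbers $\sum_{i:\, g_{\ell,r}(\mathrm{anc}_\ell(i))=b}\rho_i x_i$ — $\textsc{Recover}$ performs a top-down branch-and-bound on $\mathcal{T}$. For a node $v$ at level $\ell$ and repetition $r$ the AMS block of bucket $g_{\ell,r}(v)$ estimates $\|w\|_2^2 = \|x|_v\|_2^2 + \eta_{v,r}$, where $w$ is $x$ restricted to all leaves hashed into that bucket and $\eta_{v,r} := \sum_{u \neq v:\, g_{\ell,r}(u) = g_{\ell,r}(v)}\|x|_u\|_2^2$ is collision noise with $\E_{g_{\ell,r}}[\eta_{v,r}] \le \|x\|_2^2/B \le \epshh^2/C$. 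Set $M_v := \min_{r \in [R]}(\text{the estimate from repetition } r)$ and run the walk: start with $C_0 = \{\text{root}\}$; to pass from level $\ell$ to level $\ell+1$, enumerate all children of nodes in $C_\ell$, keep a child $v$ iff $M_v \ge \epshh^2/2$ (optionally capping $C_{\ell+1}$ to its $O(\epshh^{-2})$ members of largest $M_v$), and finally output $S := C_{L-1}$, the surviving leaves. Each $M_v$ costs $O(R \log n) = O(\log^2 n)$, so once $|C_\ell| = O(\epshh^{-2})$ is established the recovery time is $O(\epshh^{-2}\log^3 n)$ and $|S| = O(\epshh^{-2})$, as claimed.

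Correctness and the bound $|C_\ell| = O(\epshh^{-2})$ reduce to one concentration statement about the $M_v$'s. A median-of-means AMS estimator over $\Theta(\log n)$ rows with $4$-wise-independent signs is within a constant factor of $\|w\|_2^2$ with probability $1 - n^{-c-2}$, and the minimum over $R = \Theta(\log n)$ repetitions additionally forces $\min_r \eta_{v,r} \le \epshh^2/8$ with probability $1 - 2^{-R} \ge 1 - n^{-c-2}$ (Markov on each $\eta_{v,r}$ with $C$ a large enough constant). Union-bounding over all $O(n)$ nodes of $\mathcal{T}$, with probability $1 - n^{-c}$ we have simultaneously $\tfrac12\|x|_v\|_2^2 \le M_v \le 2\|x|_v\|_2^2 + \tfrac14\epshh^2$ for every node $v$. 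On this event: (i) if $|x_i| \ge \epshh$ then each ancestor $v = \mathrm{anc}_\ell(i)$ has $\|x|_v\|_2^2 \ge x_i^2 \ge \epshh^2$, so $M_v \ge \tfrac12\epshh^2$ and $v$ is never pruned; hence the walk reaches leaf $i$ and $i \in S$, which is exactly the required guarantee; (ii) any kept node $v$ has $\tfrac12\epshh^2 \le M_v \le 2\|x|_v\|_2^2 + \tfrac14\epshh^2$, i.e. $\|x|_v\|_2^2 \ge \tfrac18\epshh^2$, and since the level-$\ell$ subtree masses sum to at most $1$ there are at most $8\epshh^{-2}$ such nodes, so $|C_\ell| \le 8\epshh^{-2} + 1 = O(\epshh^{-2})$ irrespective of $|C_{\ell-1}|$.

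The place that requires genuine care — and the $\ell_2$-specific heart of the construction — is the per-node test: one must estimate the \emph{squared mass} $\|x|_v\|_2^2$ of a subtree rather than, say, a signed sum $\sum_{i \text{ below } v} x_i$, since the latter can collapse to nearly zero under sign cancellations even when $v$ contains a large coordinate, which would let a heavy hitter be pruned. Making the squared-mass test uniformly reliable over all $\O(\epshh^{-2})$ inspected nodes is exactly what forces the two layers of amplification above — a within-bucket median-of-means to turn the AMS estimator's constant-probability relative guarantee into a high-probability one, and a cross-repetition minimum to push the hash-collision term down to the additive $\epshh^2$ scale — and is precisely the content of \cite{KNPW11} (stated as \cite[Lemma 5.1]{GLP21:journal}), whose estimator we adopt verbatim for this sub-step. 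With the $M_v$ guarantee in hand, the remaining accounting (size and column-sparsity of $Q$, the $\O(n)$ build time, the $O(\epshh^{-2}\log^3 n)$ recovery time, the $O(\epshh^{-2})$ output size, and the heavy-hitter guarantee) is exactly as laid out above.
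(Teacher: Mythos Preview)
Your proposal is correct and essentially reconstructs the hierarchical Count-Sketch scheme of \cite{KNPW11}; the paper itself does not prove \cref{thm:hh} at all but simply cites it as a known result (pointing to \cite[Lemma~5.1]{GLP21:journal} for a precise statement), so there is nothing to compare against beyond noting that your sketch matches the intended reference.
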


The heavy hitter in \cref{thm:hh} can be used to build a data structure that 
supports updating rows of matrix $A$ and querying which rows have large norms
with respect to a given quadratic form.
We encapsulate this in the lemma below, which we prove in \cref{sec:deferred}.
\begin{lemma}\label{lem:heavy-hitter}
There exists a randomized data structure \textsc{HeavyHitter} that maintains a set of vectors $a_1 \ldots a_n \in \R^{d}$
under the following operations against a non-adaptive adversary:
\begin{itemize}
\item Initialize in time $\Otil(n d)$.
\item $\textsc{Modify}(i, v)$: 
Set $a_i \leftarrow v$
in time $\Otil(d)$, where $v = 0$ is equivalent to deleting it.
\item $\textsc{Query}(M, \delta)$
Given a polynomially-conditioned symmetric PSD matrix $M \in \R^{d \times d}$,
and a threshold $\delta > 0$,
return a set of $O(\delta^{-1} \sum_{1 \leq i \leq n} \| a_i\|_{M}^{2})$ indices that include all $i$ such that
\[
\norm{ a_i }_{M}^2 \geq \delta
\]
in time $\Otil(d^{\omega}
+ \delta^{-1} d \sum_{1 \leq i \leq n} \| a_i\|_{M}^{2})$.
That is, $\O(d)$ times the maximum number of rows which may exceed the threshold, plus matrix multiplication time.
\end{itemize}
\end{lemma}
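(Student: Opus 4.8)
The plan is to build this data structure directly on top of the $\ell_2$ heavy hitter of \cref{thm:hh}, using the standard reduction that turns "rows with large quadratic form $\|a_i\|_M^2$" into "coordinates of a vector that are large in $\ell_2$". Concretely, write $M = B^\top B$ for some $B \in \R^{d\times d}$ obtained from a (symmetric) square root or Cholesky-type factorization of $M$; this costs $\O(d^\omega)$ and is where the $d^\omega$ term in the \textsc{Query} bound comes from. Then $\|a_i\|_M^2 = \|Ba_i\|_2^2$, so the question "is $\|a_i\|_M^2 \ge \delta$?" becomes "is $\|Ba_i\|_2 \ge \sqrt{\delta}$?". To bring this into the form required by \textsc{Recover} (a single vector with $\ell_2$ norm at most $1$), I will use the JL sketch of \cref{lem:JL}: maintain $Ga_i \in \R^{k}$ for a fixed Gaussian matrix $G$ with $k = \O(1)$ rows, so that $\|Ga_i\|_2 \approx \|a_i\|_2$ up to constant factors with high probability, but more importantly so that no single coordinate of $Ga_i$ dominates — this is not quite what I need, so instead the cleaner route is: for each of the $k = \O(\log n)$ rows $g_\ell$ of the JL matrix applied after $B$, form the scalar $z^{(\ell)}_i \coloneqq \langle g_\ell, B a_i\rangle$, stack these into vectors $z^{(\ell)} \in \R^n$, and feed each $z^{(\ell)}$ (suitably rescaled so its norm is $\le 1$) into an independent copy of the heavy-hitter structure from \cref{thm:hh}. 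A row $i$ with $\|a_i\|_M^2 \ge \delta$ will, with constant probability over $g_\ell$, have $|z^{(\ell)}_i| \gtrsim \sqrt\delta \cdot \|z^{(\ell)}\|_2^{-1}\cdot(\text{rescaling})$ large enough to be caught; taking $\O(\log n)$ independent repetitions and unioning the recovered sets boosts this to high probability.

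For the three operations: \textsc{Initialize} reads all $nd$ entries, computes $Ga_i$ (or just stores $a_i$; the factorization of $M$ only happens at query time since $M$ is a query argument), and builds the $\O(\log n)$ heavy-hitter sketches, each in time $\O(n)$ by \cref{thm:hh}, for total $\O(nd)$. \textsc{Modify}$(i,v)$ updates the stored $a_i$ and patches the $\O(\log n)$ sketched coordinates $Q z^{(\ell)}$: since each column of $Q$ has $\O(1)$ nonzeros and we have $\O(\log n)$ sketches each of dimension $d$, the cost is $\O(d)$. \textsc{Query}$(M,\delta)$: first factor $M = B^\top B$ in $\O(d^\omega)$; then for each sketch compute the updated sketched vector — this requires re-sketching because $B$ changed, which is the subtle part: we cannot afford to recompute $B a_i$ for all $i$ naively if that were $\Theta(nd^2)$, so we must push $B$ through the already-maintained representation. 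The fix is to maintain the raw rows $a_i$ plus their heavy-hitter sketch $Q a_i$ applied \emph{coordinatewise in $\R^d$} is not linear in the right way; instead we maintain, for each of the $\O(\log n)$ Gaussian directions $h_\ell \in \R^{?}$... — more simply, note $\langle g_\ell, Ba_i\rangle = \langle B^\top g_\ell, a_i\rangle$, and $B^\top g_\ell \in \R^d$ can be computed once per query in $\O(d^2)$ total over the $\O(\log n)$ directions; then the vector $z^{(\ell)} = A (B^\top g_\ell) \in \R^n$ (where $A$ has rows $a_i^\top$) and its sketch $Q z^{(\ell)}$ must be formed at query time. Forming $z^{(\ell)}$ is $\O(nd)$ which is too slow per query. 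The resolution — and the real content of the data structure — is to maintain the sketches in the other order: apply the fixed heavy-hitter matrix $Q \in \R^{N\times n}$ to the matrix $A \in \R^{n\times d}$ to get and maintain $QA \in \R^{N\times d}$ under row updates (each \textsc{Modify} changes one row of $A$, hence updates $QA$ in $\O(Nd/n \cdot n)$... no: a rank-one update to row $i$ of $A$ changes $QA$ by $(Q e_i)(v - a_i)^\top$, and $Qe_i$ has $\O(\log^3 n)$ nonzeros, so this is $\O(d\log^3 n) = \O(d)$ time). At query time, $Q z^{(\ell)} = Q A (B^\top g_\ell) = (QA)(B^\top g_\ell)$, a single $N\times d$ times $d$-vector multiply costing $\O(d)$ per direction, $\O(d)$ total; feeding this into \textsc{Recover} costs $\O(1)$ per sketch and returns the candidate set.

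The correctness of the size and inclusion guarantees then follows: \textsc{Recover} on $Qz^{(\ell)}$ (rescaled by $\|z^{(\ell)}\|_2$, which we estimate from the sketch itself or track) returns $\O(\epshh^{-2})$ candidates including every coordinate $i$ with $|z^{(\ell)}_i| \ge \epshh \|z^{(\ell)}\|_2$; choosing $\epshh$ a constant and recalling $\E_{g_\ell}\big[ (z^{(\ell)}_i)^2 \big] = \|Ba_i\|_2^2 = \|a_i\|_M^2$ while $\E_{g_\ell} \|z^{(\ell)}\|_2^2 = \sum_j \|a_j\|_M^2$, a row with $\|a_i\|_M^2 \ge \delta$ is a constant-fraction heavy hitter of $z^{(\ell)}$ with constant probability, so $\O(\log n)$ independent repetitions catch it w.h.p.; the union of candidate sets has size $\O(\log n \cdot \epshh^{-2})$ — this is $\O(1)$, not the claimed $\O(\delta^{-1}\sum_i\|a_i\|_M^2)$, so to get exactly the stated bound one uses a finer analysis or a filtering pass that removes candidates that are not actually heavy (recompute $\|a_i\|_M^2 = \|(B a_i)\|_2^2$ exactly for each of the $\O(1)$ candidates in $\O(d^2)$ each, or $\|a_i\|_M^2$ directly in $\O(d^2)$), keeping only the genuine ones; the number of genuine heavy rows is at most $\delta^{-1}\sum_i \|a_i\|_M^2$ by Markov, giving both the size bound and justifying the $\delta^{-1} d \sum_i \|a_i\|_M^2$ term in the runtime (it is $d$ per surviving candidate for the final verification, or it is absorbed into reading off those rows). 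The main obstacle I expect is precisely this bookkeeping: arranging the sketches so that a \emph{query-time} change of quadratic form $M$ does not force an $\Omega(nd)$ recomputation — handled above by maintaining $QA$ rather than $Q$ applied to projected vectors — together with verifying the non-adaptive-adversary condition is genuinely used (the Gaussian directions and $Q$ are fixed at initialization, independent of the query sequence), which is what lets us apply \cref{thm:hh} and \cref{lem:JL} as stated.
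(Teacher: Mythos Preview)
Your core architectural insight --- maintain the sketch $QA \in \R^{N \times d}$ under row updates, and at query time factor $M$ and push the $M$-dependent direction through via $Q z^{(\ell)} = (QA)(B^\top g_\ell)$ --- is exactly what the paper does, and your reasoning for why \textsc{Modify} costs $\O(d)$ (column-sparsity of $Q$) is correct.

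The genuine gap is your choice of a \emph{constant} $\epshh$. Your claim that ``a row with $\|a_i\|_M^2 \ge \delta$ is a constant-fraction heavy hitter of $z^{(\ell)}$ with constant probability'' is false: $z^{(\ell)}_i = \langle g_\ell, B a_i\rangle$ has standard deviation $\sqrt{\|a_i\|_M^2}$, while $\|z^{(\ell)}\|_2$ concentrates around $\sqrt{\sum_j \|a_j\|_M^2}$, so the ratio $|z^{(\ell)}_i|/\|z^{(\ell)}\|_2$ concentrates around $\sqrt{\|a_i\|_M^2 / \sum_j \|a_j\|_M^2}$, which can be arbitrarily small even when $\|a_i\|_M^2 \ge \delta$. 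With constant $\epshh$, \textsc{Recover} will simply miss such rows --- the problem is not (as you wrote) that your candidate set is too small, but that it fails the inclusion guarantee. The fix, which the paper carries out, is to set $\epshh \approx \sqrt{\delta / \sum_j \|a_j\|_M^2}$; this is how the $\O(\epshh^{-2}) = \O(\delta^{-1}\sum_j \|a_j\|_M^2)$ output-size bound actually arises.

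This creates a second issue you did not address: $\epshh$ now depends on the query parameters $(M,\delta)$, but the sketch matrix $Q$ from \cref{thm:hh} is built for a fixed $\epshh$ (its row count is $O(\epshh^{-2}\log^3 n)$). The paper resolves this by maintaining $QA$ for $O(\log n)$ geometrically-spaced values $\epshh = 0.9^i$ and, at query time, estimating $\sum_j \|a_j\|_M^2$ (via an auxiliary projection $R$ applied to $AN$) to select the right copy. This is an $O(\log n)$ blowup absorbed into $\O(\cdot)$.

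A minor point: your per-candidate verification in $\O(d^2)$ is one factor of $d$ too slow for the stated bound. Since you already have $N = M^{1/2} S^\top \in \R^{d \times O(\log n)}$ from the factorization step, each candidate can be verified via $\|N^\top a_i\|_2^2 \approx \|a_i\|_M^2$ in $\O(d)$ time, which is what gives the $\delta^{-1} d \sum_j \|a_j\|_M^2$ term.
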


\subsection{Leverage Score Sampling}

We require the following standard lemma which says that sampling by leverage score overestimates produces a spectral sparsifier with high probability. We use a slightly adapted version where the leverage scores and sparsifier error are computed with respect to a different matrix $M$.
\begin{lemma}
\label{lem:sample}
If $A = [a_1^{\top}, a_2^{\top}, \ldots, a_n^{\top}]$ is a $n \times d$ matrix,
$M$ is a $d \times d$ symmetric positive definite matrix,
and $w_i$s are values such that
\[
w_i \geq a_i^{\top} M^{-1} a_i.
\]
Let $p_i = \frac{w_i}{\sum_{i=1}^n w_i}$ and for $j = 1, \dots, T \coloneqq 100\eps^{-2}\log n \cdot \sum_{i=1}^n w_i$ let $i_j$ be a random $i \in [n]$ selected with probability $p_i$. Let $\Atil \in \R^{T \times d}$ be a matrix whose $j$-th row is $(p_{i_j}T)^{-1/2} a_{i_j}$.
Then whp:
\[
-\epsilon M \preceq A^\top A - \Atil^{\top} \Atil \preceq \epsilon M.
\]
\end{lemma}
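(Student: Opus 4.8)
The plan is to prove this exactly as the classical ``leverage-score sampling yields a spectral sparsifier'' statement, via matrix concentration, the only wrinkle being that all quantities are measured in the $M$-norm; so the first step is a change of basis by $M^{-1/2}$ that reduces everything to the isotropic case. Concretely, set $y_i \coloneqq M^{-1/2} a_i$, so that $\norm{y_i}_2^2 = a_i^\top M^{-1} a_i \le w_i$, and write $\Sigma$ for $M^{-1/2} A^\top A\, M^{-1/2} = \sum_{i=1}^n y_i y_i^\top$. Since $M^{-1/2}$ maps the $j$-th row $(p_{i_j} T)^{-1/2} a_{i_j}$ of $\Atil$ to $(p_{i_j} T)^{-1/2} y_{i_j}$, we have $M^{-1/2} \Atil^\top \Atil\, M^{-1/2} = \sum_{j=1}^T X_j$ with $X_j \coloneqq \tfrac{1}{p_{i_j} T} y_{i_j} y_{i_j}^\top$. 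Conjugating the Loewner bound by $M^{-1/2}$, the conclusion $-\eps M \preceq A^\top A - \Atil^\top \Atil \preceq \eps M$ is equivalent to $\norm{\sum_{j=1}^T X_j - \Sigma}_2 \le \eps$ (spectral norm), so it suffices to prove this concentration bound with high probability.

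Next I would verify the three inputs to a matrix Chernoff/Bernstein bound for the i.i.d.\ PSD rank-one matrices $X_j$. Writing $W \coloneqq \sum_i w_i$ so that $p_i = w_i / W$: (i) the mean is correct, $\E X_j = \sum_i p_i \cdot \tfrac{1}{p_i T} y_i y_i^\top = \tfrac1T \Sigma$, hence $\E[\sum_j X_j] = \Sigma$; (ii) each term is bounded, $\norm{X_j}_2 = \tfrac{\norm{y_{i_j}}_2^2}{p_{i_j} T} = \tfrac{W \norm{y_{i_j}}_2^2}{w_{i_j} T} \le \tfrac{W}{T} = \tfrac{\eps^2}{100 \log n}$, using $\norm{y_i}_2^2 \le w_i$ and $T = 100 \eps^{-2}(\log n) W$; (iii) the second moment is controlled, $\E X_j^2 = \tfrac{1}{T^2} \sum_i \tfrac{W}{w_i} \norm{y_i}_2^2\, y_i y_i^\top \preceq \tfrac{W}{T^2} \Sigma$, so $\sum_j \E X_j^2 \preceq \tfrac{W}{T} \Sigma$ and the variance proxy is $\nu \coloneqq \norm{\sum_j \E X_j^2}_2 \le \tfrac{W}{T} \lambda_{\max}(\Sigma)$.

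With these in hand, matrix Bernstein (e.g.\ Tropp) gives $\Pr[\norm{\sum_j X_j - \Sigma}_2 \ge \eps] \le 2d \exp(-\Omega(\eps^2 / (\nu + R\eps)))$ with $R = \eps^2/(100\log n)$, and this is $n^{-\Omega(1)}$ as soon as $\nu = O(\eps^2/\log n)$, i.e.\ as soon as $\lambda_{\max}(\Sigma) = \lambda_{\max}(M^{-1/2} A^\top A\, M^{-1/2}) = O(1)$; this holds in every setting where this lemma is invoked, namely when $M$ is a constant-factor spectral approximation of $A^\top A$ (in particular $M = A^\top A$ gives $\Sigma = I$). Undoing the change of basis on the high-probability event yields $-\eps M \preceq A^\top A - \Atil^\top \Atil \preceq \eps M$. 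An equivalent route is to whiten by $\Sigma$ rather than $M$ and apply the multiplicative matrix Chernoff bound to the rank-one terms $\Sigma^{-1/2} X_j \Sigma^{-1/2}$, whose operator norms equal the true leverage scores $a_i^\top (A^\top A)^{-1} a_i$ divided by $w_i$; that route produces the error bound $\eps\, A^\top A$ and only needs $w_i$ to dominate the leverage scores of $A$.

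I expect essentially no difficulty here: the reduction is a one-line change of basis and the concentration step is a black-box matrix Chernoff bound with a union over the $d$ eigendirections. The only point requiring care is step (iii) — establishing $\sum_j \E X_j^2 \preceq \tfrac{W}{T} \Sigma$ from the row-norm overestimate $w_i \ge a_i^\top M^{-1} a_i$, together with the observation that $\lambda_{\max}(\Sigma)$ stays $O(1)$ because the leverage scores and the error are both taken in the $M$-norm — since this is exactly what makes the variance small enough for $\Theta(\eps^{-2} \log n)\cdot W$ samples to suffice.
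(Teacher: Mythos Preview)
The paper does not actually prove \cref{lem:sample}; it states it as a ``standard lemma'' in the preliminaries and cites it without proof (the appendix of deferred proofs only handles \cref{lem:heavy-hitter}). Your proposal is exactly the standard matrix-Bernstein argument one would write down for this, and it is correct. Your observation that the bound, as literally stated, only goes through once one knows $\lambda_{\max}(M^{-1/2}A^\top A\,M^{-1/2})=O(1)$ is accurate and worth flagging: the lemma is only ever invoked with $M$ a constant-factor spectral approximation of $A^\top A$, so this is harmless in context, but the statement as written silently assumes it.
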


\section{Overview}
\label{sec:overview}

\subsection{IPM Setup}
\label{subsec:ipmoverview}

Towards formally setting up the statement and proof of \cref{thm:main}, we need to define our access model to the convex sets $K$. Here, following \cite{LSZ19} we assume that the algorithm has access to a self-concordant (SC) barrier on each $K_i$. It is known that every convex set $K_i \subseteq \R^{n_i}$ admits a $\nu_i \le n_i$ self-concordant barrier \cite{NN94,LY21,Chewi21}, and recall $n_i$ is upper bounded by an absolute constant $C_k$. Additionally, most functions admit simple to express $O(\nu_i)$-SC barriers \cite{NN94}.
Thus we assume that the algorithm has access to evaluation, gradient, and Hessian oracles to $\nu_i$-SC barriers on each set $K_i$, where each oracle call takes $O(1)$ time.

\begin{definition}[Self-concordance]
\label{def:sc}
For a convex set $K \subseteq \R^n$ we say that a convex function $\phi: \inte(K) \to \R$ is $\nu$-self-concordant if:
\begin{enumerate}
    \item (Self-concordance) For all $x \in \inte(K)$ and $u, v, w \in \R^n$ it holds that
    \[ \left|\g^3 \phi(x)[u, v, w] \right| \le 2\left(u^\top \g^2 \phi(x) u\right)^{1/2}\left(v^\top \g^2 \phi(x) v\right)^{1/2}\left(w^\top \g^2 \phi(x) w\right)^{1/2}, \enspace \text{ and } \enspace \]
    \item For all $x \in \inte(K)$ it holds that $(\g \phi(x))^\top \g^2 \phi(x)^{-1}(\g \phi(x)) \le \nu$.
\end{enumerate}
\end{definition}
Informally, the first property says that if $x$ does not move too much (measured in the norm induced by the local Hessian at $x$), then the quadratic form of the Hessian also does not change much spectrally. A standard IPM tracks a \emph{central path} of points defined using the self-concordant barrier functions. More precisely, for a parameter $t > 0$, define
\begin{equation}\label{eq:path}  x^{(t)} = \argmin_{A^\top x = b} c^\top x + t \sum_{i=1}^m \phi_i(x_i), \end{equation}
where $x_i$ is the restriction of $x$ to the coordinates corresponding to $K_i$. The KKT conditions for this can be expressed as
\[ c + t \g \Phi(x) = Ay \enspace \text{ for some } \enspace y \in \R^d, \] where $\Phi(x) = \sum_{i=1}^m \phi_i(x_i)$. This can equivalently be written as $s_i/t + \g \phi_i(x_i) = 0$ for all $i = 1, \dots, m$ where $s = c - Ay$ are the slacks. In this way, we call a pair of $x, s$ satisfying these properties for $t$ a \emph{well-centered} pair (see \cref{def:wellcenter} for a more precise definition).

\subsection{Overview of Robust IPM for ERM}
\label{subsec:overviewerm}

The goal of an IPM is to ``follow the central path'', i.e., slowly decrease $t$ towards $0$ while maintaining $(x, s)$ that are well-centered for that value of $t$. For the purposes of being able to implement the IPM efficiently, we work with a very loose notion of centrality introduced in \cite{CLS19}, where we only assert that $(x, s)$ is within some $\ell_\infty$ ball of the central path, as opposed to an $\ell_2$ ball (which is more standard). This is formally captured by the exponential/softmax potential function defined in \eqref{eq:psi}.

When taking a step to update $(x, s)$ the algorithm needs to solve a linear system in the matrix $A^\top \g^2 \Phi(x) A$. Here, note that $\g^2 \Phi(x)$ is a block-diagonal matrix with block sizes $n_i \times n_i$. However, just as is done in previous works on nearly-linear time linear programming \cite{BLSS20,BLNPSSSW20,BLLSSSW21}, the algorithm instead computes a spectral sparsifier of this matrix to use instead. The spectral sparsifier is sampled using leverage score overestimates, which explains why we need our new data structure for dynamic leverage score maintenance in \cref{thm:sparsify}.

In each step, we also need to maintain approximations $\xbar, \sbar$ to $x, s$ that are used instead of $x, s$ to define the step. These approximations again are with respect to $\ell_\infty$. It can be proven that there exists such $\xbar$ and $\sbar$ so that only $\O(n)$ total coordinates in $\xbar$ and $\sbar$ change over the course of the whole algorithm -- this is a standard fact from IPM stability analysis. Algorithmically maintaining $\sbar$ requires heavy-hitter data structures which have already been well-developed in the linear programming setting, and simple modifications extend it to the ERM setting without much challenge. 
Maintaining $\xbar$ is a bit trickier, but has also been worked out in the linear programming setting (see eg. \cite{BLNPSSSW20,BLLSSSW21}). The idea is that the change in $x$ can be subsampled down to support size about $\O(\sqrt{n} + d)$ (plus a gradient term which is easy to maintain) instead of the total size $n$. This allows us to both maintain $\xbar$ cheaply as well as the ``feasibility error'' $A^\top x - b$ resulting from the use of the sparsifier.

\subsection{Overview of Adaptive Sparsifier Algorithm}

In this section we overview the algorithm for \cref{thm:sparsify}, i.e., dynamic leverage score overestimate maintenance against an adaptive adversary. As described in \cref{subsec:overviewerm}, we will sample by these leverage score overestimates to produce a spectral sparsifier to use within the IPM.

\paragraph{Decremental sparsifier.}
As is now standard in the dynamic algorithms literature, a fully dynamic data structure follows fairly easily from a decremental one (i.e., one that only undergoes row deletions / downscalings), and we briefly describe this reduction at the end. At a high level, given a matrix $A \in \R^{n \times d}$ undergoing row halvings, our goal is to detect anytime that the leverage score of a row increased additively by more than $d / n$.

Our decremental data structure is from combining the following two facts used in several previous works on electrical flows~\cite{CKMST11} and online sparsification~\cite{CMP20}: 
\begin{enumerate}
\item If we remove (fractional) rows from $A$ whose total leverage score is at most $0.5$, then no leverage score of the remaining rows has more than doubled. This allows us to wait until enough changes have accumulated before having to update the leverage scores.
\item Deleting a row with leverage score $\tau$ decreases the determinant of $A^\top A$ by a factor of $(1-\tau)$. Thus if $A$ has polynomially lower and upper bounded singular values at all times, the total multiplicative decrease of $\det(A^\top A)$ is at most $n^{O(d)}$, so the sum of leverage scores of deleted rows is at most $O(d \log n)$. 
\end{enumerate}
The second fact, combined with the total sum of leverage scores is at most $d$,
implies that the total increase in leverage scores across all steps is $O(d \log{n})$.
In other words, only $\O(n)$ additive changes of leverage scores by $d / n$ will be detected.
Furthermore, combining these two facts gives that the
number of phases where we go and look for new leverage score estimates is $\O(d)$:
this much lower number of phases (compared to the $n^{1/2}$ iterations of the IPM) is critical to setting errors in heavy-hitter sketches.

\paragraph{Detecting large leverage score changes.} To implement the algorithm described above we need to detect when rows' leverage scores have increased. For this we use a heavy-hitter data structure (see \cref{thm:hh}) along with a standard dyadic interval trick. We defer the details to \cref{sec:sparsifier}.

In this overview we instead discuss how we handle the issue of adaptive adversaries in the data structure.
For this we use a locator/checker framework which has been used in several past dynamic leverage score maintenance data structures \cite{FMPSWX18,GLP21:journal,BGJLLPS22}. The goal of the locator is to detect a set $S$ of edges on which to check the leverage scores: this set is guaranteed to contain any edge whose leverage score we ultimately update. This is where the heavy hitter data structure is used. The checker takes all the edges in $S$ and estimates their leverage scores to decide which ones have large leverage score -- for this sampling a spectral sparsifier and using a Johnson-Lindenstrauss sketch suffices. The checker is resampled at each iteration to be a fresh spectral sparsifier. This way, the randomness between the locator and checker is independent, and no randomness of the locator (besides very low probability events) leaks between iterations.

\paragraph{From decremental halving to fully-dynamic.}

One issue that arises is that rows may have leverage score $1$:
deleting a row no longer leaves us with leverage score approximations.
To handle this, we instead only halve rows,
or equivalently, delete rows fractionally.
This increases the number of operations by $O(\log{n})$,
but ensures that the outer-product of $A$,
and in particular, all leverage scores,
are preserved multiplicatively across each step.
It in turn allows us to use previous leverage scores
to sample the current matrix, only paying a constant factor
increase in the number of row samples in $\Atil$.

A fully dynamic algorithm may have insertions. To obtain this, we maintain an $O(\log n)$ level data structure, where the $k$-th level from the bottom handles the most recent $2^k$ insertions. Every $2^k$ insertions, we clear the bottom $k$ levels and rebuild them using any of the $2^k$ insertions which haven't been deleted yet.

\subsection{Overall Runtime Analysis}

To implement the algorithm described, we need to discuss how to maintain $\xbar$, $\sbar$, the feasibility error $A^\top x - b$, and the sparsifier we require at each iteration. In short, these are handled as follows, and is mostly based on prior works \cite{BLSS20,BLNPSSSW20,BLLSSSW21}.
\begin{itemize}
    \item (Feasibility maintenance): At a high level, the change to $x$ during each iteration takes the following form: $x \to x - (g - R\delta)$, where $g$ is the gradient (a slowly changing vector itself), and $R$ is a $\O(\sqrt{n}+d)$-sparse diagonal matrix, so that $R\delta$ is a vector of sparsity at most $\O(\sqrt{n} + d)$. Thus, $A^\top x - b$ can be maintained by calculating $A^\top R\delta$ explicitly in time $O(d \cdot (\sqrt{n}+d))$ (which is acceptable), and then maintaining a partial-sum data structure to maintain $A^\top g$.
    
    \item ($\xbar$ maintenance): We prove that even with the subsampling procedure that the changes to $x$ are large only at most $\O(n)$ times throughout the algorithm. These changes can be detected mostly explicitly: track the changes to $g$ and the other coordinate changes to $x$ explicitly. This is formally done by arguing that there is a nearby sequence $\hat{x}$ that is $\ell_2$-stable (see \cref{lem:xstable}).

    \item ($\sbar$ maintenance): This is done by using a heavy hitter data structure. Because the update structure of $s$ is $s \to DAx$ each iteration for a slowly changing diagonal matrix $D$, we can use an $\ell_2$ heavy hitter data structure to detect large changes to $s$ (see \cref{lem:slack}).

    \item (Subsampling changes in $x$): This is done by sampling by using a combination of leverage scores and a heavy-hitter/JL data structure (see \cref{lem:validsampling}).

    \item (Sparsifier): The time cost of the sparsifier is dominated by \cref{thm:sparsify}, which costs $\O(nd + d^6\sqrt{n})$ because we have $\O(\sqrt{n})$ batches (one per iteration of the IPM), and up to $\O(n)$ total row updates.
\end{itemize}

In total, the sparsifier dominates the cost of the IPM and costs time $\O(nd + d^6\sqrt{n})$.

\section{Single-Step Robust IPM for ERM}
\label{sec:ipm}

In this section we give the algorithm which takes one step along the central path and analyzes that step. We start by formally introducing the self-concordant barrier functions that describe the convex sets $K_i$ and other useful notation.

\subsection{Formal setup}

We assume that the algorithm is given access to all higher-order derivatives of self-concordant barrier functions $\phi_i: \mathsf{int}(K_i) \to \R$. We assume that $\phi_i$ is $\nu_i$-self-concordant for constants $\nu_i$.

The coordinates of $x$ which interact with $K_i$ are a subset of $[n]$ of size $n_i$: we call these coordinates a \emph{block}. The $i$-th block is the set of coordinates for the set $K_i$, and for any vector $v \in \R^n$ we let $v_i \in \R^{n_i}$ be the restriction of $v$ to the $i$-th block.

We also give notation to express the maximum of $\ell_2$ norms over blocks.
\begin{definition}
    For $w \ge 1$ and a vector $v \in \R^n$, define the $\|v\|_{\infty,w} \coloneqq \max_{i\in[m]}\|v_i\|_w$, where $v_i$ denote the restriction of $v$ to the $i$-th block.
\end{definition}

\subsection{Potential Function Setup}

To follow the central path, consider the optimality conditions of \eqref{eq:path}. Recall $s=c-Ay$ are the slacks, and for optimality we need $s_i/t+\nabla\phi_i(x_i)=0$ for all $i\in[m]$. Note that unlike linear programs our slacks $s$ can be negative.

Accordingly, we define the centrality error vector for a slack/primal pair as
\begin{equation} \mu^t_i(x, s) \coloneqq \frac{s_i}{t} + \g \phi_i(x_i) \enspace \text{ for } i \in [m]. \label{eq:mu} \end{equation}
Now we define the centrality error for a block $i \in [m]$
as the norm of the centrality error vector in the inverse Hessian norm, i.e.,
\begin{equation}
\gamma^t_i(x, s) \coloneqq \|\mu^t_i(x, s)\|_{\g^2 \phi_i(x_i)^{-1}}^2. \label{eq:gamma}
\end{equation}

Let $\eps<1/80$ be fixed and let $\lambda = \frac{C_{center} \log n}{\eps^2}$.
Then the centrality potential is defined as
\begin{equation}
\Psi^t(x, s) \coloneqq \sum_{i=1}^m \exp(\lambda\gamma^t_i(x,s)). \label{eq:psi}
\end{equation}

Now we define the \emph{feasibility error} of $x$ as
\[
\left\|A^\top x - b\right\|_{\left(A^\top \g^2\Phi\left(x\right)^{-1} A\right)^{-1}}.
\]
This error is part of the centering condition in \Cref{def:wellcenter}.
We correct for it by taking steps in the direction of its gradient,
and control it in \Cref{lem:feasible} by showing that adequate steps
can decrease it quadratically.

Together these let us define a well-centered pair $(x, s)$ at a path parameter $t$.
\begin{definition}
\label{def:wellcenter}
We say that a pair $(x, s)$ is $\eps$ well-centered at a path parameter $t$ if:
\begin{enumerate}
    \item (Centrality) $\gamma^t_i(x, s) \le \eps^2$ for all $i \in [m]$, and
    \item (Primal Feasibility) $\|A^\top x - b\|_{\left(A^\top \g^2\Phi(x)^{-1} A\right)^{-1}} \le \alpha\eps$, and
    \item (Dual Feasibility) $s = c - Ay$ for some $y \in \R^d$.
\end{enumerate}
\end{definition}

Next we define the steps we take to decrease the centrality potential defined in \eqref{eq:psi}. Towards this we define the \emph{gradient} and the \emph{ideal step}. Ultimately our algorithm will take the ideal step defined for approximate $x, s$ and a sparsifier of the true Hessian.

As standard to the robust IPM literature, we use $g^{t}(x, s) \in \R^{m}$
to denote the ideal change that we want a change in $x$ and $s$ to send
each of the $\gamma^{t}_i$s in.

\begin{definition}[Gradient]
\label{def:gradient}
Given $x, s, t$, we define the gradient $g^t(x, s) \in \R^n$ as
\[
g_i^t\left(x, s\right)
\coloneqq
\frac{\exp\left(\lambda \gamma_i^t\left(x,s\right)\right)
\cdot \g^2\phi\left(x_i\right)^{-1/2}\mu_i^t\left(x,s\right)}
{\left(\sum_{i=1}^m \exp\left(2\lambda\gamma_i^t\left(x,s\right)\right)\right)^{1/2}}
\qquad
\text{for $i \in [m]$}.
\]
\end{definition}

Finally, for our algorithm we do not use the exact values of $x$ or $s$ and instead internally maintain approximations $\xbar$ and $\sbar$ for them. We need to define what it means for $\xbar$ and $\sbar$ to $\eps$-approximate the true $x$ and $s$ values.
\begin{definition}
\label{def:epsapprox}
We say that $\xbar$ and $\sbar$ $\eps$-approximate $x, s$ if
\[
\norm{x_i - \xbar_i}_{\g^2 \phi_i\left(x_i\right)}
\le
\eps
\enspace \text{ and } \enspace
\norm{s_i - \sbar_i}_{\g^2\phi_i\left(x_i\right)^{-1}}
\le
\eps t.
\]
\end{definition}

\begin{definition}[Ideal step]
\label{def:idealstep}
Given $x, s, t$ define the \emph{ideal step} for $g = g^t(x, s)$ as
\begin{align*}
\delta_x &= \g^2\Phi(x)^{-1/2}g - \g^2\Phi(x)^{-1}A(A^\top \g^2\Phi(x)^{-1}A)^{-1}A^\top \g^2\Phi(x)^{-1/2}g \enspace \text{ and } \\
\delta_s &= t \cdot A(A^\top \g^2\Phi(x)^{-1}A)^{-1}A^\top \g^2\Phi(x)^{-1/2}g.
\end{align*}
\end{definition}

\subsection{Short-Step Analysis}

To start we state the short step algorithm. This is based on previous works \cite{BLSS20,BLNPSSSW20,BLLSSSW21} but adapted to the ERM setting using the setup of \cite{LSZ19}.
\begin{algorithm}
\SetKwProg{myalg}{Procedure}{}{}
\myalg{$\textsc{ShortStep}(x, s, t, \eta)$}{
    \tcp{Let $\lambda\assign C_{center} \eps^{-2} \log n$,
    $\alpha\gets \eps C_K^{-1} \lambda^{-1}$,
    $\beta \gets 10\alpha$
    } 
    Let $\xbar, \sbar$ be $\beta$-approximations of $x, s$ (see \cref{def:epsapprox}). \\
    Let $g = \alpha g^t(\xbar, \sbar)$ (see \cref{def:gradient}). \\
    Let $H \approx_{\alpha} A^\top \g^2 \Phi(\xbar)^{-1} A.$ \\
    Let
    \begin{align*}
    \delta_1 &= \g^2 \Phi(\xbar)^{-1/2}AH^{-1}A^\top\nabla^2\Phi(\xbar)^{-1/2} g\\
    \delta_2 &= \g^2\Phi(\xbar)^{-1/2}AH^{-1}(A^\top x - b)\\
    \delta_r &= \delta_1 + \delta_2
    \end{align*}
    Let $R$ be a \emph{valid} diagonal matrix sample for vector $\delta_r$ and matrix $\g^2\Phi(\xbar)^{-1/2}A$ (see \cref{def:valid}). \\
    Set $\bar{\delta}_x = \g^2\Phi(\xbar)^{-1/2}(g - R\delta_r)$ and $\bar{\delta}_s = t\g^2\Phi(\xbar)^{1/2}\delta_1$. \\
    \Return $x^{\new} = x - \bar{\delta}_x$ and $s^{\new} = s - \bar{\delta}_s$.
}
\caption{Short Step IPM for ERM: starting at $x, s$ at path parameter $t$, decrease $t$ to $(1-\eta)t$ and update $x$ and $s$ to $x^{\new}$ and $s^{\new}$.}
\label{alg:shortstep}
\end{algorithm}

\begin{definition}
\label{def:valid}
We say that a random nonnegative diagonal matrix $R \in \R_{\ge0}^{n \times n}$ is a \emph{valid sample} for a vector $\delta$ and matrix $A$ if for a sufficiently large constant $C_{var}$:
\begin{enumerate}
    \item (Block form) For coordinates $i$ and $j$ in the same block, $R_{ii} = R_{jj}$, and
    \item (Expectation) It holds that $\E[R] = I$, and
    \item (Variance) It holds that $\Var[R_{ii}\delta_i] \le \frac{\alpha|\delta_i|\|\delta\|_2}{C_{var}^2}$, and
    \item (Covariance) For coordinates $i$ and $j$ in different blocks, it holds that $\E[R_{ii}R_{jj}] \le 2$, and
    \item \label{item:valid_maximum} (Maximum) With high probability, it holds that $\|R\delta - \delta\|_{\infty} \le \frac{\alpha\|\delta\|_2}{C_{var}^2}$, and
    \item (Spectral approximation) With high probability, it holds that
    \[ A^\top \g^2 \Phi(x)^{-1/2} R \g^2 \Phi(x)^{-1/2}A \approx_{\alpha} A^\top \Phi(x)^{-1} A. \]
\end{enumerate}
\end{definition}

Next we state the main lemmas which prove that the short-step procedure in \cref{alg:shortstep} indeed maintains a sequence of well-centered points. Later, we argue that $\xbar$ and $\sbar$ change slowly over a sequence of short-steps, and give efficient algorithms for maintaining them.

\begin{lemma}[Potential Maintainence]
\label{lem:pot_drop}
Assume that $(x, s)$ are $\eps$ well-centered at path parameter $t$.
Let $\that = (1-\eta)t$ for $\eta = \frac{\eps\alpha}{C_{center} \sqrt{\nu}}$. It holds that
\[ \E[\Psi^{\that}(x^{\new}, s^{\new})] \le \left(1 - \frac{\eps\alpha}{C_{center} ^2\sqrt{\nu}}\right)\Psi^t(x, s)+ O(n^2). \]
\end{lemma}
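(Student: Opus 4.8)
The plan is to track how a single call to $\textsc{ShortStep}$ changes the centrality potential $\Psi$, decomposing the change into: (i) the effect of decreasing $t$ to $\that = (1-\eta)t$ while holding $(x,s)$ fixed, and (ii) the effect of the updates $x \to x^{\new}, s \to s^{\new}$ at the fixed parameter $\that$. For part (i), I would use the self-concordance bound $\|\g\phi_i(x_i)\|_{\g^2\phi_i(x_i)^{-1}}^2 \le \nu_i \le \nu$ to show that rescaling $s_i/t$ to $s_i/\that$ changes each $\mu^t_i$ (hence $\gamma^t_i$, which enters $\Psi$ through $\exp(\lambda \gamma^t_i)$) by an amount controlled by $\eta\sqrt{\nu}$; with our choice $\eta = \eps\alpha/(C_{center}\sqrt\nu)$ this contributes a multiplicative factor like $(1 + O(\eps\alpha\lambda/C_{center}^2\sqrt\nu))$ to $\Psi$ under the $\gamma^t_i \le \eps^2$ hypothesis, which must be dominated by the decrease from part (ii).

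For part (ii), the heart of the argument, I would expand $\gamma^{\that}_i(x^{\new}, s^{\new})$ to first order around $(x,s)$. The ideal step is engineered so that the exact step $\delta_x, \delta_s$ (from \cref{def:idealstep}, but with the true Hessian and true $x,s$) drives $\g^2\phi_i(x_i)^{-1/2}\mu_i$ in the direction $-g_i/\alpha \cdot \alpha = -g_i$ scaled appropriately — i.e. it implements a projected Newton step on the softmax potential. The key identity is that $\sum_i \langle \mu_i, (\text{ideal change in } \mu_i)\rangle$ equals (up to the projection onto the feasible subspace, which only helps) $-\|g\|_2^2 = -\alpha^2$, combined with the softmax gradient structure: the gradient step decreases $\Psi$ at rate $\lambda \cdot \|g\|_2 \cdot (\text{something} \ge \Omega(\sqrt{\Psi} \text{ or } \gamma\text{-weighted mass}))$. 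I expect the first-order decrease to be of order $-\frac{\eps\alpha\lambda}{?}$ times the "large-$\gamma$ mass" of $\Psi$, but since $\Psi \ge m$ always and small-$\gamma$ blocks contribute only $O(1)$ each, one gets the stated $(1 - \frac{\eps\alpha}{C_{center}^2\sqrt\nu})\Psi + O(n^2)$ form — the additive $O(n^2)$ absorbs blocks already near-centered plus crude bounds on second-order and error terms.

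The bulk of the work — and the main obstacle — is bounding all the deviations between the \emph{ideal} step and the \emph{actual} step taken by \cref{alg:shortstep}, and showing they are lower-order. There are four sources of error to control simultaneously: (a) using $\xbar,\sbar$ instead of $x,s$ (the $\beta$-approximation of \cref{def:epsapprox}), which perturbs $g$, the Hessians, and the norms by factors $e^{O(\beta)}$; (b) using $H \approx_\alpha A^\top\g^2\Phi(\xbar)^{-1}A$ instead of the exact Gram matrix; (c) the random rescaling by the valid sample $R$, whose expectation is $I$ but whose variance (controlled by \cref{def:valid}, items on Variance, Maximum, Covariance) produces a second-moment error term — this is where the $\E[\cdot]$ on the left and the $C_{var}$ constant come in; and (d) the feasibility-correction term $\delta_2$, which is not part of the centering step per se but must be shown not to blow up $\Psi$ (its quadratic self-improvement is handled separately in \cref{lem:feasible}, so here I just need that it perturbs $\gamma^{\that}_i$ by $O(\alpha\eps)$ per block). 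The strategy for each is the same: use self-concordance to convert "$(x,s)$ moves by $\le O(\eps)$ in the local Hessian norm" into "$\gamma_i$ changes by $O(\eps^2)$ and the Hessian changes by $\approx_{O(\eps)}$", then Taylor-expand $\exp(\lambda\gamma_i)$ and collect terms. The choices $\alpha = \eps C_K^{-1}\lambda^{-1}$ and $\beta = 10\alpha$ are exactly calibrated so that $\lambda \cdot \alpha = \eps C_K^{-1}$ is bounded, keeping $e^{\lambda \cdot O(\alpha)} = O(1)$; I would make this quantitative bookkeeping explicit. Finally, taking expectations over $R$ using items 2–4 of \cref{def:valid} kills the first-order randomness (since $\E R = I$) and leaves a variance term bounded by $\frac{\alpha}{C_{var}^2}\|\delta_r\|_2 \cdot \|\mu\|$-type quantities, which for $C_{var}$ large enough is a $\frac{1}{C_{center}^2}$-fraction of the main decrease — yielding the claimed contraction factor.
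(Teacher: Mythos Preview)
Your proposal is essentially correct and follows the same overall approach as the paper: Taylor-expand $\exp(\lambda\gamma_i)$, identify the first-order decrease coming from the softmax-gradient structure of $g$, and verify that the four error sources you list (the $\beta$-approximation $\xbar,\sbar$, the sparsifier $H$, the random sample $R$, and the feasibility correction $\delta_2$) are all lower order. The paper packages the Taylor expansion into an auxiliary lemma (its \cref{lem:exp_analysis} and \cref{lem:potential}) and cites \cref{lem:step_size}, \cref{lem:stability}, and \cref{lem:gamma_approx} for the error bounds you describe under (a)--(d), but the substance matches what you outline.

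One minor difference worth noting: you decompose as ``first change $t\to\that$ at fixed $(x,s)$, then take the step at parameter $\that$,'' whereas the paper does the reverse, first taking the step at parameter $t$ and then changing $t\to\that$ at $(x^{\new},s^{\new})$. The paper's order is slightly cleaner because the step is \emph{designed} at parameter $t$ (note $\bar\delta_s = t\cdot\g^2\Phi(\xbar)^{1/2}\delta_1$ and $g = \alpha g^t(\xbar,\sbar)$), so analyzing $\Psi^t(x^{\new},s^{\new}) - \Psi^t(x,s)$ makes the first-order cancellation exact; in your order you would pick up an extra $(1-\eta)^{-1}$ factor in the slack contribution to $\mu^{\that}$, which is harmless since $\eta = O(\alpha/\sqrt\nu)$ but requires one more line of bookkeeping. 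Either order works.
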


Before proving this lemma, we first analyze the change in potential and establish some bounds on the size of the steps we take. Observe the following technical lemma.

\begin{lemma}\label{lem:exp_analysis}
    \[ \exp(\lambda(\gamma+\delta_{\gamma}))\le\exp(\lambda\gamma)(1+\lambda\delta_{\gamma}+\exp(\lambda|\gamma|)\lambda^2\delta_{\gamma}^2)).\]
\end{lemma}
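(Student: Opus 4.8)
The plan is to reduce this to a one-variable statement about the exponential function: fix $\gamma$ and view both sides as functions of $\delta_\gamma$. Dividing through by $\exp(\lambda\gamma)$ (which is positive), the claimed inequality is equivalent to
\[
\exp(\lambda\delta_\gamma) \le 1 + \lambda\delta_\gamma + \exp(\lambda|\gamma|)\,\lambda^2\delta_\gamma^2.
\]
So it suffices to prove the scalar fact: for any real $z$ and any $c \ge 1$,
\[
e^z \le 1 + z + c\,z^2,
\]
applied with $z = \lambda\delta_\gamma$ and $c = \exp(\lambda|\gamma|) \ge 1$ (here I am using that $\lambda > 0$, which holds since $\lambda = C_{center}\log n/\eps^2$; note also that the bound we want only needs $c \ge 1$, so the precise relation between $z$ and $\gamma$ is irrelevant once we observe $\exp(\lambda|\gamma|)\ge 1$).

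First I would handle the case $|z| \le 1$. Here one uses the Taylor expansion $e^z = 1 + z + z^2/2 + z^3/6 + \cdots$ and bounds the tail $\sum_{k\ge 2} z^k/k!$ in absolute value by $z^2 \sum_{k\ge 0} |z|^k/(k+2)! \le z^2 \sum_{k\ge 0} 1/2^k \cdot (1/2) = z^2$ — in fact a cleaner route is $|e^z - 1 - z| \le \frac{|z|^2}{2}e^{|z|} \le \frac{|z|^2}{2}\cdot e \le z^2$ for $|z|\le 1$, so $e^z \le 1 + z + z^2 \le 1 + z + cz^2$ since $c \ge 1$. Second, for $z \ge 1$: here $1 + z \ge 0$, and I would show $e^z \le z^2 \le cz^2$, i.e. $e^z \le z^2$ for $z \ge 1$ — actually this is false near $z=1$ ($e \approx 2.718 > 1$), so instead I would use $e^z \le 1 + z + z^2$ for $z \ge 1$, which also fails ($e \approx 2.718 \le 3$ — that one holds at $z = 1$; for larger $z$ it eventually fails since $e^z$ grows faster). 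The honest statement is that $e^z \le 1 + z + z^2$ is \emph{not} true for all $z \ge 1$, so this is where the real content lies.

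The resolution — and the step I expect to be the only nontrivial point — is that in the application $z = \lambda\delta_\gamma$ is always \emph{small}: the short-step analysis guarantees that the per-step change $\delta_\gamma$ in the centrality error satisfies $\lambda|\delta_\gamma| \le 1$ (this follows from the step-size bounds established just after this lemma, where $\delta_\gamma$ is $O(\alpha)$ and $\lambda\alpha \le 1$ by the choice $\alpha = \eps C_K^{-1}\lambda^{-1}$). So the lemma is really only invoked in the regime $|z| \le 1$, and the first case above suffices. The cleanest writeup would therefore be to state and prove the scalar inequality $e^z \le 1 + z + e^{|w|}z^2$ under the hypothesis $|z| \le 1$ (with $w$ any real), via $|e^z - 1 - z| \le z^2/2 \cdot e^{|z|} \le z^2 \le e^{|w|}z^2$, and then substitute $z = \lambda\delta_\gamma$, $w = \lambda\gamma$. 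The main obstacle is thus not the algebra but making sure the hypothesis $\lambda|\delta_\gamma|\le 1$ under which the bound is used is explicitly in force wherever the lemma is applied; since the lemma as stated omits this hypothesis, I would either add it or verify it is implied by the well-centeredness and step-size invariants maintained throughout \cref{sec:ipm}.
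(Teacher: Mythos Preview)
The lemma statement contains a typo: the factor $\exp(\lambda|\gamma|)$ should be $\exp(\lambda|\delta_\gamma|)$. The paper's own proof applies Taylor's theorem with Lagrange remainder to $t \mapsto \exp(\lambda(\gamma + t\delta_\gamma))$ on $[0,1]$, obtaining the second-order term $\frac{1}{2}\lambda^2\delta_\gamma^2\exp(\lambda\gamma)\exp(\lambda\zeta\delta_\gamma)$ for some $\zeta\in[0,1]$, and then bounds $\exp(\lambda\zeta\delta_\gamma)\le\exp(\lambda|\delta_\gamma|)$. This yields
\[
\exp(\lambda(\gamma+\delta_{\gamma}))\le\exp(\lambda\gamma)\bigl(1+\lambda\delta_{\gamma}+\exp(\lambda|\delta_\gamma|)\lambda^2\delta_{\gamma}^2\bigr),
\]
which holds unconditionally and is what the downstream applications actually use (cf.\ the $\exp(2\lambda\eps\|\mu\|_{M_\mu})$ factor in \cref{lem:potential}).

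You correctly diagnosed that the statement as printed is false without side constraints: your observation that $e^z \le 1 + z + cz^2$ fails for large $z$ when $c$ is a fixed constant is exactly right, and the detour into requiring $\lambda|\delta_\gamma|\le 1$ is forced only by the typo. Once the factor is corrected to $\exp(\lambda|\delta_\gamma|)$, your own intermediate bound $|e^z - 1 - z| \le \tfrac{z^2}{2}e^{|z|}$ (with $z=\lambda\delta_\gamma$) gives the result immediately and unconditionally --- which is essentially the paper's Taylor-remainder argument phrased in scalar form. So your approach is the same as the paper's; the only issue is that you were aiming at a mis-stated target.
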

\begin{proof}
    Consider $t\in[0,1]$ and let 
    \[
    z_t \coloneqq \gamma+t\delta_{\gamma},
    \]
    and let
    \[
    f\left(t\right)\coloneqq \exp\left(\lambda z_t\right).
    \]
    Taylor's theorem tells us
    \[
    f\left(1\right)
    =
    f\left(0\right)
    +f'\left(0\right)
    +\frac12f''\left(\zeta\right)
    \]
    for $\zeta\in[0,1]$. We bound these terms separately. 
    
    The first term is simply $\exp(\gamma)$.
    
    By the chain rule,
    \[
    f'\left(t\right)
    =
    \exp\left(\lambda z_t\right)
    \lambda\frac{d}{dt}z_t
    =
    \exp\left(\lambda z_t\right)\lambda\delta_{\gamma}.
    \]
    For $t=0$ this is $\exp(\lambda\gamma)\lambda\delta_\gamma$.
    
    Again by the chain rule $f''(t)=\exp(\lambda z_t)\lambda^2\delta_{\gamma}^2$,
    which for some $t=\zeta$ is
    \[
    \exp\left(\lambda\gamma\right)
    \exp\left(\lambda\zeta\delta_\gamma\right)\lambda^2\delta_\gamma^2.
    \]
    Since $\zeta\in[0,1]$, this is upper bounded by $\exp(\lambda\gamma)\exp(\lambda|\delta_\gamma|)\lambda^2\delta_{\gamma}^2)$. Summing the terms gives us the desired result.
\end{proof}

\begin{lemma}\label{lem:gamma_approx}
Let $\xbar$ and $\sbar$ be $\beta$-approximations to $x$ and $s$, under \Cref{def:epsapprox}. Then
\[
\abs{\gamma_i^t\left(x,s\right)
-
\gamma_i^t\left(\xbar,\sbar\right)}
\le
10\beta\eps.
\]
\end{lemma}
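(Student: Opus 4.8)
The plan is to bound the difference $\abs{\gamma_i^t(x,s) - \gamma_i^t(\xbar,\sbar)}$ by tracking how each ingredient of $\gamma_i^t$ changes when $(x_i, s_i)$ is perturbed to $(\xbar_i, \sbar_i)$. Recall $\gamma_i^t(x,s) = \norm{\mu_i^t(x,s)}_{\g^2\phi_i(x_i)^{-1}}^2$ where $\mu_i^t(x,s) = s_i/t + \g\phi_i(x_i)$. So there are three sources of change: the slack term $s_i/t$ versus $\sbar_i/t$, the gradient term $\g\phi_i(x_i)$ versus $\g\phi_i(\xbar_i)$, and the Hessian $\g^2\phi_i(x_i)$ versus $\g^2\phi_i(\xbar_i)$ that defines the norm. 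First I would invoke self-concordance (\cref{def:sc}): since $\norm{x_i - \xbar_i}_{\g^2\phi_i(x_i)} \le \beta$ and $\beta = 10\alpha = 10\eps C_K^{-1}\lambda^{-1}$ is tiny, standard self-concordance estimates give $\g^2\phi_i(\xbar_i) \approx_{O(\beta)} \g^2\phi_i(x_i)$ spectrally, and likewise $\norm{\g\phi_i(x_i) - \g\phi_i(\xbar_i)}_{\g^2\phi_i(x_i)^{-1}} \le O(\beta)$ (the gradient moves by at most roughly $\beta$ in the local dual norm, by integrating the Hessian along the segment). For the slack term, \cref{def:epsapprox} directly gives $\norm{s_i - \sbar_i}_{\g^2\phi_i(x_i)^{-1}} \le \beta t$, so $\norm{(s_i - \sbar_i)/t}_{\g^2\phi_i(x_i)^{-1}} \le \beta$.

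Next I would combine these. Write $\mu \coloneqq \mu_i^t(x,s)$ and $\bar\mu \coloneqq \mu_i^t(\xbar,\sbar)$, and let $M \coloneqq \g^2\phi_i(x_i)^{-1}$, $\bar M \coloneqq \g^2\phi_i(\xbar_i)^{-1}$. By the triangle inequality in the $M$-norm, $\norm{\mu - \bar\mu}_M \le \norm{(s_i-\sbar_i)/t}_M + \norm{\g\phi_i(x_i) - \g\phi_i(\xbar_i)}_M \le O(\beta)$. Since $(x,s)$ is $\eps$ well-centered, $\gamma_i^t(x,s) = \norm{\mu}_M^2 \le \eps^2$, so $\norm{\mu}_M \le \eps$, and hence $\norm{\bar\mu}_M \le \eps + O(\beta) = O(\eps)$. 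Then
\[
\abs{\norm{\mu}_M^2 - \norm{\bar\mu}_M^2} = \abs{\norm{\mu}_M - \norm{\bar\mu}_M}\cdot\abs{\norm{\mu}_M + \norm{\bar\mu}_M} \le \norm{\mu - \bar\mu}_M \cdot O(\eps) = O(\beta\eps).
\]
This handles the change in $\mu$ with the norm held fixed at $M$. It remains to account for the change of the quadratic form from $M$ to $\bar M$: since $\bar M \approx_{O(\beta)} M$, we have $\abs{\norm{\bar\mu}_{\bar M}^2 - \norm{\bar\mu}_M^2} \le (e^{O(\beta)}-1)\norm{\bar\mu}_M^2 \le O(\beta)\cdot O(\eps^2) = O(\beta\eps^2)$, which is lower order. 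Adding the two contributions yields $\abs{\gamma_i^t(x,s) - \gamma_i^t(\xbar,\sbar)} \le O(\beta\eps)$, and chasing the absolute constants through (they are all controlled because $\lambda^{-1}$, $C_K^{-1}$ appear in $\alpha$, $\beta$) gives the stated bound $10\beta\eps$.

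The main obstacle is bookkeeping the self-concordance estimates cleanly: one must be careful that $\norm{x_i - \xbar_i}_{\g^2\phi_i(x_i)} \le \beta$ with $\beta$ small enough (here $\beta \ll 1$, in fact $\beta \le 1/2$ say) so that the segment from $x_i$ to $\xbar_i$ stays in $\inte(K_i)$ and the Hessian is multiplicatively stable along it — this is the classical Dikin-ellipsoid fact for self-concordant barriers, and I would cite or re-derive it. The other delicate point is making sure the final constant is exactly $10$ rather than a larger absolute constant; this is just a matter of choosing the $O(\cdot)$ constants in the self-concordance lemma small relative to the slack in the definition of $\beta = 10\alpha$, and since all the error terms are genuinely of order $\beta\eps$ (not $\beta$ alone), there is ample room.
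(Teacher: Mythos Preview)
Your proposal is correct and follows essentially the same approach as the paper's proof: both split the difference into a change-of-vector piece (bounding $\norm{\mu - \bar\mu}$ by $O(\beta)$ via the slack approximation and self-concordance on the gradient) and a change-of-norm piece (bounding the effect of replacing $\g^2\phi_i(x_i)^{-1}$ by $\g^2\phi_i(\xbar_i)^{-1}$ via self-concordance), then use difference-of-squares together with $\norm{\mu} \le \eps$ from well-centeredness to turn the $O(\beta)$ perturbation into an $O(\beta\eps)$ bound on $\gamma$. The only cosmetic difference is the order in which you swap the vector and the norm.
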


\begin{proof}
We bound using self-concordance
\begin{align*}
\abs{\gamma_i^t\left(x,s\right) - \gamma_i^t\left(\xbar,\sbar\right)}
&=
\abs{\norm{\mu_i^t\left(x,s\right)}_{\nabla^2\phi_i\left(x_i\right)^{-1}}^2-
\norm{\mu_i^t(\xbar,\sbar)}_{\nabla^2\phi_x\left(\xbar_i\right)^{-1}}^2}\\
&\le\abs{\norm{\mu_i^t\left(x,s\right)}_{\nabla^2\phi_i(x_i)^{-1}}^2
- \norm{\mu_i^t\left(x,s\right)}_{\nabla^2\phi_i\left(\xbar_i\right)^{-1}}^2}
\\ & \qquad
+\abs{\norm{\mu_i^t\left(x,s\right)}_{\nabla^2\phi_i\left(\xbar_i\right)^{-1}}^2
-\norm{\mu_i^t\left(\xbar,\sbar\right)}_{\nabla^2\phi_x\left(\xbar_i\right)^{-1}}^2}\\
&\le
\left(\frac{1}{1-\beta}-1\right)\gamma_i^t\left(x,s\right)
+3\eps \norm{\mu_i^t\left(x,s\right)-\mu_i^t\left(\xbar,\sbar\right)}
_{\nabla^2\phi_x\left(\xbar_i\right)^{-1}}\\
&\le
\left(\frac{1}{1-\beta}-1\right)\gamma_i^t\left(x,s\right)
+\frac{3\eps}{1-\beta}\norm{\frac{\sbar_i-s_i}{t}+\nabla\phi_i\left(\xbar_i\right)-\nabla\phi_i\left(x_i\right)}
_{\nabla^2\phi_x\left(x_i\right)^{-1}}\\
&\le
\left(\frac{1}{1-\beta}-1\right)\eps^2+\frac{3\eps}{1-\beta}\left(\beta+\frac{\beta}{1-\beta}\right),
\end{align*}
where in the third step we pulled out a $\norm{\mu_i^t\left(x,s\right)}_{\nabla^2\phi_i\left(\xbar_i\right)^{-1}}
+\norm{\mu_i^t\left(\xbar,\sbar\right)}_{\nabla^2\phi_x\left(\xbar_i\right)^{-1}}$ using difference of squares and Cauchy-Schwarz.
For $\beta<\eps<0.1$ this is bounded by $10\beta\eps$ as desired.
\end{proof}

\begin{lemma}\label{lem:potential}
    Let $\Psi(\mu)=\exp(\lambda\|\mu\|_{M_\mu}^2)$, $\mu^\new=\mu-\delta_\mu$ for $\|\delta_\mu\|_{M}\le\eps^2$, and $M_{\mu^\new}\approx_{\eps^2}M_\mu$. Then
    \[ \Psi(\mu^\new)\le\Psi(\mu)-2\exp(\lambda\|\mu\|_{M_\mu}^2)\lambda\delta_\mu^\top M_\mu\mu+4\exp(\lambda\|\mu\|_{M_\mu}^2)\exp(2\lambda\eps\|\mu\|_{M_\mu})\lambda^2\eps^2\|\mu\|_{M_\mu}^2. \]
\end{lemma}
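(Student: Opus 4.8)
The plan is to reduce the bound to \cref{lem:exp_analysis}. Write $\gamma \coloneqq \|\mu\|_{M_\mu}^2$ and $\gamma^{\new} \coloneqq \|\mu^{\new}\|_{M_{\mu^{\new}}}^2$, and set $\delta_\gamma \coloneqq \gamma^{\new} - \gamma$, so that $\Psi(\mu^{\new}) = \exp(\lambda(\gamma + \delta_\gamma))$. The first step is to produce an explicit upper bound on $\delta_\gamma$. Expanding $\gamma^{\new} = \|\mu - \delta_\mu\|_{M_{\mu^{\new}}}^2$, using $M_{\mu^{\new}} \approx_{\eps^2} M_\mu$ (and treating $M$ as $e^{O(\eps^2)}$-close to $M_\mu$ as well) to swap the Hessian at the cost of an $e^{\pm\eps^2}$ factor, and then expanding $\|\mu - \delta_\mu\|_{M_\mu}^2 = \gamma - 2\delta_\mu^\top M_\mu \mu + \|\delta_\mu\|_{M_\mu}^2$, I get $\delta_\gamma \le \bar\delta$ with $\bar\delta = -2\delta_\mu^\top M_\mu\mu + E$, where the error $E$ collects the Hessian-mismatch term $O((e^{\eps^2}-1)\gamma)$, the quadratic term $O(\|\delta_\mu\|_{M_\mu}^2)$, and a cross term $O((e^{\eps^2}-1)\|\delta_\mu\|_{M_\mu}\|\mu\|_{M_\mu})$. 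Since $e^{\eps^2}-1 = O(\eps^2)$ and $\|\delta_\mu\|_{M_\mu} = O(\eps^2)$ (from $\|\delta_\mu\|_M \le \eps^2$ and closeness of $M$ to $M_\mu$), each piece of $E$ is of strictly lower order than $\eps^2\gamma$.

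Next I would bound the magnitude of $\bar\delta$. By Cauchy--Schwarz in the $M_\mu$ inner product, $|2\delta_\mu^\top M_\mu\mu| \le 2\|\delta_\mu\|_{M_\mu}\|\mu\|_{M_\mu} \le 2\eps\|\mu\|_{M_\mu}$ (using $\|\delta_\mu\|_{M_\mu} \le \eps$, which follows from the much stronger $\|\delta_\mu\|_M \le \eps^2$), and the pieces of $E$ are negligible because $\eps < 1/80$; hence $|\bar\delta| \le 2\eps\|\mu\|_{M_\mu}$, and in particular $\bar\delta^2 \le 4\eps^2\|\mu\|_{M_\mu}^2$. Because $\delta_\gamma \le \bar\delta$ and $t \mapsto \exp(\lambda(\gamma+t))$ is increasing, applying \cref{lem:exp_analysis} to the pair $(\gamma, \bar\delta)$ yields
\[ \Psi(\mu^{\new}) \le \exp(\lambda\gamma)\bigl(1 + \lambda\bar\delta + \exp(\lambda|\bar\delta|)\,\lambda^2\bar\delta^2\bigr). \]

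Finally I would unpack the right-hand side. Substituting $\bar\delta = -2\delta_\mu^\top M_\mu\mu + E$ into the linear term produces exactly $-2\exp(\lambda\|\mu\|_{M_\mu}^2)\,\lambda\,\delta_\mu^\top M_\mu\mu$ together with the residual $\exp(\lambda\gamma)\lambda E$. For the quadratic term, the bounds $\exp(\lambda|\bar\delta|) \le \exp(2\lambda\eps\|\mu\|_{M_\mu})$ and $\bar\delta^2 \le 4\eps^2\|\mu\|_{M_\mu}^2$ bound it by $4\exp(\lambda\|\mu\|_{M_\mu}^2)\exp(2\lambda\eps\|\mu\|_{M_\mu})\,\lambda^2\eps^2\|\mu\|_{M_\mu}^2$, which is precisely the advertised error term. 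It remains to check that the residual $\exp(\lambda\gamma)\lambda E$ is itself dominated by that error term; a short calculation handles this, using $\lambda \ge 1$, $\lambda\eps^2 = \Theta(\log n)$, and the fact that in the regime of interest $\delta_\mu$ is of the same order as $\mu$ (so $E$ carries no $\eps^2\gamma$-independent term). This last absorption step — confirming that every residual coming from the Hessian mismatch and from $\|\delta_\mu\|_{M_\mu}^2$ is swallowed by the single displayed error term — is the only part that requires care; everything else is a second-order Taylor expansion (via \cref{lem:exp_analysis}) plus Cauchy--Schwarz.
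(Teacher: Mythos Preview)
Your approach is essentially the same as the paper's: expand $\|\mu-\delta_\mu\|_{M_{\mu^{\new}}}^2$, use $M_{\mu^{\new}}\approx_{\eps^2}M_\mu$ to pass to $M_\mu$, apply \cref{lem:exp_analysis} with $\gamma=\|\mu\|_{M_\mu}^2$, and bound the quadratic term via Cauchy--Schwarz $|\delta_\mu^\top M_\mu\mu|\le\|\delta_\mu\|_{M_\mu}\|\mu\|_{M_\mu}$. The paper in fact uses the full strength $\|\delta_\mu\|_{M_\mu}\le\eps^2$ to get the sharper constants $\exp(2\lambda\eps^2\|\mu\|_{M_\mu})$ and $\lambda^2\eps^4\|\mu\|_{M_\mu}^2$ in the error term (the lemma statement is weaker than what the proof gives), whereas you relax to $\|\delta_\mu\|_{M_\mu}\le\eps$ to match the statement as written.

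One comment on your absorption step: your justification that ``$\delta_\mu$ is of the same order as $\mu$'' is not the right reason and is not an assumption of the lemma. The additive $O(\eps^4)$ piece of $E$ coming from $\|\delta_\mu\|_{M_\mu}^2$ genuinely cannot be absorbed into the displayed error term when $\|\mu\|_{M_\mu}$ is very small. The paper's proof has exactly the same loose end --- it writes $\delta_\gamma=-2\delta_\mu^\top M_\mu\mu+O(\eps^2)$ and then silently drops the $O(\eps^2)$ when squaring --- and in both downstream applications (\cref{lem:pot_drop} and \cref{lem:xstable}) one has $\gamma\le\eps^2$ so this residual is harmless for the overall argument.
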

\begin{proof}
We first consider the $\|\mu\|_{M_\mu}$ term. 
\begin{align*}
\norm{\mu^\new}_{M_\mu^\new}^2
&=
\left(\mu^\new\right)^\top M_{\mu^\new}\mu^\new\\
&=
\left(\mu-\delta_\mu\right)^\top M_{\mu^\new} \left(\mu-\delta_\mu\right)\\
&\le
\exp\left(2\eps^2\right)\left(\mu-\delta_\mu\right)^\top
M_{\mu}\left(\mu-\delta_\mu\right)\\
&=
\exp\left(2\eps^2\right)\left(\norm{\mu}_M^2-2\delta_\mu^\top M_\mu\mu
+\norm{\delta_\mu}_{M_\mu}^2\right)\\
&\le
\norm{\mu}_M^2-2\delta_\mu^\top M_\mu\mu + O\left(\eps^2\right).
\end{align*}
Then, we consider the change over $\Psi$.
Let
\[
\gamma
\coloneqq
\norm{\mu}_M^2,
\]
and
\[
\delta_\gamma
\coloneqq-2\delta_\mu^\top M_\mu\mu+O\left(\eps^2\right).
\]
Then we use \Cref{lem:exp_analysis}. To bound $\delta_\gamma$, note the following by Cauchy-Schwarz:
\[
\delta_\gamma^2
=4\left(\delta_\mu^\top M_\mu\mu\right)^2
\le
4\norm{\delta_\mu}_{M_\mu}^2 \norm{\mu}_{M_\mu}^2
\le
4\eps^4 \norm{\mu}_{M_\mu}^2.
\]
Combining these, we have
\[
\Psi\left(\mu^\new\right)
\le
\Psi\left(\mu\right)
-
2\exp\left(\lambda\norm{\mu}_{M_\mu}^2\right)\lambda\delta_\mu^\top M_\mu\mu
+
4\exp\left(\lambda\norm{\mu}_{M_\mu}^2\right)
\exp\left(2\lambda\eps^2\norm{\mu}_{M_\mu}\right)
\lambda^2\eps^4\norm{\mu}_{M_\mu}^2.
\]
\end{proof}

\begin{proof}[Proof of \Cref{lem:pot_drop}]
We want to use \Cref{lem:potential} for $\mu=\mu_i^t(x,s)$ and $M=\nabla^2\phi_i(x_i)^{-1}$. The stability of $M$ follows simply by self-concordance, so we analyze the effect of changing $x$ and $s$ on $\mu$.
Recall $x^\new=x-\bar\delta_x$, $s^\new=s-\bar\delta_s$,
and $\mu_i(x,s)=\frac{s_i}{t}+\nabla\phi_i(x_i)$.
Then, by the definition of $\mu$ and self-concordance of $\phi_i$,
we have
\begin{align*}
\mu_i^t\left(x^\new,s^\new\right)
&=
\mu_i^t\left(x,s\right) -\frac{\bar\delta_{s,i}}{t}-\nabla^2\phi_i(x_i)\bar\delta_{x,i}+\frac{\norm{\bar\delta_{x,i}}_{\nabla^2\phi_i\left(x_i\right)}^2}{1-\norm{\bar\delta_{x,i}}_{\nabla^2\phi_i\left(x_i\right)}}\\
&\le
\mu_i^t\left(x, s\right) - \frac{\bar\delta_{s,i}}{t}-\nabla^2\phi_i\left(x_i\right)\bar\delta_{x,i}+2\frac{\eps}{C_{center}^2\lambda},
\end{align*}
where the second inequality follows from \Cref{lem:stability}.

We now calculate the change:
\begin{align*}
\norm{\E\left[\frac{\bar\delta_{s,i}}{t}+\nabla^2\phi_i(x_i)\bar\delta_{x,i}\right]}
_{\nabla^2\phi_i\left(x_i\right)^{-1}}
&=
\norm{\nabla^2\phi_i\left(x_i\right)^{-1/2}
\left(\frac{\bar\delta_{s,i}}{t}+\nabla^2\phi_i\left(x_i\right)\E\left[\bar\delta_{x,i}\right]\right)}_2\\
&\le
\norm{\nabla^2\phi_i(x_i)^{-1/2}\frac{\bar\delta_{s,i}}{t}}_2
+
\norm{\nabla^2\phi_i(x_i)^{-1/2}\nabla^2\phi_i\left(x_i\right)
\E\left[\bar\delta_{x,i}\right]}_2\\
&\lesssim
\norm{\delta_1}_2
+
\norm{\nabla^2\phi_i(x_i)^{1/2}\E\left[\bar\delta_{x,i}\right]}_2\\
&\le
7\alpha\eps
\end{align*}
where the final inequality comes from \Cref{lem:step_size} and \Cref{lem:stability}.
Then, applying \Cref{lem:potential} to each block and summing, we have
\begin{align*}
\Psi^t\left(x^\new,s^\new\right)
&\le
\Psi^t\left(x,s\right)\\
&\qquad
-2\sum_{i=1}^m\exp\left(\lambda\gamma_i^t\left(x,s\right)\right)
\lambda\left(\frac{\bar\delta_{s,i}}{t}
  +\nabla^2\phi_i\left(x_i\right)\bar\delta_{x,i}\right)^\top 
\nabla^2\phi_i\left(x_i\right)^{-1}\mu_i^t\left(x,s\right)\\
&\qquad \qquad
+4\sum_{i=1}^m
\exp\left(\lambda\gamma_i^t\left(x,s\right)\right)
\exp\left(2\lambda\eps\gamma_i^t\left(x,s\right)^{1/2}\right)
\lambda^2\eps^2\gamma_i^t\left(x,s\right).
\end{align*}
As $\gamma_i^t(x,s)\le\eps^2$ by centrality of $x$ and $s$ (\Cref{def:wellcenter}),
the third term is on the scale of $\eps^4n^\eps$ and can be ignored.

We consider the first order term, without the scaling for now:
\begin{align*}
&\E\left[\left(\frac{\bar\delta_{s,i}}{t}+\nabla^2\phi_i\left(x_i\right)\bar\delta_{x,i}\right)^\top 
\nabla^2\phi_i\left(x_i\right)^{-1}\mu_i^t\left(x,s\right)\right]\\
&\qquad=
\left(\nabla^2\phi_i\left(x_i\right)^{1/2}g
-
AH^{-1}\left(A^\top x -b\right)\right)^\top
\nabla^2\phi_i\left(x_i\right)^{-1}\mu_i^t\left(x,s\right)\\
&\qquad=
g^\top \nabla^2 \phi_i\left(x_i\right)^{-1/2} \mu_i^t\left(x,s\right)-\left(\nabla^2\phi_i\left(x_i\right)^{-1/2}AH^{-1}\left(A^\top x-b\right)\right)^\top\nabla^2\phi_i\left(x_i\right)^{-1/2}\mu_i^t\left(x,s\right)\\
&\qquad\ge
\frac{\alpha\exp\left(\lambda \gamma_i^t\left(\xbar,\xbar\right)\right)}{\left(\sum_{j=1}^m \exp\left(2\lambda\gamma_j^t\left(\xbar,\sbar\right)\right)\right)^{1/2}}
\cdot
\mu_i^t\left(\xbar,\sbar\right)^\top
\nabla^2\phi_i\left(\xbar_i\right)^{-1/2}\nabla^2\phi_i\left(x_i\right)^{-1/2}
\mu_i^t\left(x,s\right)-\norm{\delta_2}_2\gamma_i^t\left(x,s\right)^{1/2}\\
&\qquad\gtrsim
\frac{\alpha\exp\left(\lambda \gamma_i^t\left(\xbar,\sbar\right)\right)}
{\left(\sum_{j=1}^m \exp\left(2\lambda\gamma_j^t\left(\xbar,\sbar\right)\right)\right)^{1/2}}
\cdot \mu_i^t\left(\xbar,\sbar\right)^\top
\nabla^2\phi_i\left(\xbar_i\right)^{-1}\mu_i^t\left(\xbar,\sbar\right)-\norm{\delta_2}_2\gamma_i^t\left(x,s\right)^{1/2}\\
&\qquad\ge
\frac{\alpha\exp\left(\lambda \gamma_i^t\left(\xbar,\sbar\right)\right)}
{\left(\sum_{j=1}^m \exp\left(2\lambda\gamma_j^t\left(\xbar,\sbar\right)\right)\right)^{1/2}}
\cdot\gamma_i^t\left(\xbar,\sbar\right) - \alpha\eps^2.
\end{align*}
Thus the new potential can be bounded by
\[ \Psi^t\left(x^\new,s^\new\right)
\le
\Psi^t\left(x,s\right)
-
2\alpha\lambda\frac{\sum_{i=1}^m
\exp\left(2\lambda \gamma_i^t\left(\xbar,\sbar\right)\right)
\cdot
\gamma_i^t\left(\xbar,\sbar\right)}
{\left(\sum_{j=1}^m \exp\left(2\lambda\gamma_j^t\left(\xbar,\sbar\right)\right)\right)^{1/2}}.
\]
Consider the numerator. Let $\eps_0=\eps^2/2C_{center}$ so that $\exp(2\lambda\eps_0)=n$,
where recall $\lambda=C_{center} \log n/\eps^2$. Then
\begin{align*} \sum_{i=1}^m\exp\left(2\lambda\gamma_i^t\left(\xbar,\sbar\right)\right)
\cdot
\gamma_i^t\left(\xbar,\sbar\right)
&\ge
\sum_{i:\gamma_i^t\left(\xbar,\sbar\right)\ge\eps_0}
\eps_0\exp\left(2\lambda\gamma_i^t\left(\xbar,\sbar\right)\right)\\
&=
\eps_0\sum_{i=1}^m\exp\left(2\lambda\gamma_i^t\left(\xbar,\sbar\right)\right)
-
\eps_0\sum_{i:\gamma_i^t\left(\xbar,\sbar\right)<\eps_0}
\exp\left(2\lambda\gamma_i^t\left(\xbar,\sbar\right)\right)\\
&\ge
\eps_0\sum_{i=1}^m\exp\left(2\lambda\gamma_i^t\left(\xbar,\sbar\right)\right)
-\eps_0mn.
\end{align*}
Thus
\begin{align*}
\frac{\sum_{i=1}^m\exp\left(2\lambda \gamma_i^t \left(\xbar,\sbar\right)\right)
\cdot \gamma_i^t\left(\xbar,\sbar\right)}
{\left(\sum_{j=1}^m \exp\left(2\lambda\gamma_j^t\left(\xbar,\sbar\right)\right)\right)^{1/2}}
&\ge
\frac{\eps_0\sum_{i=1}^m\exp(2\lambda\gamma_i^t(\xbar,\sbar))}{\left(\sum_{j=1}^m \exp(2\lambda\gamma_j^t(\xbar,\sbar))\right)^{1/2}}-\frac{\eps_0mn}
{\left(\sum_{j=1}^m \exp(2\lambda\gamma_j^t(\xbar,\sbar))\right)^{1/2}}\\
&\ge\eps_0\left(\sum_{i=1}^m
\exp\left(2\lambda\gamma_i^t\left(\xbar,\sbar\right)\right)\right)^{1/2}
-\eps_0\sqrt{m}n,
\end{align*}
so, recalling $2\lambda\eps_0=\log n$,
\begin{equation}
\Psi^t\left(x^\new,s^\new\right)\le\Psi^t\left(x,s\right)
-\eps_0\left(\sum_{i=1}^m\exp\left(2\lambda\gamma_i^t\left(\xbar,\sbar\right)\right)\right)^{1/2}
+
\alpha\sqrt{m}n\log n.
\end{equation}

We now deal with the effects of approximating $x$ and $s$ with $\xbar$ and $\sbar$.
By \Cref{lem:gamma_approx}, we have $\gamma_i^t(\xbar,\sbar)\ge\gamma_i^t(x,s)-10\beta\eps$.
Recall $\beta=10\alpha=10\eps/(C_K\lambda)$, so we have
\begin{align*}
\exp\left(2\lambda\gamma_i^t\left(\xbar,\sbar\right)\right)
&\ge
\exp\left(2\lambda\gamma_i^t\left(x,s\right)\right)
\exp\left(-20\lambda\beta\eps\right)\\
&\ge
\exp\left(2\lambda\gamma_i^t\left(x,s\right)\right)
\exp\left(-200\eps^2/C_K\right),
\end{align*}
so our entire sum is reduced only by a negligible constant factor for small enough $\eps$.
That is, for some $\eps_1$,
\begin{equation}\label{eq:potential_no_t}
\Psi^t\left(x^\new,s^\new\right)
\le
\Psi^t\left(x,s\right)
-\eps_1\left(\sum_{i=1}^m\exp\left(2\lambda\gamma_i^t\left(x,s\right)\right)\right)^{1/2}
+
\alpha\sqrt{m}n\log n.
\end{equation}

We now consider the effects of taking $t\gets(1-\eta)t$.
We have
\[
\mu_i^{\that}\left(x^\new,s^\new\right)
=
\mu_i^t\left(x^\new,s^\new\right)
+
\eta\frac{s_i}{t} + O\left(\eta^2\frac{s_i}{t}\right).
\]
The $O(\eta^2)$ term is on the scale of $\alpha^2\eps^2$ and can be ignored.
We rewrite $\eta\frac{s_i}{t}$ as follows
\[
\eta\frac{s_i}{t}
=
\eta\mu_i^t\left(x^\new,s^\new\right)-\eta\nabla\phi_i\left(x_i\right).
\]
Then let
\[
\delta_{\mu,i}
\coloneqq
\eta\mu_i^t\left(x^\new,s^\new\right)
-
\eta\nabla\phi_i\left(x_i\right).
\]
We again analyze the effect on $\gamma_i^t(x,s)$.
Following the proof of \Cref{lem:potential},
\begin{align*}
\gamma_i^{\that}\left(x^\new,s^\new\right)
&=
\gamma_i^t\left(s,x\right)
+2\delta_{\mu,i}^\top\nabla^2\phi_i\left(x_i\right)^{-1}\mu_i^t(x,s)
+\delta_{\mu,i}^\top\nabla^2\phi_i\left(x_i\right)^{-1}\delta_{\mu,i}
+O\left(\eps^2\right)\\
&\approx_{\eps^2}
\gamma_i^t\left(s,x\right)+2\delta_{\mu,i}^\top\nabla^2\phi_i\left(x_i\right)^{-1}\mu_i^t\left(x,s\right)\\
&=
\gamma_i^t\left(s,x\right)
+2\left(\eta\mu_i^t\left(x^\new,s^\new\right)-\eta\nabla\phi_i\left(x_i\right)\right)^\top
\nabla^2\phi_i\left(x_i\right)^{-1}\mu_i^t\left(x,s\right)\\
&\approx_{\eps^2}
\gamma_i^t\left(s,x\right)+2\eta\mu_i^t\left(x,s\right)^\top
\nabla^2\phi_i\left(x_i\right)^{-1}\mu_i^t\left(x,s\right)
-2\eta\nabla\phi_i\left(x_i\right)^\top\nabla^2\phi_i\left(x_i\right)^{-1}\mu_i^t\left(x,s\right)\\
&\le
\gamma_i^t\left(s,x\right)
+2\eta\gamma_i^t\left(x,s\right)
+2\eta\norm{\nabla\phi_i\left(x_i\right)}_{\nabla^2\phi_i\left(x_i\right)^{-1}}
\norm{\mu_i^t\left(x,s\right)}_{\nabla^2\phi_i\left(x_i\right)^{-1}}\\
&\le
\gamma_i^t\left(x,s\right) + \left(2\eta+2\eta\sqrt{\nu_i}\right)
\gamma_i^t\left(x,s\right)^{1/2}\\
&\le
\gamma_i^t\left(x,s\right)
+
3\eta\sqrt{\nu_i}\gamma_i^t\left(x,s\right)^{1/2}.
\end{align*}
We again use \Cref{lem:exp_analysis} on each block and take the sum.
Recall we chose $\eta=\frac{\eps\alpha}{C_{center}\sqrt{\nu}}$;
thus the quadratic term is again on the scale of $\eps^3$.
To the first order, our change is
\begin{align*}        \sum_{i=1}^m\exp\left(\lambda\gamma_i^t\left(x,s\right)\right)
3\lambda\eta\sqrt{\nu_i}\gamma_i^t\left(x,s\right)^{1/2}
&=
\frac{3\lambda\alpha\eps}{C_{center}}
\sum_{i=1}^m\exp\left(\lambda\gamma_i^t\left(x,s\right)\right)
\gamma_i^t\left(x,s\right)^{1/2}\frac{\sqrt{\nu_i}}{\sqrt{\nu}}\\
&\le
\frac{3\lambda\alpha\eps^2}{C_{center}}
\sum_{i=1}^m\exp\left(\lambda\gamma_i^t\left(x,s\right)\right)
\frac{\sqrt{\nu_i}}{\sqrt{\nu}}\\
&\le
\frac{3\lambda\alpha\eps^2}{C_{center}}
\left(\sum_{i=1}^m\exp\left(2\lambda\gamma_i^t\left(x,s\right)\right)\right)^{1/2}
\left(\sum_{i=1}^m\frac{\nu_i}{\nu}\right)^{1/2}\\
&\le
\frac{3\lambda\alpha\eps}{C_{center}}
\cdot
\left(\sum_{i=1}^m\exp\left(2\lambda\gamma_i^t\left(x,s\right)\right)\right)^{1/2}
\end{align*}
Combining this with \eqref{eq:potential_no_t} gives us that
\begin{align*}
\Psi^{\that}\left(x^\new,s^\new\right)
&\le
\Psi^t\left(x,s\right)
-\eps_1\left(\sum_{i=1}^m\exp\left(2\lambda\gamma_i^t\left(x,s\right)\right)\right)^{1/2}
\\ & \qquad
+\frac{3\lambda\alpha\eps}{C_{center}}
\cdot
\left(\sum_{i=1}^m\exp\left(2\lambda\gamma_i^t\left(x,s\right)\right)\right)^{1/2}
+\alpha\sqrt{m}n\log n \\
&\le
\left(1 - \frac{\eps\alpha}{C_{center}^2\sqrt{\nu}}\right)
\Psi^t\left(x, s\right) + O\left(n^2\right),
\end{align*}
where we used
$\sqrt{m} (\sum_{i=1}^m\exp(2\lambda\gamma_i^t(x,s)))^{1/2}
\ge
\Psi^t\left(x, s\right)$
by Cauchy-Schwarz, or approximation ratio between $\ell_{2}$ and $\ell_{1}$ norms.
\end{proof}

\begin{lemma}[Basic Step Size Bounds]
\label{lem:step_size}
In \Cref{alg:shortstep} we have that 
\begin{enumerate}
\item $\|g\|_2\le\alpha\eps$,\label{item:step_size_1}
\item $\|\delta_1\|_2\le\alpha\eps\exp(\eps)$, and\label{item:step_size_2}
\item $\|\delta_2\|_2\le\alpha\eps\exp(3\eps/2)$.\label{item:step_size_3}
\end{enumerate}
\end{lemma}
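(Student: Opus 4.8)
The plan is to verify all three bounds by direct computation, using only \Cref{def:gradient}, the spectral approximation $H \approx_{\alpha} A^\top \g^2\Phi(\xbar)^{-1}A$ that is built into \Cref{alg:shortstep}, the well-centeredness hypotheses of \Cref{def:wellcenter}, and the fact that $\alpha$ and $\beta = 10\alpha$ are far below $\eps$ (recall $\alpha = \eps C_K^{-1}\lambda^{-1}$ with $\lambda = C_{center}\eps^{-2}\log n$, so in particular $\alpha \le \eps$).

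For the first bound I would start from $g = \alpha g^t(\xbar,\sbar)$ and, abbreviating $\gamma_i := \gamma_i^t(\xbar,\sbar)$, note that each block of \Cref{def:gradient} satisfies $\|g_i^t(\xbar,\sbar)\|_2^2 = \exp(2\lambda\gamma_i)\gamma_i / \sum_{j=1}^m \exp(2\lambda\gamma_j)$, since $\|\g^2\phi_i(\xbar_i)^{-1/2}\mu_i^t(\xbar,\sbar)\|_2^2 = \|\mu_i^t(\xbar,\sbar)\|_{\g^2\phi_i(\xbar_i)^{-1}}^2 = \gamma_i$. Summing over $i$ exhibits $\|g^t(\xbar,\sbar)\|_2^2$ as a convex combination of the $\gamma_i$, hence as at most $\max_i \gamma_i$. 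Then \Cref{lem:gamma_approx} and the centrality part of \Cref{def:wellcenter} give $\gamma_i \le \gamma_i^t(x,s) + 10\beta\eps \le \eps^2 + 10\beta\eps$, and since $\beta \ll \eps$ this is at most $\eps^2$ up to a $1+o(1)$ factor, so $\|g\|_2 \le \alpha\eps$.

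For the second and third bounds I would set $B := \g^2\Phi(\xbar)^{-1/2}A$, so that $A^\top\g^2\Phi(\xbar)^{-1}A = B^\top B$ and the hypothesis on $H$ reads $e^{-\alpha}B^\top B \preceq H \preceq e^\alpha B^\top B$, which implies $H^{-1}(B^\top B)H^{-1} \preceq e^{2\alpha}(B^\top B)^{-1}$. Consequently, for any vector $v$,
\[
\|B H^{-1} v\|_2^2 = v^\top H^{-1}(B^\top B)H^{-1}v \le e^{2\alpha}\,\|v\|_{(B^\top B)^{-1}}^2 .
\]
Applying this to $\delta_1 = B H^{-1}(B^\top g)$ with $v = B^\top g$, and using $\|B^\top g\|_{(B^\top B)^{-1}}^2 = g^\top B(B^\top B)^{-1}B^\top g \le \|g\|_2^2$ (orthogonal projection onto $\mathrm{col}(B)$), gives $\|\delta_1\|_2 \le e^\alpha\|g\|_2 \le \alpha\eps\exp(\eps)$ by the first bound and $\alpha\le\eps$. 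Applying it to $\delta_2 = B H^{-1}(A^\top x - b)$ with $v = A^\top x - b$ gives $\|\delta_2\|_2 \le e^\alpha\,\|A^\top x-b\|_{(A^\top\g^2\Phi(\xbar)^{-1}A)^{-1}}$; I would then transfer from $\g^2\Phi(\xbar)$ to $\g^2\Phi(x)$ via self-concordance (since $\xbar$ $\beta$-approximates $x$, one has $A^\top\g^2\Phi(\xbar)^{-1}A \approx_{O(\beta)} A^\top\g^2\Phi(x)^{-1}A$, hence the same for the inverses) and invoke primal feasibility $\|A^\top x - b\|_{(A^\top\g^2\Phi(x)^{-1}A)^{-1}} \le \alpha\eps$ from \Cref{def:wellcenter}; since $\alpha,\beta \ll \eps$, the accumulated factor $e^{\alpha + O(\beta)}$ stays below $\exp(3\eps/2)$, yielding $\|\delta_2\|_2 \le \alpha\eps\exp(3\eps/2)$.

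The computations are routine, so the hard part will be only the constant bookkeeping: making the self-concordance transfer from $\g^2\Phi(\xbar)$ to $\g^2\Phi(x)$ in the $\delta_2$ bound quantitative (this is where a stability estimate such as \Cref{lem:stability} enters), and confirming that the small factors swept into ``$\ll$'' genuinely fit inside the stated $\exp(\eps)$ and $\exp(3\eps/2)$ slacks for the concrete choices $\alpha = \eps C_K^{-1}\lambda^{-1}$ and $\beta = 10\alpha$.
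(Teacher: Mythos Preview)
Your proposal is correct and follows essentially the same route as the paper: blockwise expansion of $\|g\|_2^2$ for item~1, the spectral bound $H^{-1}(B^\top B)H^{-1}\preceq e^{2\alpha}(B^\top B)^{-1}$ combined with the projection estimate for items~2 and~3, and a self-concordance transfer from $\g^2\Phi(\xbar)$ to $\g^2\Phi(x)$ in item~3. One small remark: the Hessian transfer in item~3 does not require \Cref{lem:stability}; it follows directly from the $\beta$-approximation property of \Cref{def:epsapprox} together with the basic self-concordance inequality (item~1 of \Cref{def:sc}), which already gives $\g^2\phi_i(\xbar_i)\approx_{O(\beta)}\g^2\phi_i(x_i)$ blockwise.
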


\begin{proof}
(\cref{item:step_size_1})
Recall $g=\alpha g^t(\xbar,\sbar)$, as defined in \Cref{def:gradient}. Then
\begin{align*}
\norm{g}_2^2
&=
\sum_{i=1}^m\norm{\alpha g_i}_2^2\\
&=
\alpha^2\sum_{i=1}^m
\frac{\exp\left(2\lambda \gamma_i^t\left(\xbar,\sbar\right)\right)
\cdot
\norm{\g^2\phi\left(\xbar_i\right)^{-1/2}\mu_i^t\left(\xbar,\sbar\right)}_2^2}
{\sum_{i=1}^m \exp\left(2\lambda\gamma_i^t\left(\xbar,\sbar\right)\right)}\\
&=
\alpha^2\frac{\sum_{i=1}^m \exp\left(2\lambda\gamma_i^t\left(\xbar,\sbar\right)\right)
\cdot\gamma_i^t\left(\xbar,\sbar\right)}
{\sum_{i=1}^m \exp\left(2\lambda\gamma_i^t\left(\xbar,\sbar\right)\right)}\\
&\lesssim
\alpha^2\eps^2
\end{align*}
where we used the fact that $x$ and $s$ are centered (\Cref{def:wellcenter}), and that $\xbar$ and $\sbar$ are $\beta$-approximations of $x$ and $s$ (\Cref{lem:gamma_approx}).
    
(\cref{item:step_size_2})
$\delta_1 = \g^2 \Phi(\xbar)^{-1/2}AH^{-1}A^\top\nabla^2\Phi(\xbar)^{-1/2} g$
and
$H \approx_{\alpha} A^\top \g^2 \Phi(\xbar)^{-1} A$
combine to give
\[
\norm{\delta_1}_2
=
\norm{H^{-1}A^\top\nabla^2\Phi(\xbar)^{-1/2} g}_{A^\top \g^2 \Phi(\xbar)^{-1} A}.
\]
Applying $H \approx_{\alpha} A^\top \g^2 \Phi(\xbar)^{-1} A$
and $\alpha\le\eps$ twice,
\begin{align*} 
\norm{H^{-1}A^\top\nabla^2\Phi(\xbar)^{-1/2} g}_{A^\top \g^2 \Phi\left(\xbar\right)^{-1} A}
&\le
\exp\left(\eps/2\right)
\norm{A^\top\nabla^2\Phi\left(\xbar\right)^{-1/2} g}_{H^{-1}}\\
&\le
\exp\left(\eps\right)
\norm{g}_{\g^2 \Phi\left(\xbar\right)^{-1/2}
A\left(A^\top \g^2 \Phi(\xbar)^{-1} A\right)^{-1}
A^\top\nabla^2\Phi\left(\xbar\right)^{-1/2}}\\
        &\le\exp(\eps)\|g\|_2 \le\alpha\eps\exp(\eps)
    \end{align*}
    where the last inequality comes from $\g^2 \Phi(\xbar)^{-1/2}A(A^\top \g^2 \Phi(\xbar)^{-1} A)^{-1}A^\top\nabla^2\Phi(\xbar)^{-1/2}$ being a projection matrix.

(\cref{item:step_size_3})
Next, recall $\delta_2 =\g^2\Phi(\xbar)^{-1/2}AH^{-1}(A^\top x - b)$ and $\xbar\approx_\beta x$. Then
\[
\norm{\delta_2}_2
=
\norm{H^{-1}\left(A^\top x - b\right)}_{A^\top \g^2 \Phi\left(\xbar\right)^{-1} A}.
\]
$\xbar\approx_\beta x$ and $\beta\le\eps$ tells us
$A^\top \g^2 \Phi(\xbar)^{-1} A\approx_\eps A^\top \g^2 \Phi(x)^{-1} A$,
and so we also have $H \approx_{2\eps} A^\top \g^2 \Phi(x)^{-1} A$.
Applying these,
\begin{align*}
\norm{H^{-1}\left(A^\top x - b\right)}
_{A^\top \g^2 \Phi\left(\xbar\right)^{-1} A}
&\le
\exp\left(\frac{\eps}{2}\right)\norm{A^\top x - b}_{H^{-1}}\\
&\le
\exp\left(\frac{3\eps}{2}\right)
\norm{A^\top x - b}_{(A^\top \g^2 \Phi\left(x\right)^{-1} A)^{-1}}\\
&\le
\alpha\eps\exp\left(\frac{3\eps}{2}\right),
\end{align*}
where the last inequality comes from $(x,s)$ being centered (\Cref{def:wellcenter}).
\end{proof}

\begin{lemma}\label{lem:stability}
    In \Cref{alg:shortstep} we have
    \begin{itemize}
        \item $\|\nabla^2\Phi(\xbar)^{1/2}\E[\delta_x]\|_2\le6\alpha\eps$,
        \item $\|\nabla^2\Phi(x)^{1/2}\bar{\delta}_x\|_{\infty,2}\lesssim\eps/(C^2\lambda)$ with high probability,
        \item $\|\E[\nabla^2\Phi(x)\bar{\delta}_x^2]\|_2\lesssim\alpha\eps$.
    \end{itemize}
\end{lemma}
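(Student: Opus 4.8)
The plan is to unfold the (random) step $\bar{\delta}_x = \g^2\Phi(\xbar)^{-1/2}(g - R\delta_r)$ produced by \cref{alg:shortstep}, where $\delta_r = \delta_1 + \delta_2$ (we read the ``$\delta_x$'' appearing in the first and third bullets as this $\bar{\delta}_x$, since that is what the lemma is used for in the proof of \cref{lem:pot_drop}), and to combine three ingredients: the valid-sample properties of \cref{def:valid}, the $\ell_2$ step-size estimates $\norm{g}_2,\norm{\delta_1}_2,\norm{\delta_2}_2 \le \alpha\eps e^{O(\eps)}$ from \cref{lem:step_size}, and self-concordance. For the last I will repeatedly use that $\xbar \approx_\beta x$ with $\beta = 10\alpha$ (\cref{def:epsapprox}), so $\g^2\phi_i(\xbar_i) \approx_{O(\alpha)} \g^2\phi_i(x_i)$ on each block and changing metric costs only a $1+O(\alpha)$ factor. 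The first bullet is then immediate: $R$ valid for $\delta_r$ means $\E[R]=I$, hence $\E[\bar{\delta}_x] = \g^2\Phi(\xbar)^{-1/2}(g-\delta_1-\delta_2)$, and multiplying by $\g^2\Phi(\xbar)^{1/2}$ and using the triangle inequality with \cref{lem:step_size} gives $\norm{\g^2\Phi(\xbar)^{1/2}\E[\bar{\delta}_x]}_2 = \norm{g-\delta_1-\delta_2}_2 \le \alpha\eps(1 + e^{\eps} + e^{3\eps/2}) \le 6\alpha\eps$ for $\eps<1/80$.

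For the second bullet I will split $R\delta_r = \delta_r + (R\delta_r - \delta_r)$, so $\bar{\delta}_x = \g^2\Phi(\xbar)^{-1/2}(g-\delta_r) - \g^2\Phi(\xbar)^{-1/2}(R\delta_r - \delta_r)$. The first summand is the deterministic vector of the first bullet; changing metric to $\g^2\Phi(x)$ and using $\norm{\cdot}_{\infty,2} \le \norm{\cdot}_2$, its $\norm{\cdot}_{\infty,2}$ norm is at most $(1+O(\alpha))\,6\alpha\eps$. For the second summand, the Maximum property of \cref{def:valid} gives $\norm{R\delta_r - \delta_r}_\infty \le \alpha\norm{\delta_r}_2/C_{var}^2$ with high probability; since each block has at most $C_K$ coordinates, its $\ell_2$ norm is at most $\sqrt{C_K}$ times this, and with $\norm{\delta_r}_2 \le \norm{\delta_1}_2 + \norm{\delta_2}_2 \le 3\alpha\eps$ this summand contributes $\ls \sqrt{C_K}\,\alpha^2\eps/C_{var}^2$. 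Recalling $\alpha = \eps/(C_K\lambda)$, the total is $\ls \eps^2/\lambda + \sqrt{C_K}\,\alpha^2\eps/C_{var}^2 \ls \eps/(C^2\lambda)$ for a suitable absolute constant $C$; I will check this is strong enough downstream in \cref{lem:pot_drop}, where only the quadratically small self-concordance correction $\norm{\bar{\delta}_{x,i}}_{\g^2\phi_i(x_i)}^2$ needs to be $\ls \eps/(C_{center}^2\lambda)$.

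For the third bullet I will read $\g^2\Phi(x)\bar{\delta}_x^2$ in the block sense as the coordinate-wise square of $\g^2\Phi(x)^{1/2}\bar{\delta}_x$ (which coincides with the literal $\g^2\Phi(x)\bar{\delta}_x^{\odot 2}$ in the diagonal/LP case). Writing $P_i := \g^2\phi_i(x_i)^{1/2}\g^2\phi_i(\xbar_i)^{-1/2}$, which has operator norm $1+O(\alpha)$, and using that $R$ is block-constant, so $\g^2\phi_i(x_i)^{1/2}\bar{\delta}_{x,i} = P_i(g_i - R_{ii}\delta_{r,i})$, an expansion of each coordinate's square with $\E[R_{ii}]=1$ gives $\E[(\g^2\phi_i(x_i)^{1/2}\bar{\delta}_{x,i})^{\odot 2}] = (P_i(g_i - \delta_{r,i}))^{\odot 2} + \Var[R_{ii}](P_i\delta_{r,i})^{\odot 2}$. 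I bound the first term by $\norm{(P_i(g_i-\delta_{r,i}))^{\odot 2}}_2 = \norm{P_i(g_i-\delta_{r,i})}_4^2 \le (1+O(\alpha))\norm{g_i - \delta_{r,i}}_2^2$ and sum over blocks as in the first bullet to get a total at most $(1+O(\alpha))\norm{g-\delta_1-\delta_2}_2^2 \le (6\alpha\eps)^2 \ll \alpha\eps$. For the variance term the Variance property gives $\Var[R_{ii}(\delta_r)_k] \le \alpha\abs{(\delta_r)_k}\norm{\delta_r}_2/C_{var}^2$ for every coordinate $k$ of block $i$, so dividing by $(\delta_r)_k^2$ at the largest-magnitude coordinate and using $n_i \le C_K$ gives $\Var[R_{ii}] \ls \sqrt{C_K}\,\alpha\norm{\delta_r}_2/(C_{var}^2\norm{\delta_{r,i}}_2)$; block $i$ then contributes $\ls \sqrt{C_K}\,\alpha\norm{\delta_r}_2\norm{\delta_{r,i}}_2/C_{var}^2$, and summing over blocks yields $\ls \sqrt{C_K}\,\alpha\norm{\delta_r}_2^2/C_{var}^2 \le \sqrt{C_K}\,\alpha(3\alpha\eps)^2/C_{var}^2 \ll \alpha\eps$. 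Adding the two terms closes the third bullet.

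I expect the main obstacle to be the second-moment bookkeeping in the third bullet together with pinning down the right block-wise reading of $\g^2\Phi(x)\bar{\delta}_x^2$: the literal matrix-times-Hadamard-square reading would let the (uncontrolled) condition numbers of the individual Hessian blocks leak in, so the intended reading must be $(\g^2\Phi(x)^{1/2}\bar{\delta}_x)^{\odot 2}$. Everything else is a mechanical combination of \cref{def:valid}, \cref{lem:step_size}, and self-concordance, modulo tracking the hierarchy among the constants $\eps$, $C_{var}$, $C_K$, and $C_{center}$.
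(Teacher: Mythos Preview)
Your proposal is correct and follows essentially the same route as the paper: use $\E[R]=I$ and \cref{lem:step_size} for the first bullet, split $R\delta_r = \delta_r + (R\delta_r-\delta_r)$ and invoke the Maximum property for the second, and use self-concordance plus the Variance property for the third. The one stylistic difference is in the third bullet: the paper uses the crude pointwise bound $(g-R\delta_r)^{\odot 2} \lesssim g^{\odot 2} + R^2\delta_r^{\odot 2}$ and then applies the Variance condition coordinate-wise to get $\E[R_{ii}^2(\delta_r)_i^2] \le (\delta_r)_i^2 + \alpha|(\delta_r)_i|\norm{\delta_r}_2/C_{var}^2$ directly, whereas you do an exact bias--variance split and then extract $\Var[R_{ii}]$ by dividing; both give the same order and the paper's route avoids the detour through $\Var[R_{ii}]$. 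Your care about the intended reading of $\nabla^2\Phi(x)\bar{\delta}_x^2$ as $(\nabla^2\Phi(x)^{1/2}\bar{\delta}_x)^{\odot 2}$ is warranted---the paper's manipulation $\|\nabla^2\Phi(\xbar)\E[\bar{\delta}_x^2]\|_2 \lesssim \|\E[(g-R\delta_r)^2]\|_2$ tacitly assumes exactly this. One small point: when you write ``summing over blocks'' for the variance term, be explicit that the block contributions combine in $\ell_2$ (so $\sqrt{\sum_i \norm{\delta_{r,i}}_2^2} = \norm{\delta_r}_2$), not in $\ell_1$, to avoid an apparent spurious $\sqrt{m}$.
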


\begin{proof} We prove the three items individually.

\paragraph{Bound on $\|\nabla^2\Phi(\xbar)^{1/2}\E[\delta_x]\|_2$}

By definition, the triangle inequality, then \Cref{lem:step_size} we have
\[
\norm{\nabla^2\Phi\left(\xbar\right)^{-1/2}
\E\left[\delta_x\right]}_2
=
\norm{g-\delta_1-\delta_2}_2
\le
\norm{g}_2 + \norm{\delta_1}_2 + \norm{\delta_2}_2
\le
6\alpha\eps
\]
for $\eps<1/80$.

\paragraph{Bound on $\|\nabla^2\Phi(x)^{1/2}\bar{\delta}_x\|_{\infty,2}$.}

Observe by self-concordance and $\eps$-approximation (\Cref{def:epsapprox})
\begin{multline*} \norm{\nabla^2\Phi\left(x\right)^{1/2}\bar{\delta}_x}_{\infty,2}
=
\norm{\nabla^2\Phi\left(x\right)^{1/2}
\nabla^2\Phi\left(\xbar\right)^{-1/2}
\left(g - R\delta_r\right)}_{\infty,2}\\
\lesssim
\norm{g - R\delta_r}_{\infty,2}
=
\norm{g - \delta_r - \left(R\delta_r - \delta_r\right)}_{\infty,2}\\
\le
\norm{g-\delta_r}_2  + \norm{R\delta_r - \delta_r}_{\infty,2}.
\end{multline*}
The first term is exactly the term bounded above, which is $6\alpha\eps$.
We further have
\[
\norm{R\delta_r-\delta_r}_{\infty,2}
\le\sqrt{C_K}\alpha/C^2
\le\eps/(C^2\lambda)
\]
by the (Maximum) property of \cref{def:valid}, \cref{item:valid_maximum},
where recall that $\alpha=\frac{\eps}{C_K\lambda}$.

\paragraph{Bound on $\|\E[\nabla^2\Phi(x)\bar{\delta}_x^2]\|_2$.}

Again we use self-concordance and $\eps$-approximation:
\begin{align*}
\norm{\E\left[\nabla^2\Phi\left(x\right)\bar{\delta}_x^2\right]}_2 
&\lesssim
\norm{\nabla^2\Phi\left(\xbar\right)\E\left[\bar{\delta}_x^2\right]}_2\\
&\lesssim
\norm{\E\left[g - R\delta_r\right]^2}_2\\
&\lesssim
\norm{g^2}_2
+
\norm{E\left[R^2\delta_r^2\right]}_2.
\end{align*}
The first term can be bounded above by $\alpha\eps$ using \Cref{lem:step_size},
while the second term can be bounded using the Variance condition and \Cref{lem:step_size} again:
\[
\norm{\E\left[R^2\delta_r^2\right]}_2
\le
\norm{\delta_r^2}_2 + \frac{\alpha}{C^2} \norm{\delta_r}_2<\alpha\eps.
\]
\end{proof}

\begin{lemma}[Feasibility analysis]
\label{lem:feasible}
Assume that $(x, s)$ are $\eps$ well-centered at path parameter $t$. Then with high probability,
\[ \|A^\top x^{\new} - b\|_{\left(A^\top \g^2\Phi(x^{\new})^{-1} A\right)^{-1}} \le \alpha^2. \]
\end{lemma}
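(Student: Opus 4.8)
The plan is to follow the feasibility residual through one call to \Cref{alg:shortstep} and show that the leading‑order terms telescope exactly, leaving only an $O(\alpha)$‑relative error that the spectral‑approximation clause of \Cref{def:valid} controls. Abbreviate $\Phi \coloneqq \g^2\Phi(\xbar)$. Since $x^\new = x - \bar\delta_x$ with $\bar\delta_x = \Phi^{-1/2}(g - R\delta_r)$,
\[
A^\top x^\new - b \;=\; (A^\top x - b) \;-\; A^\top \Phi^{-1/2} g \;+\; A^\top \Phi^{-1/2} R\,\delta_r .
\]
Combining the definitions of $\delta_1,\delta_2,\delta_r$ and factoring out $\Phi^{-1/2}AH^{-1}$, one gets $\Phi^{1/2}\delta_r = AH^{-1}w$ with $w \coloneqq A^\top\Phi^{-1/2}g - (A^\top x - b)$ (the relative sign of the two pieces is the one that makes the cancellation below close, reflecting that in \Cref{def:idealstep} the $\delta_1$‑part preserves $A^\top x$ while the $\delta_2$‑part zeroes out the feasibility error). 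In particular $\Phi^{1/2}\delta_r$ lies in the column span of $A$, so $A^\top\Phi^{-1/2}R\delta_r = (A^\top\Phi^{-1/2}R\Phi^{-1/2}A)H^{-1}w$, and under the idealization $R=I$, $H=A^\top\Phi^{-1}A$ this equals $w$; substituting into the display gives $A^\top x^\new - b = (A^\top x - b) - A^\top\Phi^{-1/2}g + w = 0$. So the proof reduces to bounding the error caused by $R \neq I$ and by $H$ being only an $\alpha$‑approximation of $A^\top\Phi^{-1}A$.

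For that I would condition on the high‑probability event that $R$ satisfies the spectral‑approximation clause of \Cref{def:valid}, which after using self‑concordance and $\xbar \approx_\beta x$ to pass from $\g^2\Phi(x)$ to $\Phi$ reads $M_R \approx_{O(\alpha)} M$ with $M_R \coloneqq A^\top\Phi^{-1/2}R\Phi^{-1/2}A$ and $M \coloneqq A^\top\Phi^{-1}A$. Writing
\[
A^\top\Phi^{-1/2}R\,\delta_r - w \;=\; (M_R - M)H^{-1}w \;+\; (MH^{-1} - I)w ,
\]
I would bound both summands in the $M^{-1}$‑norm by $O(\alpha)\norm{w}_{M^{-1}}$: for the first, $\norm{(M_R-M)H^{-1}w}_{M^{-1}} \le \norm{M^{-1/2}(M_R-M)M^{-1/2}}_{\mathrm{op}}\norm{H^{-1}w}_M \le O(\alpha)\norm{H^{-1}w}_M$ using $M_R \approx_{O(\alpha)} M$; for the second, $MH^{-1}-I = (M-H)H^{-1}$, so $\norm{(MH^{-1}-I)w}_{M^{-1}} \le \norm{M^{-1/2}(M-H)M^{-1/2}}_{\mathrm{op}}\norm{H^{-1}w}_M \le O(\alpha)\norm{H^{-1}w}_M$ using $H \approx_\alpha M$; and $\norm{H^{-1}w}_M = O(1)\norm{w}_{M^{-1}}$ again by $H \approx_\alpha M$. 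Hence on this event $\norm{A^\top x^\new - b}_{M^{-1}} = \norm{A^\top\Phi^{-1/2}R\delta_r - w}_{M^{-1}} \le O(\alpha)\norm{w}_{M^{-1}}$. I want to stress that this step uses only the spectral‑approximation clause of \Cref{def:valid} together with the observation $\Phi^{1/2}\delta_r \in \mathrm{col}(A)$; the variance, covariance and maximum clauses are instead what power \Cref{lem:pot_drop} and \Cref{lem:stability}.

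It remains to bound $\norm{w}_{M^{-1}}$ and to replace $M^{-1} = (A^\top\g^2\Phi(\xbar)^{-1}A)^{-1}$ by $(A^\top\g^2\Phi(x^\new)^{-1}A)^{-1}$. For the former, $\norm{w}_{M^{-1}} \le \norm{A^\top\Phi^{-1/2}g}_{M^{-1}} + \norm{A^\top x - b}_{M^{-1}}$; the first term equals $\norm{Pg}_2 \le \norm{g}_2 \le \alpha\eps$ where $P \coloneqq \Phi^{-1/2}AM^{-1}A^\top\Phi^{-1/2}$ is an orthogonal projection and we used \Cref{lem:step_size}, and the second is $\le e^{O(\beta)}\alpha\eps$ by centrality (\Cref{def:wellcenter}) together with $A^\top\g^2\Phi(\xbar)^{-1}A \approx_{O(\beta)} A^\top\g^2\Phi(x)^{-1}A$ (self‑concordance, $\xbar \approx_\beta x$). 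For the norm replacement, \Cref{lem:stability} shows each block of $x$ moves by $\lesssim \eps/(C^2\lambda) \ll 1$ in the local Hessian norm, so by self‑concordance $\g^2\Phi(x^\new) \approx_{O(\eps)} \g^2\Phi(x) \approx_{O(\beta)} \Phi$, whence the quadratic forms on $A^\top x^\new - b$ agree up to an $e^{O(\eps)}$ factor. Putting the pieces together, $\norm{A^\top x^\new - b}_{(A^\top\g^2\Phi(x^\new)^{-1}A)^{-1}} \le e^{O(\eps)}\cdot O(\alpha)\cdot O(\alpha\eps) = O(\alpha^2\eps)$, which is at most $\alpha^2$ after a routine constant chase using $\eps < 1/80$ (and the constants $C_K, C_{center}, C_{var}$ large). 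The main obstacle is making the first‑order cancellation in the first two paragraphs genuinely exact — tracking the signs of $\delta_1,\delta_2$ and the bookkeeping of $\xbar$ versus $x$ — and then isolating the residual so that the spectral‑approximation clause alone suffices to bound it; everything after that is norm‑equivalence via self‑concordance plus the size bounds already established in \Cref{lem:step_size} and \Cref{lem:stability}.
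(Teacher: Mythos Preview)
Your approach is essentially the paper's: show the idealized step ($R=I$, $H=A^\top\Phi^{-1}A$) zeros out $A^\top x - b$ exactly, then bound the residual $(M_RH^{-1}-I)w$ in the $M^{-1}$-norm using only the spectral-approximation clause of \Cref{def:valid} together with $H\approx_\alpha M$, bound $\norm{w}_{M^{-1}}$ via \Cref{lem:step_size} and centrality, and finally swap norms by self-concordance. The only cosmetic difference is that you split $M_RH^{-1}-I$ into $(M_R-M)H^{-1}+(MH^{-1}-I)$ whereas the paper keeps it whole and invokes \Cref{lem:difference_norm}; your sign hedge on $w$ is warranted, as the paper itself has a sign inconsistency between \Cref{alg:shortstep} (where $\delta_r=\delta_1+\delta_2$ and $x^\new=x-\bar\delta_x$) and its proof (which takes $x^*=x+\delta_x^*$).
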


The proof relies on the following fact about matrix approximations.

\begin{lemma}
[\cite{BLNPSSSW20}, Lemma 4.32 in~\url{https://arxiv.org/pdf/2009.01802v2}]
\label{lem:difference_norm}
If $M\approx_\eps N$ for symmetric PD $M,N\in\R^{n\times n}$ and $\eps\in[0,1/2)$ then
\[\|N^{-1/2}(M-N)N^{-1/2}\|_2\le\eps+\eps^2.\]
\end{lemma}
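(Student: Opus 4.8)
\textbf{Proof proposal for \Cref{lem:difference_norm}.}

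The plan is to reduce everything to a statement about a single real number via simultaneous diagonalization. Since $M \approx_\eps N$ means $e^{-\eps} N \preceq M \preceq e^{\eps} N$, and $N$ is symmetric positive definite, I would work in the basis in which $N^{-1/2} M N^{-1/2}$ is diagonalized: let $\lambda_1, \dots, \lambda_n$ be its eigenvalues. The Loewner sandwich translates into the scalar bounds $e^{-\eps} \le \lambda_k \le e^{\eps}$ for every $k$, because $e^{-\eps} N \preceq M \preceq e^{\eps} N$ is equivalent (conjugating by $N^{-1/2}$) to $e^{-\eps} I \preceq N^{-1/2} M N^{-1/2} \preceq e^{\eps} I$.

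Next, observe that $N^{-1/2}(M-N)N^{-1/2} = N^{-1/2} M N^{-1/2} - I$, which in the same eigenbasis is the diagonal matrix with entries $\lambda_k - 1$. Hence its spectral norm is $\max_k |\lambda_k - 1|$, and it suffices to show that $|\lambda - 1| \le \eps + \eps^2$ whenever $e^{-\eps} \le \lambda \le e^{\eps}$. This is a one-variable inequality: on the upper side, $\lambda - 1 \le e^{\eps} - 1 \le \eps + \eps^2$ for $\eps \in [0, 1/2)$ (using the standard estimate $e^x \le 1 + x + x^2$ valid for $x \le 1$, which certainly holds on $[0,1/2)$); on the lower side, $1 - \lambda \le 1 - e^{-\eps} \le \eps \le \eps + \eps^2$, since $1 - e^{-x} \le x$ for all $x \ge 0$. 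Combining the two cases gives $|\lambda - 1| \le \eps + \eps^2$.

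There is no real obstacle here; the only mild care needed is in verifying the scalar bound $e^{\eps} - 1 \le \eps + \eps^2$ on the stated range $\eps \in [0, 1/2)$ — one can either invoke $e^x \le 1 + x + x^2$ for $x \in [0,1]$ directly, or expand the power series $e^{\eps} - 1 - \eps = \sum_{j \ge 2} \eps^j/j! \le \eps^2 \sum_{j \ge 0} \eps^j / 2 = \frac{\eps^2}{2(1-\eps)} \le \eps^2$ for $\eps < 1/2$. Putting the eigenvalue reduction together with this scalar estimate yields $\|N^{-1/2}(M-N)N^{-1/2}\|_2 = \max_k|\lambda_k - 1| \le \eps + \eps^2$, as claimed.
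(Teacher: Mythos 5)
Your proof is correct. Note that the paper does not actually prove \cref{lem:difference_norm} itself---it imports it as Lemma 4.32 of \cite{BLNPSSSW20}---so there is no in-paper argument to compare against; your self-contained derivation is the natural one. The reduction to eigenvalues of $N^{-1/2}MN^{-1/2}$ via simultaneous conjugation is exactly right (and your reading of $M\approx_\eps N$ as $e^{-\eps}N\preceq M\preceq e^{\eps}N$ is equivalent to the paper's definition, since the relation is symmetric), and the two scalar estimates $e^{\eps}-1\le\eps+\eps^2$ on $[0,1/2)$ and $1-e^{-\eps}\le\eps$ are both verified correctly, the first by your series bound $\sum_{j\ge2}\eps^j/j!\le\eps^2/(2(1-\eps))\le\eps^2$.
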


\begin{proof}(of \Cref{lem:feasible})
Recall $x^\new=x+\nabla^2\Phi(\xbar)^{-1/2}(g-R\delta_r)$ where $\delta_r=\delta_1+\delta_2$. Further, recall that
\begin{align*}
\delta_1 &= \g^2 \Phi\left(\xbar\right)^{-1/2}AH^{-1}A^\top
\nabla^2\Phi\left(\xbar\right)^{-1/2} g \\
\delta_2 &= \g^2\Phi\left(\xbar\right)^{-1/2}AH^{-1}\left(A^\top x - b\right).
\end{align*}
Then we can define the local variable
\[
d\coloneqq A^\top\nabla^2\Phi\left(\xbar\right)^{-1/2}g+A^\top x-b.
\]
and rewrite our step as
\[
\delta_r
=
\nabla^2\Phi\left(\xbar\right)^{-1/2}AH^{-1}d
\]

Now, consider the idealized step $x^*$ where there is no matrix approximation error, i.e.
\[
H= A^\top \g^2 \Phi\left(\xbar\right)^{-1} A,
\]
and no sampling error, i.e. $I=R$. Formally, $x^*\coloneqq x+\delta_x^*$ where $\delta_x^*\coloneqq \nabla^2\Phi(\xbar)^{-1/2}(g-\delta_r^*)$ and $\delta_r^*\coloneqq \nabla^2\Phi(\xbar)^{-1/2}A(A^\top \g^2 \Phi(\xbar)^{-1} A)^{-1}d.$ Now,
\begin{align*}
A^\top x^*
&=
A^\top x+A^\top \nabla^2\Phi\left(\xbar\right)^{-1/2}\left(g-\nabla^2\Phi\left(\xbar\right)^{-1/2}A\left(A^\top \g^2 \Phi\left(\xbar\right)^{-1} A\right)^{-1}d\right)\\
&=
A^\top x+A^\top \nabla^2\Phi\left(\xbar\right)^{-1/2}g-d\\
&=
A^\top x+A^\top \nabla^2\Phi\left(\xbar\right)^{-1/2}g
-A^\top\nabla^2\Phi\left(\xbar\right)^{-1/2}g - \left(A^\top x-b\right)\\
&=b.
\end{align*}
Thus, we see that $x^*$ obeys the linear constraints for feasibility. 
Consequently, it suffices to bound the error induced by matrix approximation error and sampling error, i.e.
\begin{align}{\label{eq:feas_error}}
A^\top x^\new - b
&=
A^\top\left(x^\new-x^*\right)
\notag \\&=
A^\top\nabla^2\Phi\left(\xbar\right)^{-1/2}
\left(\delta_r^* - R\delta_r\right)\notag\\
&=
A^\top\nabla^2\Phi\left(\xbar\right)^{-1/2}
\left(\nabla^2\Phi\left(\xbar\right)^{-1/2}A
\left(A^\top \g^2 \Phi\left(\xbar\right)^{-1} A\right)^{-1}
-R\nabla^2\Phi\left(\xbar\right)^{-1/2}AH^{-1}\right)d
\notag\\&=
\left(I-A^\top\nabla^2\Phi(\xbar)^{-1/2}R
\nabla^2\Phi\left(\xbar\right)^{-1/2}AH^{-1}\right)d
\end{align}
Now, by \Cref{def:valid} item 6 (spectral approximation)
and definition of our algorithm, we have with high probability
\[
A^\top\nabla^2\Phi\left(\xbar\right)^{-1/2}R
\nabla^2\Phi\left(\xbar\right)^{-1/2}A
\approx_\alpha
A^\top\nabla^2\Phi\left(\xbar\right)^{-1}A\approx_\alpha H,
\]
so
\begin{align*}
&\norm{\left(A^\top\nabla^2\Phi\left(\xbar\right)^{-1}A\right)^{-1/2}
\left(I - A^\top\nabla^2\Phi\left(\xbar\right)^{-1/2}R
\nabla^2\Phi\left(\xbar\right)^{-1/2}AH^{-1}\right)H^{1/2}}_2\\
&\quad =
\norm{\left(A^\top\nabla^2\Phi\left(\xbar\right)^{-1}A\right)^{-1/2}
\left(H^{-1/2}-A^\top\nabla^2\Phi\left(\xbar\right)^{-1/2}R
\nabla^2\Phi\left(\xbar\right)^{-1/2}A\right)H^{-1/2}}_2\\
&\quad \le
\exp\left(\alpha\right)
\norm{H^{-1/2}\left(H^{-1/2}
-
A^\top\nabla^2\Phi\left(\xbar\right)^{-1/2}R
\nabla^2\Phi\left(\xbar\right)^{-1/2}A\right)H^{-1/2}}_2\\
&\quad \le
3\alpha.
\end{align*}
where we used \cref{lem:difference_norm}
and the fact that $(2\alpha+4\alpha^2)\exp(\alpha)\le3\alpha$. Consequently, combining with \eqref{eq:feas_error} yields that
\begin{multline*}
\norm{A^\top x^\new - b}
_{\left(A^\top\nabla^2\Phi\left(\xbar\right)^{-1}A\right)^{-1}} \\
\quad =
\norm{\left(A^\top\nabla^2\Phi\left(\xbar\right)^{-1}A\right)^{-1/2}
\left(I A^\top\nabla^2\Phi\left(\xbar\right)^{-1/2}R\nabla^
\Phi\left(\xbar\right)^{-1/2}AH^{-1}\right)H^{1/2}H^{-1/2}d}_2\\
\quad \le
3\alpha\norm{d}_{H^{-1}}
\le
3\alpha \left(\norm{A^\top\nabla^2\Phi(\xbar)^{-1/2}g}_{H^{-1}}
+\norm{A^\top x-b}_{H^{-1}}\right).
\end{multline*}
For the first term, by \Cref{lem:step_size} we have:
\begin{align*}
\norm{A^\top\nabla^2\Phi\left(\xbar\right)^{-1/2}g}_{H^{-1}}
&\le
\exp\left(\frac{\alpha}{2}\right)\norm{g}
_{\g^2 \Phi\left(\xbar\right)^{-1/2}A\left(A^\top \g^2 \Phi\left(\xbar\right)^{-1} A\right)^{-1}
A^\top\nabla^2\Phi\left(\xbar\right)^{-1/2}}\\
&\le\exp\left(\frac{\alpha}{2}\right)\norm{g}_2
\le
4\eps\alpha.
\end{align*}
We bound the second term using the approximate feasibility of our original point:
\[
\norm{A^\top x  - b}_{H^{-1}}
\le
\exp\left(\gamma\right)
\norm{A^\top x - b}_{\left(A^\top \g^2\Phi\left(x\right)^{-1} A\right)^{-1}}
\le
\exp\left(\frac{\gamma}{2}\right)\alpha\eps.
\]
Combining yields that
    \[ \|A^\top x^\new b\|_{(A^\top\nabla^2\Phi(\xbar)^{-1}A)^{-1}} \le 3\alpha(4\alpha\eps+2\beta\eps)\le 20\alpha^2\eps\le0.25\alpha^2. \]
    Finally, by self-concordance and $\eps$-approximation (\Cref{def:epsapprox}),
    \[ \|A^\top x^\new b\|_{(A^\top\nabla^2\Phi(x)^{-1}A)^{-1}}\le\alpha^2.\]
\end{proof}
\subsection{Stability Bounds}

In this section we prove that $x$ changes slowly, i.e., bound the number of times that a coordinate of $x$ may change enough that $\xbar$ must be updated. $s$ also changes slowly, but that follows because $\|\g^2 \Phi(x)^{1/2} \bar{\delta}_s\|_2 \le 1$ by the definition of $\bar{\delta}_s$ and $g$.
For $x$, the proof is more complicated because the random matrix $R$ which is being used to subsample the change at each iteration. However, using a martingale/potential argument based on previous works (see eg. \cite[Lemma 4.44]{BLLSSSW21}) we can argue that there is a stable subsequence of the sequence of $x$ vectors which changes slowly.

\begin{lemma}
\label{lem:xstable}
Let $(x^{(k)}, s^{(k)})$ for $k \in [T]$ be a sequence of points found by repeatedly calling short-step (see \cref{alg:shortstep}). With high probability, there is a sequence of points $\hat{x}^{(k)}$ such that:
\begin{enumerate}
    \item (Nearby) For all $k \in [T]$ and $i \in [m]$ it holds that
    \[ \|\g^2\phi_i(x^{(k)}_i)^{1/2}(x^{(k)}_i - \hat{x}^{(k)}_i)\|_2 \le \alpha/2, \enspace \text{ and} \]
    \item ($\ell_2$-Stability) For all $k \in [T]$ it holds that $\|\hat{x}^{(k+1)} - \hat{x}^{(k)}\|_{\g^2\Phi(x^{(k)})} \le 2\alpha$.
\end{enumerate}
\end{lemma}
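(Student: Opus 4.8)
The plan is to run a Doob--martingale decomposition of the per-step change to $x$, apply a Freedman/Bernstein concentration inequality to the resulting fluctuation martingale, and build $\hat x$ ``lazily'' around its deterministic drift, in the spirit of the stability argument of \cite[Lemma~4.44]{BLLSSSW21} but carried out block-by-block.

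First I would fix a filtration $(\mathcal F_k)$ such that, conditioned on $\mathcal F_{k-1}$, the only randomness in the $k$-th short step is the valid sampling matrix $R^{(k)}$, and decompose $x^{(k)} - x^{(k+1)} = \bar\delta_x^{(k)} = \mu^{(k)} + \xi^{(k)}$ with $\mu^{(k)} := \E[\bar\delta_x^{(k)}\mid\mathcal F_{k-1}]$ and $\xi^{(k)} := \bar\delta_x^{(k)} - \mu^{(k)}$, a martingale difference. Since $\bar\delta_x = \nabla^2\Phi(\xbar)^{-1/2}(g - R\delta_r)$ with $g,\delta_r$ fixed given $\mathcal F_{k-1}$ and $\E[R]=I$, this gives $\nabla^2\Phi(\xbar)^{1/2}\mu^{(k)} = g - \delta_r$ and $\nabla^2\Phi(\xbar)^{1/2}\xi^{(k)} = -(R\delta_r - \delta_r)$, and \Cref{lem:stability} then supplies exactly the three estimates needed: the drift is $\ell_2$-small, $\|\nabla^2\Phi(\xbar)^{1/2}\mu^{(k)}\|_2 \le 6\alpha\eps$; each step is tiny per block, $\|\nabla^2\Phi(x)^{1/2}\bar\delta_x^{(k)}\|_{\infty,2}\lesssim\eps/(C^2\lambda)$ (hence, with the drift bound, so is each $\xi^{(k)}$, which gives the boundedness hypothesis for Freedman); and the per-step second moments are controlled, $\|\E[\nabla^2\Phi(x)\bar\delta_x^2]\|_2\lesssim\alpha\eps$. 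Self-concordance lets me swap $\nabla^2\phi_i(\xbar_i)$, $\nabla^2\phi_i(x^{(k)}_i)$ and $\nabla^2\phi_i(x^{(j)}_i)$ up to $e^{\pm O(\alpha)}$ factors over short windows, so all these bounds transfer between the relevant local-Hessian norms.

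Then I would take $\hat x$ to be the deterministic trajectory with per-block resets: set $\hat x^{(1)} = x^{(1)}$; at each step move $\hat x^{(k+1)}\gets\hat x^{(k)} - \mu^{(k)}$, so that the running per-block error $e^{(k)}_i := x^{(k)}_i - \hat x^{(k)}_i$ equals the accumulated fluctuation $-\sum_{j<k}\xi^{(j)}_i$; and for each block $i$, once the conditional variance $\sum_j \E[\|\nabla^2\phi_i(x_i)^{1/2}\xi^{(j)}_i\|_2^2\mid\mathcal F_{j-1}]$ accrued since block $i$'s last reset exceeds a threshold $\sigma_0^2$ (a small constant times $\alpha^2$), reset block $i$ by setting its coordinates of $\hat x$ equal to those of $x$ at that step. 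For (Nearby), inside any inter-reset window of block $i$ the partial sums of $\nabla^2\phi_i(x_i)^{1/2}\xi^{(j)}_i$ form a martingale with conditional variance $\le\sigma_0^2 = O(\alpha^2)$ and increments $\lesssim\eps/(C^2\lambda)\ll\alpha$, so Freedman's inequality yields $\|\nabla^2\phi_i(x^{(k)}_i)^{1/2}e^{(k)}_i\|_2\le\alpha/2$ with probability $\ge 1-(mT)^{-10}$; a union bound over the $\le mT$ pairs $(i,k)$ gives the first item with high probability.

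For ($\ell_2$-Stability), $\hat x^{(k+1)} - \hat x^{(k)}$ is the drift $-\mu^{(k)}$ plus the snap-corrections of the blocks that reset at step $k$. The drift contributes $\|\nabla^2\Phi(x^{(k)})^{1/2}\mu^{(k)}\|_2\le e^{O(\alpha)}\cdot 6\alpha\eps\le\alpha$; each resetting block contributes at most about $\alpha/2$ (its error just before the reset, plus --- with care about which step each reset is actually performed on --- a negligible piece of $\bar\delta_x^{(k)}$); and the number and aggregate squared norm of blocks that reset at a single step is kept small because every reset consumes $\sigma_0^2 = \Theta(\alpha^2)$ of accumulated conditional variance, while the per-step injected variance is itself small and --- crucially, by the precise form of the (Variance) and block-form conditions in \Cref{def:valid} --- spread over many blocks rather than concentrated on a few. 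Putting these together gives $\|\hat x^{(k+1)} - \hat x^{(k)}\|_{\g^2\Phi(x^{(k)})}\le 2\alpha$. I expect the hard part to be exactly this last balancing act: picking $\sigma_0$ small enough that Freedman delivers the $\alpha/2$ deviation for (Nearby) while exploiting the spread-out structure of the sampling variance so that resets do not pile up on any single step and break the per-step $\ell_2$ budget in ($\ell_2$-Stability), all within the polylogarithmic slack hidden in the constants. The remaining ingredients are direct invocations of \Cref{lem:stability}, routine self-concordance bookkeeping, and the standard observation that it suffices to control the fluctuation martingale rather than $x$ itself.
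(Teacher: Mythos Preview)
Your martingale-plus-resets approach is genuinely different from the paper's, and the step you flag as ``the hard part'' is a real gap that your sketch does not close. A reset of block $i$ contributes up to $\alpha/2$ in the local Hessian norm to $\hat x^{(k+1)}-\hat x^{(k)}$, so if $K$ blocks reset at the same step the $\ell_2$ contribution is $\Theta(\sqrt{K}\,\alpha)$. Your budget argument (each reset consumes $\Theta(\alpha^2)$ variance; total per-step injection is $O(\alpha\eps)$) bounds the \emph{total} number of resets across all steps, not the number at any single step, and ``spread over many blocks'' does not prevent an adaptive sequence from having many blocks accumulate variance in lockstep and cross the threshold together. Nothing in the (Variance) or block-form clauses of \cref{def:valid} rules this out. (The paper's source in fact contains a commented-out martingale proof of this very lemma, abandoned with a note that the authors could not close a related gap.)

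The paper avoids the pile-up entirely by replacing discrete resets with a \emph{soft} correction: it defines a softmax stability potential $\Psi_{\stab}(x,\hat x)=\sum_i \exp(\lambda_{\stab}\|\hat x_i - x_i\|_{\nabla^2\phi_i(x_i)}^2)$ and sets $\hat x^{(k+1)}=\hat x^{(k)}-\E[\bar\delta_x]-\nabla^2\Phi(x^{(k)})^{-1/2}\,\alpha g_{\stab}$, where $g_{\stab}$ is the normalized gradient of $\Psi_{\stab}$. The correction term has $\ell_2$ norm $\le\alpha$ \emph{by construction}, so ($\ell_2$-Stability) is immediate from the drift bound you already have; (Nearby) then follows from a potential-decrease argument structurally identical to \cref{lem:pot_drop}, showing $\E[\Psi_{\stab}]$ stays polynomially bounded, hence every block error is at most $\alpha/2$ with high probability by Markov. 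The point is that the softmax gradient automatically directs a bounded-$\ell_2$ budget toward whichever blocks have drifted farthest, which is exactly the smooth analogue of the staggered-reset mechanism your scheme would need but does not supply.
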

\begin{proof}
    Define $\hat{x}^{(1)}=x^{(1)}$. Define the \textit{stability potential,} analogous to the centrality potential in \eqref{eq:psi}, as
    \[ \Psi_\stab(x,\hat{x})\coloneqq \sum_{i=1}^m\exp\left(\lambda_\stab\gamma_i\right) \]
    for $\gamma_i=\|\hat{x}_i-x_i\|_{\nabla^2\phi_i(x_i)}^2$ and $\lambda_\stab=C\log(n^2)/\alpha$ for sufficiently large $C$.
    
    We take steps to decrease this stability potential following gradient descent. Specifically, our gradient is
\[
g_{\stab,i}^{\left(k\right)}
\coloneqq\frac{\exp\left(\lambda_\stab\gamma_i^{\left(k\right)}\right)
\cdot
\g^2\phi_i\left(x_i^{\left(k\right)}\right)^{1/2}
\left(\hat{x}_i^{\left(k\right)}-x_i^{\left(k\right)}\right)}
{\left(\sum_{i=1}^m\exp\left(2\lambda_\stab\gamma_i^{\left(k\right)}\right)\right)^{1/2}}.
\]
Define $\delta_{\hat{x}}^{(k)}\coloneqq\alpha g_\stab$ and
\[
\hat{x}^{\left(k+1\right)}
\coloneqq
\hat{x}^{\left(k\right)}
-
\E\left[\bar{\delta}_x\right]
-
\g^2\Phi\left(x^{\left(k\right)}\right)^{-1/2}
\delta_{\hat{x}}.
\]
We first show $\ell_2$-stability.
\begin{align*}
\norm{\hat{x}^{\left(k+1\right)} - \hat{x}^{(k)}}
_{\g^2\Phi\left(x^{\left(k\right)}\right)}
&=
\norm{\E\left[\bar{\delta}_x\right]
+\nabla^2\Phi\left(x^{\left(k\right)}\right)^{-1/2} \delta_{\hat{x}}}_{\g^2\Phi\left(x^{\left(k\right)}\right)}\\
&\le
\norm{\E\left[\bar{\delta}_x\right]}_{\g^2\Phi\left(x^{\left(k\right)}\right)}
+
\norm{\nabla^2\Phi\left(x^{\left(k\right)}\right)^{-1/2}\delta_{\hat{x}}}
_{\g^2\Phi\left(x^{\left(k\right)}\right)}\\
&\le7
\alpha\eps+\norm{\delta_{\hat{x}}}_2\\
&=
7\alpha\eps+\alpha\left(\sum_{i=1}^m
\frac{\exp\left(\lambda_\stab\gamma_i^{\left(k\right)}\right)
\norm{\g^2\phi_i\left(x_i^{\left(k\right)}\right)^{1/2}
\left(\hat{x}_i^{\left(k\right)}-x_i^{\left(k\right)}\right)}_2^2}
{\sum_{i=1}^m\exp\left(2\lambda_\stab\gamma_i^{\left(k\right)}\right)}\right)^{1/2}\\
    &\le8\alpha\eps
\end{align*}
where we use by induction
\[
\norm{\g^2\phi_i\left(x^{\left(k\right)}_i\right)^{1/2}
\left(x^{\left(k\right)}_i - \hat{x}^{\left(k\right)}_i\right)}_2
\le \frac{\alpha}{2}
\]
and $\alpha<\eps$.
Note that this is not circular as we use the nearby property of $x^{(k)}_i$ and $\hat{x}^{(k)}_i$ to prove $\ell_2$ stability of $\hat{x}^{(k+1)}$ and $\hat{x}^{(k)}$,
and we now show how to use that to find the nearby property for $x^{(k+1)}_i$ and $\hat{x}^{(k+1)}_i$.

To do so, we use \Cref{lem:potential} and show that our stability potential is never large.
To be able to use the lemma, we observe
\begin{align*}
\norm{\E\left[\left(\hat{x}_i^{\left(k\right)}-x_i^{\left(k\right)}\right)-\left(\hat{x}_i^{\left(k+1\right)}-x_i^{\left(k+1\right)}\right)\right]
}
_{\nabla^2\phi_i\left(x_i^{\left(k\right)}\right)}
&=
\norm{\hat{x}_i^{\left(k\right)}
-\hat{x}_i^{\left(k+1\right)}
-\E\left[\bar{\delta}_x\right]}_{\nabla^2\phi_i\left(x_i^{\left(k\right)}\right)}\\
&=
\norm{\nabla^2\Phi\left(x^{\left(k\right)}\right)^{-1/2}\delta_{\hat{x}}}_{\nabla^2\phi_i\left(x_i^{\left(k\right)}\right)}\\
&\le
\eps^2.
\end{align*}
Then, applying \Cref{lem:potential} to each block and summing gives
\begin{align*} \Psi_\stab\left(x^{\left(k+1\right)},\hat{x}^{\left(k+1\right)}\right)
&\le
\Psi_\stab\left(x^{\left(k\right)},\hat{x}^{\left(k\right)}\right)\\
&
\qquad -2\sum_{i=1}^m\exp\left(\lambda_\stab\gamma_i^{\left(k\right)}\right)
\lambda\left(\nabla^2\phi_i\left(x_i^{\left(k\right)}\right)^{-1/2}
\delta_{\hat{x},i}\right)^\top \nabla^2\phi_i\left(x_i^{\left(k\right)}\right)
\left(\hat{x}_i^{\left(k\right)}-x_i^{\left(k\right)}\right)\\
&
\qquad +4\sum_{i=1}^m
\exp\left(\lambda\gamma_i^{\left(k\right)}\right)
\exp\left(2\lambda\eps\left(\gamma_i^{\left(k\right)}\right)^{1/2}\right)
\lambda^2\eps^2\gamma_i^{\left(k\right)}.
\end{align*}
We have by induction $\gamma_i^{(k)}\le\alpha^2$;
thus again we can ignore the quadratic term.
We consider the first order term, without the sum or scaling for now:
\begin{align*}
&\E\left[\left(\nabla^2\phi_i\left(x_i^{\left(k\right)}\right)^{-1/2} \delta_{\hat{x},i}\right)^\top
\g^2\phi_i\left(x_i^{\left(k\right)}\right)
\left(\hat{x}_i^{\left(k\right)}-x_i^{\left(k\right)}\right)\right]\\
&\qquad=
\E\left[\delta_{\hat{x},i}\right]^\top
\g^2\phi_i\left(x_i^{\left(k\right)}\right)^{1/2}
\left(\hat{x}_i^{\left(k\right)}-x_i^{\left(k\right)}\right)\\
&\qquad=
\frac{\alpha\exp\left(\lambda_\stab\gamma_i^{\left(k\right)}\right)}{\left(\sum_{i=1}^m\exp\left(2\lambda_\stab\gamma_i^{\left(k\right)}\right)\right)^{1/2}}
\cdot\left(\hat{x}_i^{\left(k\right)}-x_i^{\left(k\right)}\right)^\top
\g^2\phi_i\left(x_i^{\left(k\right)}\right)
\left(\hat{x}_i^{\left(k\right)}-x_i^{\left(k\right)}\right)\\
&\qquad=
\frac{\alpha\exp\left(\lambda_\stab\gamma_i^{\left(k\right)}\right)\gamma_i^{\left(k\right)}}{\left(\sum_{i=1}^m\exp\left(2\lambda_\stab\gamma_i^{\left(k\right)}\right)\right)^{1/2}}.
\end{align*}
Thus our potential becomes
\[
\Psi_\stab\left(x^{\left(k+1\right)},\hat{x}^{\left(k+1\right)}\right)
\le
\Psi_\stab\left(x^{\left(k\right)},\hat{x}^{\left(k\right)}\right)
-
2\alpha\lambda_\stab
\sum_{i=1}^m
\frac{\exp\left(2\lambda_\stab\gamma_i^{\left(k\right)}\right) \gamma_i^{\left(k\right)}}
{\left(\sum_{i=1}^m\exp\left(2\lambda_\stab\gamma_i^{\left(k\right)}\right)\right)^{1/2}}.
\]

Let $\eps_0$ be such that $2\lambda_\stab\eps_0=\log n$. Following the proof of \Cref{lem:pot_drop}, 
\begin{align*}
\E\left[\Psi_\stab\left(x^{\left(k+1\right)},\hat{x}^{\left(k+1\right)}\right)\right]
&\le
\left(1-\frac{\alpha\log n}{\sqrt{m}}\right)
\Psi_\stab\left(x^{\left(k\right)},\hat{x}^{\left(k\right)}\right)
+\alpha\sqrt{m}n\log n.
\end{align*}
Recall $\Psi_\stab(x^{(1)},\hat{x}^{(1)})=m$. Then, by induction we have $\E[\Psi_\stab(x^{(k)},\hat{x}^{(k)})]\le nm$ for all $k$. Therefore with probability $1-n^{-12}$ we have that $\Psi_\stab(x^{(k)},\hat{x}^{(k)})\le n^{14}$ for all $k$. By the choice of $\lambda_\stab$ this implies that $\|\nabla^2\Phi(x^{(k)})^{1/2}(x^{(k)}-\hat{x}^{(k)})\|_{\infty,2}\le\alpha/2$, as desired.
\end{proof}

\section{Adaptive Sparsifier}
\label{sec:sparsifier}

In this section we establish \cref{thm:sparsify}. We start by establishing the following theorem on decremental sparsifiers, which can be extended to a fully-dynamic version with a standard binary bucketing scheme.

\begin{theorem}
\label{thm:decr}
Let $\kappa$ be a parameter that is $\poly(n, U)$,
and $A \in \R^{n \times d}$ be a matrix with row norms at most $\kappa$
($\|a_i\|_2^2 \leq \kappa$ for all $i$) undergoing row halving,
which is setting $a_i$ to $a_i/2$.
We can maintain in total time $\O(nd + d^{3+\omega})$ against an adaptive adversary leverage overestimates $\tautil_i$ of the rows of
$[A; \kappa^{-1/2} I]$ summing to $\Otil(d)$, that is:
\begin{itemize}
\item $\tautil_i \geq a_i^{\tp} (A^{\tp} A + \kappa^{-1} I)^{-1} a_i$ after every update.
\item $\sum_{i} \tautil_i \leq \Otil(d)$, where recall $\Otil(\cdot)$
hides poly-logarithmic factors of $n$, $U$, and thus $\kappa$.
\end{itemize}
\end{theorem}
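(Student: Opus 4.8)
The plan is to run a lazy maintenance scheme built around the two facts highlighted in the overview. We maintain an explicit inverse $B = (A^\top A + \kappa^{-1} I)^{-1}$ (on the current sparsifier, but let me describe the idealized version first) together with a set of leverage score overestimates $\tautil_i$ that are all valid at the start of a \emph{phase}. Within a phase, we process batches of row halvings and use a heavy-hitter data structure (\Cref{lem:heavy-hitter}, applied with the quadratic form $M = B$) together with a dyadic/dyadic-interval trick over scales $d/n, 2d/n, 4d/n, \dots$ to detect which rows' \emph{true} leverage scores $a_i^\top B a_i$ have grown past their current estimate $\tautil_i$ by an additive $d/n$; every time we detect such a row we double its estimate. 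By Fact 1, as long as the total leverage mass of the halved rows since the last recomputation is $\le 1/2$, no surviving leverage score has more than doubled, so estimates remain valid between recomputations; when the accumulated mass crosses $1/2$ we \emph{end the phase}: recompute $B$ from scratch, recompute all leverage scores (via the Johnson--Lindenstrauss sketch of \Cref{lem:JL} applied to $B^{1/2} A^\top$, i.e. $\tautil_i \approx \|B^{1/2} a_i\|_2^2$), resample a fresh spectral sparsifier by \Cref{lem:sample}, and reset the heavy-hitter structure. Fact 2 — that deleting (here: halving, hence multiplying $\det$ by a factor bounded away from $1$ when $\tau$ is bounded away from $0$) decreases $\det(A^\top A + \kappa^{-1} I)$ multiplicatively — combined with the polynomial conditioning bound $\kappa^{-1} I \preceq A^\top A \preceq \kappa I$, caps the total leverage mass of all halvings ever at $O(d \log(n\kappa)) = \Otil(d)$; hence the number of phases is $\Otil(d)$, and the number of estimate-doublings (each costing $\Otil(d^2)$ to apply via Sherman--Morrison on $B$ plus the dyadic bookkeeping) is also $\Otil(d)$ since each row can be doubled only $O(\log n)$ times before exceeding $1$ and each phase contributes $O(1)$ total mass. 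The output $\tautil_i$ summing to $\Otil(d)$ follows because they track true leverage scores up to a constant factor (the doubling is triggered at additive $d/n$ steps, the JL error is multiplicative constant, and true leverage scores sum to $\le d+1$).

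The \textbf{adaptivity} is handled exactly by the locator/checker split described in the overview, which must be folded into the above. The heavy-hitter ``locator'' is built against a non-adaptive adversary, so we never let its internal randomness leak: its only job is to output a \emph{superset} $S$ of candidate rows whose leverage may have crossed a dyadic threshold, and this set is safe to reveal because it is resampled/rebuilt each phase and the adversary's next moves depend only on the \emph{values} $\tautil_i$ we commit to, which are determined by the independently-resampled ``checker''. The checker is a fresh leverage-score estimator — a new spectral sparsifier from \Cref{lem:sample} composed with a fresh JL sketch from \Cref{lem:JL} — drawn each time we need to certify the candidate rows in $S$; since its randomness is independent of everything the locator revealed, the adversary gains no usable information, and a union bound over the $\Otil(d)$ phases and the $\poly(n)$ total operations controls the failure probability. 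Because $|S|$ at a given dyadic scale $\theta$ is $O(\sum_i \|a_i\|_B^2 / \theta)$ and we only query scales down to $\theta = \Omega(d/n)$, the checker work per invocation is $\Otil(d^\omega + d\cdot (\text{mass}/\theta))$ by \Cref{lem:heavy-hitter}, and charging the $\mathrm{mass}/\theta$ part against the leverage mass consumed (which is $\Otil(d)$ in total) keeps this within budget.

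For the \textbf{runtime accounting}: (i) the initial factorization of $A^\top A + \kappa^{-1} I$ and the first leverage computation cost $\O(nd + d^\omega)$ — the $nd$ from forming the Gram matrix / sketching all rows once; (ii) each of the $\Otil(d)$ phases costs $\O(d^\omega)$ to recompute $B$ on the (size-$\Otil(d)$) sparsifier and resample, plus $\O(d)$ per sparsifier row to re-sketch, totalling $\Otil(d^{1+\omega})$ over all phases — wait, this needs the phase count times per-phase cost, $\Otil(d)\cdot\O(d^\omega) = \Otil(d^{1+\omega})$; to reach the claimed $\O(d^{3+\omega})$ we also pay, per phase, for re-running the locator ($\O(d^\omega)$ rebuild via \Cref{lem:heavy-hitter} initialize on the sparsifier) and for the $\Otil(d)$-per-step dyadic queries across $O(\log n)$ scales, and crucially for reading each of the $n$ rows against the current sketch when rebuilding the heavy-hitter structure — however, since only the sparsifier ($\Otil(d)$ rows) is fed to the structure and fresh row reads happen only on the $\Otil(n)$ originally-inserted rows amortized once, the dominant term is the per-phase linear-algebra $\O(d^\omega)$ times the $\Otil(d)$ doublings-or-phases and the polylog scales, landing at $\O(d^{3+\omega})$ after absorbing the $\log$ factors and the $d^2$-per-Sherman-Morrison updates ($\Otil(d)$ of them, each $\O(d^2)$, negligible); (iii) the $nd$ term appears once for the initial pass and for maintaining $A^\top x$-style bookkeeping of the halvings. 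Finally, I would note the fully-dynamic extension (handling insertions) is deferred and obtained by the standard $O(\log n)$-level binary bucketing mentioned in the overview, which multiplies costs by $O(\log n)$ and turns $n$ into the total update size.

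The step I expect to be the \textbf{main obstacle} is making the locator/checker interaction rigorous against the adaptive adversary while keeping the heavy-hitter structure non-adaptive: one must argue carefully that the \emph{only} information exposed to the adversary across phases is the committed vector $\tautil$ (and the public matrix $A$), that the candidate set $S$ produced by the locator can be safely used because committing $\tautil$ is done via the independent checker, and that re-randomizing the locator each phase prevents accumulation of dependence — together with verifying that ``leverage mass of halvings $\le 1/2$ per phase'' is enforceable given that we only ever see \emph{over}estimates, not exact leverage scores (this needs the estimates to be within a constant factor so that a conservative threshold still triggers a phase end before Fact 1's hypothesis is violated). Everything else — the determinant potential bound, the dyadic detection, and the per-phase linear algebra — is routine given \Cref{lem:sample}, \Cref{lem:JL}, and \Cref{lem:heavy-hitter}.
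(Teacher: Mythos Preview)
Your high-level architecture (phases ended by accumulated leverage mass, determinant potential to bound phase count, locator/checker for adaptivity) matches the paper. But there is a genuine gap in how you use the heavy hitter, and a correlated gap in the runtime accounting.

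You query the heavy hitter with the quadratic form $M = B = (A^\top A + \kappa^{-1} I)^{-1}$, so it returns rows with large \emph{current} leverage score $a_i^\top B a_i$. That is the wrong quantity: you need rows whose leverage score has \emph{increased} since the last time you set $\tautil_i$. A query at threshold $\theta = d/n$ with $M = B$ returns a set of size $O((\sum_i a_i^\top B a_i)/\theta) = O(d / (d/n)) = O(n)$, i.e.\ essentially all rows, so there is no saving. Your ``dyadic over scales $d/n, 2d/n, \dots$'' does not fix this: at every scale the candidate set is still proportional to $n$. The paper instead runs the dyadic scheme over \emph{batch indices}: for each dyadic interval $[\hat q, q]$ it forms the difference matrix $\Delta^{(\hat q, q)} = (1+\epscheck)(\Gtil^{(q)})^{-1} - (1-\epscheck)(\Gtil^{(\hat q)})^{-1}$ and queries the heavy hitter with $M = \Delta^{(\hat q, q)}$. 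Then $a_i^\top \Delta^{(\hat q,q)} a_i$ is (up to the $\epscheck$ slack) the leverage-score \emph{increment} on that interval, and the telescoping $\sum_{(\hat q,q)} \Delta^{(\hat q,q)} \preceq (1+\epscheck)(\Gtil^{(Q)})^{-1} + O(\log n)\,\epscheck \sum_q (\Gtil^{(q)})^{-1}$ is what bounds the total candidate-list size by $\Otil(n)$.

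This in turn is where the $d^{3+\omega}$ comes from, and why your runtime accounting cannot be made to land there. For the telescoping to give $\Otil(d)$ total mass you need $\epscheck \cdot Q = O(1)$, and since $Q = \Otil(d)$ this forces $\epscheck = \Otil(1/d)$. Hence each batch's sparsifier has $\Otil(\epscheck^{-2} d) = \Otil(d^3)$ rows, and forming its Gram matrix costs $\Otil(d^{2+\omega})$; over $\Otil(d)$ batches this is $\Otil(d^{3+\omega})$. Your per-phase cost of $\O(d^\omega)$ with a constant-accuracy sparsifier is too optimistic precisely because you have not accounted for this accuracy requirement. Relatedly, the heavy hitter must live on all $n$ rows of $A$ (you want to detect increases anywhere), so rebuilding it every phase on the sparsifier, as you suggest, is both the wrong object and, if done on $A$, would cost $\Otil(nd)$ per phase; the paper instead keeps a single heavy hitter on $A$ throughout and argues non-adaptivity via a simulation argument (the checker-induced sequence is determined without reference to the heavy hitter's randomness), not by re-randomizing the locator.
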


We first deduce \cref{thm:sparsify} from \cref{thm:decr} via standard reductions, and then prove~\cref{thm:decr}.

\begin{proof}[Proof of \cref{thm:sparsify}]
We first implement row deletions by $O(\log \kappa)$ row halvings. Compared to \cref{thm:sparsify}, the leverage scores here are defined with respect to a slightly larger Gram matrix, including the extra $\kappa^{-1} I$ and the rows remained after halving. The remaining rows contribute at most $\frac{1}{\kappa^{10}} A^\tp A \preceq \kappa^{-9} I$. By $A^\top A \succeq \kappa^{-1} I$, we lose at most a constant factor of the leverage score overestimates for using row halving.

Next we reduce the fully dynamic version to this decremental version using a standard binary bucketing scheme. We maintain $O(\log Q)$ levels of decremental data structures for $Q$ batches of insertions, where $\mathcal{D}^{(\decr, k)}$ denotes the $k$-th level. At the $q$-th batch, we merge all the rows in $\mathcal{D}^{(\decr, 0)}, \mathcal{D}^{(\decr, 1)}, \dots, \mathcal{D}^{(\decr, \ell)}$ that are not deleted yet, together with the current batch, into the $\ell$-th level, where $\ell$ denotes the largest integer such that $2^{\ell}$ divides $q$. This clears all the levels below $\ell$, and reconstructs a decremental data structure at level $\ell$. 
All the deletions go to the corresponding level of the decremental data structure. Since the Gram matrix of each level sums to the total Gram matrix, the leverage score overestimates with respect to the Gram matrix of each level can be only larger.

The reduction blows up the parameters only by $O(\log Q)$ factor, which is hidden in $\Otil(\cdot)$. The total running time is $\Otil(Q \cdot d^{\omega+3} + n \cdot d)$, since there are a total of $\Otil(Q)$ decremental data structures with a total of $\Otil(n)$ rows.
\end{proof}

\subsection{Bounding Leverage Score Decreases}
\label{subsec:detbounds}

We show that the halve steps can be grouped into batches of constant leverage score total.
Each such batch ensures that the previous Gram matrix approximates the current Gram matrix.
Furthermore, the total number of such batches is readily boundable by a volume argument.

\begin{lemma}
\label{lem:perturbapprox}
Let $A$ be a matrix 
and $H$ a subset of rows whose total leverage score w.r.t $A^\tp A + \kappa^{-1} I$ 
is at most $1.2$, or equivalently, the removed leverage score is at most $0.9$:
\[
\sum_{i \in H} a_i^\tp \left( A^\tp A + \frac{I}{\kappa} \right)^{-1} a_i
-
\sum_{i \in H} \left(\frac{a_i}{2} \right)^\tp \left( A^\tp A + \frac{I}{\kappa} \right)^{-1} \left( \frac{a_i}{2} \right)
\leq 0.9,
\]
then the Gram matrix after halving the rows in $H$,
aka. replacing each $a_{i}$ with $\frac{1}{2} a_i$ for all $i\in H$,
$10$-approximates the one before
\[
A^\tp A + \frac{I}{\kappa}
\approx_{10}
\left(
    A^\tp A -
    \frac{3}{4} \left(A_{H, :}\right)^{\tp} A_{H, :}
\right)
+
\frac{I}{\kappa}.
\]
\end{lemma}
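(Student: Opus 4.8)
The plan is to observe that the Gram matrix after halving is exactly $M - \tfrac34 S$, where $M \coloneqq A^\tp A + \kappa^{-1} I$ is the Gram matrix before halving and $S \coloneqq (A_{H,:})^\tp A_{H,:} = \sum_{i \in H} a_i a_i^\tp \succeq 0$: halving $a_i$ replaces the rank-one term $a_i a_i^\tp$ by $\tfrac14 a_i a_i^\tp$, hence subtracts $\tfrac34 a_i a_i^\tp$. Thus the claim $A^\tp A + \tfrac{I}{\kappa} \approx_{10} (A^\tp A - \tfrac34 S) + \tfrac{I}{\kappa}$ is precisely $M \approx_{10} M - \tfrac34 S$, and the whole proof reduces to controlling $S$ relative to $M$ in the Loewner order.

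First I would dispose of the upper direction, which is free: $S \succeq 0$ immediately gives $M - \tfrac34 S \preceq M \preceq e^{10} M$. For the lower direction, I would rewrite the hypothesis. The ``removed leverage score at most $0.9$'' condition reads $\tfrac34 \sum_{i \in H} a_i^\tp M^{-1} a_i \le 0.9$, i.e.\ $\sum_{i \in H} a_i^\tp M^{-1} a_i \le 1.2$. The key identity is then
\[
\mathrm{tr}\!\left( M^{-1/2} S\, M^{-1/2} \right)
= \sum_{i \in H} \mathrm{tr}\!\left( M^{-1} a_i a_i^\tp \right)
= \sum_{i \in H} a_i^\tp M^{-1} a_i \le 1.2 .
\]
Since $M^{-1/2} S M^{-1/2}$ is positive semidefinite, its operator norm is bounded by its trace, so $M^{-1/2} S M^{-1/2} \preceq 1.2\, I$, equivalently $S \preceq 1.2\, M$. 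Plugging this in,
\[
M - \tfrac34 S \;\succeq\; M - \tfrac34 \cdot 1.2\, M \;=\; 0.1\, M \;\succeq\; e^{-10} M ,
\]
using $e^{-10} < 0.1$. Together with the upper bound this gives $e^{-10} M \preceq M - \tfrac34 S \preceq e^{10} M$, i.e.\ $M \approx_{10} M - \tfrac34 S$; in fact the argument yields the sharper $\approx_{\ln 10}$, so the stated constant is comfortably slack.

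I do not expect any genuine obstacle: the only nontrivial ingredient is the inequality $\|P\|_{\mathrm{op}} \le \mathrm{tr}(P)$ for positive semidefinite $P$, and the only point demanding a little care is bookkeeping — every leverage score in the hypothesis and every quadratic form in the argument must be taken with respect to the \emph{pre-halving} Gram matrix $M$ (including the $\kappa^{-1} I$ regularizer), which is exactly what the statement supplies, so no approximation between ``before'' and ``after'' Gram matrices is needed to close the estimate.
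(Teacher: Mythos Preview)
Your proof is correct and essentially the same as the paper's: both reduce the claim to showing $\tfrac34 S \preceq 0.9\,M$, and both derive this from the leverage-score hypothesis in one line. The only cosmetic difference is that the paper bounds each rank-one term $a_i a_i^\tp \preceq (a_i^\tp M^{-1} a_i)\,M$ individually via the Rayleigh quotient and then sums, whereas you sum first and bound the operator norm of $M^{-1/2} S M^{-1/2}$ by its trace; these are equivalent.
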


\begin{proof}

For any nonzero vector $x$, the ratio of the removed quadratic from to total quadratic form
\begin{align*}
    \frac{
    \frac{3}{4}
    x^\tp \left(A_{H, :}\right)^{\tp} A_{H, :}
    x
    }{
    x^\tp
    \left( A^\tp A + I/\kappa \right)
    x
    }
    =
    \sum_{i \in H}
    \frac{
    \frac{3}{4}
    x^\tp a_i^\tp a_i
    x
    }{
    x^\tp
    \left( A^\tp A + I/\kappa \right)
    x
    }
\le
    \frac{3}{4}
    \sum_{i \in H}
    a_i^{\tp}
    \left( A^\tp A + I/\kappa \right)^{-1}
    a_i
    \le 0.9
\end{align*}
by Rayleigh quotient inequality.
Therefore, $\frac{3}{4} \left(A_{H, :}\right)^{\tp} A_{H, :}
\preceq 0.9(A^\tp A + I/\kappa)$ and the lemma follows.
\end{proof}

The length of such a sequence of large leverage score deletion batches is also bounded.
For this we first define a matrix sequence created by a sequence of halvings.

\begin{definition}
\label{def:batchedhalvingsequence}
A batched halving sequence of an initial matrix $A = A^{(0)}$
of length $Q$ is a sequence of subsets of rows of $A$
$H^{(1)}, H^{(2)},  \ldots ,H^{(Q)} \subseteq [n]$ 
leading to the matrix sequence inductively for $1 \leq q \leq Q$ as
\[
A^{\left( q \right)}_i
= 
\begin{cases}
A^{\left( q - 1 \right)}_{i} & \qquad i \notin H^{\left( q\right)},\\
\frac{1}{2} A^{\left( q - 1\right)}_{i} & \qquad i \in H^{\left( q\right)},\\
\end{cases}
\]
with the property that for all $2 \le q \le Q$,
\[
\sum_{i \in H^{\left( q \right)} }
\left(a_{i}^{\left( q \right)}\right)^{\top}
\left( \left( A^{\left( q \right)}\right) ^{\tp} A^{\left( q \right)} + \frac{I}{\kappa} \right)^{-1}
a_{i}^{\left( q \right)}
\in
[0.001, 1.2].
\]
\end{definition}

\begin{lemma}
\label{lem:batchbound}
Let $A$ be a matrix such that that $\|a_i\|_2^2 \leq \kappa$,
and $H^{(1)}, H^{(2)},  \ldots ,H^{(Q)} \subseteq [n]$ be a batched
halving sequence (as defined in \cref{def:batchedhalvingsequence}),
it holds that $Q \leq \O(d)$.
\end{lemma}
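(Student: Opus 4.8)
The plan is to run the volume (determinant) argument sketched in the overview: track the quantity $\det M^{(q)}$ where $M^{(q)} \coloneqq (A^{(q)})^\tp A^{(q)} + \kappa^{-1} I$, bound it from above at the start and from below at the end, and show that each batch shrinks it by a fixed multiplicative constant; the number of batches is then at most logarithmic in the ratio of these bounds, which is $\Otil(d)$.

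First I would record the two crude bounds. Since $\|a_i\|_2^2 \le \kappa$ and $A$ has at most $n$ rows, $(A^{(0)})^\tp A^{(0)} \preceq n\kappa I$, so $M^{(0)} \preceq (n\kappa + \kappa^{-1}) I$ and hence $\det M^{(0)} \le (n\kappa + \kappa^{-1})^d$. Halving a row only subtracts a PSD matrix from the Gram matrix, so $M^{(1)} \preceq M^{(0)}$ and $\det M^{(1)} \le \det M^{(0)}$; this handles the first batch, for which the property in \cref{def:batchedhalvingsequence} is not imposed. On the other side, $M^{(q)} \succeq \kappa^{-1} I$ always, so $\det M^{(Q)} \ge \kappa^{-d}$.

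Next is the per-batch decrease, for $q \ge 2$. Let $C \coloneqq A^{(q)}_{H^{(q)}, :}$ be the submatrix of the already-halved rows in the batch. Halving replaces each $a_i a_i^\tp$ by $\tfrac14 a_i a_i^\tp$, so the difference of the Gram matrices is $\tfrac34 \sum_{i\in H^{(q)}} a_i^{(q-1)} (a_i^{(q-1)})^\tp = 3\sum_{i\in H^{(q)}} a_i^{(q)} (a_i^{(q)})^\tp$, i.e. $M^{(q-1)} = M^{(q)} + 3 C^\tp C$. By the matrix determinant lemma (Sylvester's identity), $\det M^{(q-1)} = \det M^{(q)}\cdot \det\!\big(I + 3\, C (M^{(q)})^{-1} C^\tp\big)$, and $C (M^{(q)})^{-1} C^\tp$ is PSD with trace exactly $\sum_{i\in H^{(q)}} (a_i^{(q)})^\tp (M^{(q)})^{-1} a_i^{(q)}$, which is $\ge 0.001$ by \cref{def:batchedhalvingsequence}. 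Using the elementary inequality $\det(I + X) \ge 1 + \operatorname{tr}(X)$ for PSD $X$ (expand $\prod(1+\lambda_j)$), we get $\det M^{(q-1)} \ge 1.003\,\det M^{(q)}$. Telescoping over $q = 2,\dots,Q$ gives $\det M^{(1)} \ge 1.003^{\,Q-1} \det M^{(Q)}$, so $1.003^{\,Q-1} \le (n\kappa + \kappa^{-1})^d/\kappa^{-d} \le (2n\kappa^2)^d$, and taking logarithms yields $Q = O\!\big(d\log(n\kappa)\big) = \Otil(d)$ since $\kappa = \poly(n,U)$.

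I do not anticipate a real obstacle here; the only things needing care are (i) setting up the matrix-determinant-lemma computation in terms of the \emph{post}-halving rows and Gram matrix so that the trace matches the normalization in \cref{def:batchedhalvingsequence}, and (ii) noting that it is the \emph{lower} endpoint $0.001$ of the batch leverage-score interval that drives the determinant drop (whereas the upper endpoint $1.2$ is what \cref{lem:perturbapprox} consumes elsewhere). Everything else is bookkeeping with Loewner-order and determinant monotonicity.
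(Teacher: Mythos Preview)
Your proof is correct and follows the same determinant (volume) argument as the paper. Your use of Sylvester's identity together with $\det(I+X)\ge 1+\operatorname{tr}(X)$ to handle each batch in one shot is slightly cleaner than the paper's row-by-row rank-one update (which must invoke the $10$-approximation of \cref{lem:perturbapprox} for the intermediate Gram matrices), but the overall structure and the resulting $O(d\log(n\kappa))=\Otil(d)$ bound are identical.
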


\begin{proof}
For any positive-definite matrix $G$ and any vector $v$ with $v^\tp G^{-1} v = \tau < 1$,
by multiplicativeness of the determinant and the fact that $\det(I - XY) = \det(I - YX)$,
we get
\begin{align} \label{eq:det}
    \det(G - v v^\tp)
    & =
    \det( G^{1/2})
    \det(I - G^{-1/2} v v^\tp G^{-1/2})
    \det(G^{1/2}) \notag
    \\ & =
    \det(G) \det(I - v^\tp G^{-1} v)
    =
    \det(G) (1 - \tau) \le \det(G) e^{-\tau}.
\end{align}
Let $G^{(q)} := (A^{(q)})^\tp A^{(q)} + \frac{I}{\kappa}$. We transform $A^{(q)}$ to $A^{(q+1)}$ by halving one row at a time, and repeatedly apply the inequality \eqref{eq:det} to get 
\begin{align*}
    \det(G^{(q+1)}) < e^{-0.0001} \det(G^{(q)})
\end{align*}
since all intermediate Gram matrices (and thus the leverage scores with respect to the Gram matrices) are $10$-approximations. We conclude that $Q \le \log\left( \frac{\det(G^{(1)})}{\det(G^{(Q)})} \right) = O(d \log(nd\kappa)) = \Otil(d)$.

\end{proof}

Note that an immediate corollary of the above two facts is that the
total leverage score increases of all (remaining) rows during the course
of a deletion sequence is also $\O(d)$.
If we threshold leverage scores by additive $d / n$,
it suffices to sample all the rows to create approximate Gram matrices,
and the total increase still comes out to $\Otil(n)$ rows.
This is the primary motivation for our batching schemes.

\subsection{Checker-Induced Sequence}

By sketching the Gram matrix, we can create a checker
that in $\O(d)$ time estimates the leverage score of
row $a_i$ within a factor of $2$.

To find increases to leverage scores,
we utilize heavy hitter data structures.
There are two issues:
\begin{enumerate}
\item The Gram matrix is approximate, so any approximation error goes into the number of false positives.
\item The heavy hitter only works against an oblivious adversary,
so we need to hide decisions from the heavy hitter via a checker-induced
sequence, which is what we define below.
\end{enumerate}

\begin{definition}
\label{def:sequence}
A checker-induced leverage score estimation sequence for a halving sequence
deletion sequence $H^{(1)} \ldots H^{(Q)}$, where $Q < n$ and every row is halved for at most $O(\log n)$ times, is defined as
$\tautil^{(0)}$ setting to overestimates of initial leverage scores of $A^{(0)} = A$,
and repeatedly computed at each step $q \in [Q]$ as:
\begin{enumerate}
\item Create $\epscheck$-approximate Gram matrix $\Gtil^{(q)}$ by setting
\[
\epscheck \leftarrow \frac{0.1}{Q}
\]
and sampling the rows of $A^{(q)}$ with probabilities
$\tautil^{(q - 1)} \cdot O(\epscheck^{-2} \log{n})$
by \cref{lem:sample}.
\item Create fresh $O(\log{n}) \times d$ JL projection matrix $S$,
and use it to sketch the inverse Gram matrix
\[
Z^{\left( q \right)} \leftarrow S \left( \Gtil^{\left( q \right)} \right)^{-1/2}
\]
\item For each row $i$ halved, recompute its leverage score estimate using $Z^{(q)}$,
\[
\tautil_{i}^{\left( q \right)}
\leftarrow
\frac{d}{n}
+ 10 \norm{Z^{\left( q \right)} a^{\left( q \right)}_{i}}_2^2.
\]
\item For each integer $j$ such that $2^{j} | q$,
let $\qhat = q - 2^{j}$ be the other end of the
 dyadic-tiling aligned interval on batch numbers, and:
\begin{enumerate}
\item Create the matrix
\[
\Delta^{\left(\qhat, q\right)}
\coloneqq
\left( 1 + \epscheck \right) \left(\Gtil^{\left( q\right)}\right)^{-1}
-
\left( 1 - \epscheck \right) \left(\Gtil^{\left( \qhat \right)}\right)^{-1}
\]
along with sketch matrices
\[
Z^{\left(\widehat{q}, q \right)} \leftarrow
S \left( \Delta^{\left(\widehat{q}, q \right)} \right)^{1/2}
\]
\label{step:sketchdiffernece}
\item
\label{label:checkerpositive}
For each remaining row $a^{(q)}_i$ with large dot against $Z^{(\widehat{q}, q)}$, aka.
\[
\norm{Z^{(\widehat{q}, q)} a^{\left( q \right)}_{i} }_2^2 \geq \frac{d}{10 n \log{n}}
\]
recompute the leverage score estimate of that row using the sketch of the Gram matrix
\[
\tautil_{i}^{\left( q \right)}
\leftarrow
\frac{d}{n}
+ 10 \norm{Z^{\left( q \right)} a^{\left( q \right)}_{i}}_2^2.
\]
\end{enumerate}
\end{enumerate}
\end{definition}

We first verify that this checker-induced
leverage score overestimates $\tautil$ are indeed overestimates,
and sum to $\O(d)$.

\begin{lemma}
\label{lem:checkercorrect}
With high probability, after each batch $1 \leq q \leq Q$,
\[
\tautil^{(q)}_i
\geq \left(a^{\left( q \right)}_{i}\right)^{\top}
\left( \left( A^{\left( q \right)}\right)^{\tp} A^{\left( q \right)} + \frac{I}{\kappa} \right)^{-1}
a^{\left( q \right)}_{i}
\qquad
\forall q, i
\]
\end{lemma}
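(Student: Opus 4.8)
The plan is to induct on the batch index $q$, carrying the invariant $\tautil^{(q)}_i \ge (a^{(q)}_i)^\tp (G^{(q)})^{-1} a^{(q)}_i$, where $G^{(q)} \coloneqq (A^{(q)})^\tp A^{(q)} + I/\kappa$, together with the stronger statement that whenever row $i$ is \emph{(re)computed} at batch $q$ its estimate in fact exceeds the leverage score by an extra additive $d/n$; this extra slack is exactly what is needed to absorb future increases. (For $q=0$ this holds by the choice of $\tautil^{(0)}$, possibly after adding $d/n$ to each initial estimate, which changes $\sum_i \tautil^{(0)}_i$ by only $d = \O(d)$.) Throughout, I condition on the high-probability event---via \Cref{lem:sample} for the $\Gtil^{(q)}$'s and \Cref{lem:JL} for the sketch matrices $Z^{(q)}$ and $Z^{(\qhat, q)}$, union-bounded over the $\Otil(n)$ sketches built across all $Q < n$ batches---that $\Gtil^{(q)} \approx_{\epscheck} G^{(q)}$ for every $q$ and that every such $Z$ preserves the squared $\ell_2$ norm of all row vectors up to a factor $1\pm\epsjl$. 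Each of these bounds uses fresh randomness independent of everything computed in earlier batches, so it holds for whatever (possibly adaptively chosen) halving sequence is presented; adaptivity is relevant only to the \emph{number} of rows flagged by the locator, not to correctness of the overestimates.

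For the inductive step I first check that sampling $A^{(q)}$ by the previous estimates $\tautil^{(q-1)}$ is legitimate. Since $H^{(q)}$ is a set, $(A^{(q)})^\tp A^{(q)} \succeq \tfrac14 (A^{(q-1)})^\tp A^{(q-1)}$, hence $G^{(q)} \succeq \tfrac14 G^{(q-1)}$; combining with the inductive hypothesis gives $\tautil^{(q-1)}_i \ge \tfrac14 (a^{(q)}_i)^\tp (G^{(q)})^{-1} a^{(q)}_i$ for every $i$ (for $i \in H^{(q)}$ the halving supplies an additional factor $\tfrac14$ in the right direction). Thus $\tautil^{(q-1)}$ are $\Omega(1)$-overestimates of the true step-$q$ leverage scores, and \Cref{lem:sample} certifies $\Gtil^{(q)} \approx_{\epscheck} G^{(q)}$ from sampling with probabilities $\propto \tautil^{(q-1)}$ and $O(\epscheck^{-2}\log n \sum_i \tautil^{(q-1)}_i)$ samples. (No appeal to \Cref{lem:perturbapprox} is needed here; that bound serves only the determinant/volume argument of \Cref{lem:batchbound}.) Now for any row $i$ \emph{recomputed} at batch $q$---either $i \in H^{(q)}$, handled in step (3) of \Cref{def:sequence}, or $i$ flagged in step (4b)---we set $\tautil^{(q)}_i = \tfrac dn + 10\|Z^{(q)}a^{(q)}_i\|_2^2$, and since $\|Z^{(q)}a^{(q)}_i\|_2^2 \ge (1-\epsjl)(a^{(q)}_i)^\tp (\Gtil^{(q)})^{-1} a^{(q)}_i \ge (1-\epsjl)e^{-\epscheck}(a^{(q)}_i)^\tp (G^{(q)})^{-1} a^{(q)}_i$, the factor $10$ absorbs the JL and sparsifier error and the invariant (with the extra $d/n$) holds for such rows with room to spare.

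The crux is a row $i$ that is \emph{not} recomputed at batch $q$. Let $q_0 \le q$ be the last batch at which $i$ was (re)computed; then $i$ was not halved anywhere in $(q_0, q]$ (a halving forces a recomputation via step (3)), so $a^{(q_0)}_i = \dots = a^{(q)}_i =: a_i$ and $\tautil^{(q)}_i = \tautil^{(q_0)}_i$. Decompose the batch interval $(q_0, q]$ into at most $L = O(\log Q)$ dyadic-aligned subintervals $(c_1,c_2], (c_2,c_3], \dots, (c_L,q]$ with $c_1 = q_0$; each $(c_\ell, c_{\ell+1}]$ is precisely the interval $[\qhat, c_{\ell+1}]$ examined in step (4) at batch $c_{\ell+1}$ for the appropriate $j$, and at that batch row $i$ was not recomputed, so $\|Z^{(c_\ell, c_{\ell+1})}a_i\|_2^2 < \tfrac{d}{10 n \log n}$. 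The $(1\pm\epscheck)$ weights in $\Delta^{(c_\ell,c_{\ell+1})} = (1+\epscheck)(\Gtil^{(c_{\ell+1})})^{-1} - (1-\epscheck)(\Gtil^{(c_\ell)})^{-1}$ are chosen precisely so that, since $\Gtil^{(\cdot)} \approx_{\epscheck} G^{(\cdot)}$ and $G^{(c_{\ell+1})} \preceq G^{(c_\ell)}$, one has $\Delta^{(c_\ell,c_{\ell+1})} \succeq (1-O(\epscheck^2))\left((G^{(c_{\ell+1})})^{-1} - (G^{(c_\ell)})^{-1}\right) \succeq 0$: the first-order sparsifier error cancels and only a second-order term survives, giving
\[
a_i^\tp (G^{(c_{\ell+1})})^{-1} a_i - a_i^\tp (G^{(c_\ell)})^{-1} a_i \;\le\; (1+O(\epscheck^2))\, a_i^\tp \Delta^{(c_\ell,c_{\ell+1})} a_i \;\le\; \frac{1+O(\epscheck^2)}{1-\epsjl}\,\|Z^{(c_\ell,c_{\ell+1})}a_i\|_2^2 \;<\; \frac{1+o(1)}{1-\epsjl}\cdot\frac{d}{10 n \log n}.
\]
Telescoping over the $L = O(\log n)$ subintervals (using $Q < n$, so $L \le 2\log_2 Q < 2\log_2 n$, which makes $L\epscheck^2 = O(Q^{-2}\log Q) = O(1)$ and the second-order terms harmless) the total increase of $a_i^\tp (G^{(\cdot)})^{-1} a_i$ over $(q_0,q]$ is strictly less than $d/n$ once $\epsjl$ is a sufficiently small constant; hence $\tautil^{(q)}_i = \tautil^{(q_0)}_i \ge \tfrac dn + a_i^\tp (G^{(q_0)})^{-1} a_i > a_i^\tp (G^{(q)})^{-1} a_i$, re-establishing the invariant (and keeping a constant fraction of the $d/n$ slack). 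I expect the main obstacle to be exactly this last step: fixing the step-(4b) threshold, $\epscheck = 0.1/Q$, and $\epsjl$ so that the $O(\log n)$ per-interval contributions, together with the residual $\epscheck^2$ terms, provably sum to strictly below the $d/n$ slack for \emph{every} possible split point $q_0$, and verifying that the dyadic decomposition used by the checker is exactly the interval-tiling needed to make this telescoping valid.
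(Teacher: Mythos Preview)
Your proposal is correct and follows essentially the same approach as the paper: induct on $q$, carry the extra $d/n$ slack from the last recomputation, decompose the interval since the last update into $O(\log n)$ aligned dyadic pieces, and use the $(1\pm\epscheck)$-weighted difference matrix $\Delta^{(\qhat,q)}$ together with the JL sketch to bound the per-piece leverage-score increase. The paper phrases the last step as a pigeonhole/contradiction (``some dyadic piece must have increase $\ge d/(n\log n)$, so it would have been flagged'') while you telescope the contrapositive, and you additionally spell out why $\tautil^{(q-1)}$ are valid sampling overestimates for $A^{(q)}$---a point the paper leaves implicit---but the argument is the same; note that in fact $\Delta^{(\qhat,q)} \succeq (G^{(q)})^{-1} - (G^{(\qhat)})^{-1}$ holds exactly with no $O(\epscheck^2)$ loss, so your ``first-order cancellation'' remark is slightly more complicated than necessary.
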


\begin{proof}

The proof is by induction on time $q$.
The case of $q = 0$ follows from $\tauhat^{(0)}$ being directly
initialized with overestimates.

Suppose at time $q$, some row's leverage score has increased by additive $\ge d/n$. 
Let $q_{last}$ be the last time this row's estimate was updated. 
The row remains unchanged between $[q_{last}, q]$, i.e., $a_i^{(q)} = a_i^{(q_{last})}$. We use $a_i$ to denote $a_i^{(q)}$, for brevity in this proof only. Then we have
\[
a_{i}^\tp \left( G^{\left(q\right)} \right)^{-1} a_{i}
-
a_{i}^\tp \left( G^{\left(q_{last}\right)} \right)^{-1} a_{i}
\geq
\frac{d}{n}
\]
Decomposing $[q_{last}, q]$ by dyadic tiling gives that there is
some $[q_l, q_r]$ such that
\[
a_{i}^\tp \left( G^{\left(q_r\right)} \right)^{-1} a_i
- a_{i}^\tp \left( G^{\left(q_{l}\right)} \right)^{-1} a_i
\geq
\frac{d}{n \log{n}}
\]
which combined with $(1 + \epscheck)(\Gtil^{(q_r)})^{-1} \succeq (G^{(q_r)})^{-1}$
and $ (G^{(q_l)})^{-1} \succeq (1 - \epscheck)(\Gtil^{(q_r)})^{-1}$ gives
\[
\frac{d}{n \log{n}}
\le
a_i^{\top} \left( 1 + \epscheck \right) \left(\Gtil^{\left( q_r\right)}\right)^{-1} a_i
-
a_i^{\top} \left( 1 - \epscheck \right) \left(\Gtil^{\left( q_{l}\right)}\right)^{-1} a_i
=
a_i^{\top} \Delta^{\left( q_l, q_r \right)} a_i
\]
Which means the sketch must have failed on the interval $[q_l, q_r]$.
Taking union bound over all $O(Q \log{Q}) \leq n^{O(1)}$ tiling intervals
and the sketches/samples of the Gram matrices themselves gives the overall guarantee.
\end{proof}

Note that the checking of each candidate $i$ takes time $\O(d)$.
So the important step is ensuring that only a small number candiates
are checked explicitly in creating this sequence.

\subsection{Locator via Heavy Hitter}

We use heavy hitters as a locator in \cref{label:checkerpositive} to locate a candidate list of rows efficiently. The locator/checker framework isolates the randomness of the heavy hitters from the adversary.

\begin{lemma}
\label{lem:locator}
In the checker-induced sequence as given in \cref{def:sequence}, we can use heavy hitter data structures to generate a list of candidates
that contains a superset of the rows identified in \cref{label:checkerpositive}
at every step $q \in [Q]$. The total size of the candidate list is bounded by $\Otil(n)$.
\end{lemma}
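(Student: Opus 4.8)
The plan is to set up a heavy-hitter data structure that maintains the rows $a_i^{(q)}$ of the current matrix (scaled appropriately), and whose queries detect rows with large quadratic form against the matrix $\Delta^{(\qhat,q)}$. The key point is that the randomness used by the heavy hitter must be hidden from the adversary: this is exactly what the checker-induced sequence in \cref{def:sequence} accomplishes, since all the ``decisions'' (which rows get their leverage estimates updated) are determined by the fresh JL sketches $Z^{(q)}, Z^{(\qhat,q)}$ drawn independently at each step, and the heavy hitter only ever sees as input the sequence of halvings $H^{(1)},\dots,H^{(Q)}$, which we may regard as fixed in advance (oblivious). So from the heavy hitter's point of view it faces a non-adaptive adversary, and \cref{thm:hh} / \cref{lem:heavy-hitter} applies.

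\textbf{Implementation of the locator.} First I would maintain, for each dyadic level $j = 0, 1, \dots, O(\log Q)$, a \textsc{HeavyHitter} data structure (\cref{lem:heavy-hitter}) holding the current rows $a_i^{(q)}$ of $A^{(q)}$, updated under the $O(\log n)$ halvings each row undergoes (so $\Otil(n)$ \textsc{Modify} calls total, costing $\Otil(nd)$, plus $\Otil(nd)$ for initialization). At step $q$, for each $j$ with $2^j \mid q$, I form the PSD matrix $M := \Delta^{(\qhat,q)} = (1+\epscheck)(\Gtil^{(q)})^{-1} - (1-\epscheck)(\Gtil^{(\qhat)})^{-1}$; I should first check this is indeed PSD — it is, because $\Gtil^{(\qhat)}$ is a Gram matrix from an earlier, larger matrix ($\Gtil^{(\qhat)} \succeq \Gtil^{(q)}$ up to the $\epscheck$ factors guaranteed by \cref{lem:sample} and \cref{lem:batchbound}-type boundedness), so $(\Gtil^{(q)})^{-1} \succeq (\Gtil^{(\qhat)})^{-1}$ and the coefficients $(1\pm\epscheck)$ only help — and then call \textsc{Query}$(M, \delta)$ with $\delta = \frac{d}{10 n \log n}$. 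By the guarantee of \cref{lem:heavy-hitter}, this returns a list containing every row $i$ with $\|a_i^{(q)}\|_M^2 \ge \delta$, which is exactly the superset of rows flagged in \cref{label:checkerpositive}. The returned list has size $O\big(\delta^{-1} \sum_i \|a_i^{(q)}\|_M^2\big)$.

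\textbf{Bounding the total candidate list size.} The crux is $\sum_i \|a_i^{(q)}\|_{\Delta^{(\qhat,q)}}^2$. Expanding, this is $(1+\epscheck)\sum_i (a_i^{(q)})^\top (\Gtil^{(q)})^{-1} a_i^{(q)} - (1-\epscheck)\sum_i (a_i^{(q)})^\top (\Gtil^{(\qhat)})^{-1} a_i^{(q)}$. Using that $\Gtil^{(q)}, \Gtil^{(\qhat)}$ are $\epscheck$-spectral-approximations of the true Gram matrices $G^{(q)}, G^{(\qhat)}$ (from \cref{lem:sample}, with $\epscheck = 0.1/Q$), and that leverage scores with respect to a Gram matrix sum to at most $d$ (plus the $I/\kappa$ regularizer contributes $\le d$ more), each of the two sums is $O(d)$ up to $(1\pm\epscheck)$ factors; but their \emph{difference} is what we want to be small. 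Here I would invoke the determinant / volume argument already developed: the total increase of all rows' true leverage scores from $G^{(\qhat)}$ to $G^{(q)}$ is controlled, and more to the point, $\sum_i \big[(a_i^{(q)})^\top (G^{(q)})^{-1} a_i^{(q)} - (a_i^{(q)})^\top (G^{(\qhat)})^{-1} a_i^{(q)}\big] = \mathrm{tr}\big((G^{(q)})^{-1} (G^{(q)} - G^{(\qhat)})\,\cdot\,\text{(something)}\big)$ — more carefully, since $a_i^{(q)}$ are the rows of $A^{(q)}$ and $G^{(q)} = (A^{(q)})^\top A^{(q)} + I/\kappa$, we have $\sum_i (a_i^{(q)})^\top (G^{(q)})^{-1} a_i^{(q)} = \mathrm{tr}\big((A^{(q)})^\top A^{(q)} (G^{(q)})^{-1}\big) = d - \frac{1}{\kappa}\mathrm{tr}((G^{(q)})^{-1})$, and $\mathrm{tr}((G^{(q)})^{-1}) \le \mathrm{tr}((G^{(\qhat)})^{-1})$ since $G^{(q)} \preceq G^{(\qhat)}$; so the true difference is $\le \frac{1}{\kappa}\big(\mathrm{tr}((G^{(\qhat)})^{-1}) - \mathrm{tr}((G^{(q)})^{-1})\big)$, which telescopes over all dyadic intervals at a fixed level to at most $\frac{1}{\kappa}\mathrm{tr}((G^{(0)})^{-1}) = O(d)$ (using $G^{(0)} \succeq I/\kappa$). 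Adding the $\epscheck \cdot O(d) = O(d/Q)$ slack from the sketching error at each of the $\le Q$ steps at a given level gives $\sum_{q} \sum_i \|a_i^{(q)}\|_{\Delta^{(\qhat,q)}}^2 = O(d)$ per dyadic level, hence $\Otil(d)$ over all $O(\log Q)$ levels. Dividing by $\delta = d/(10 n\log n)$ turns this into $\Otil(n)$ total candidates.

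\textbf{Main obstacle.} The delicate part is making the telescoping argument for $\sum_q \sum_i \|a_i^{(q)}\|_{\Delta^{(\qhat,q)}}^2$ rigorous while juggling two kinds of error: the multiplicative sketch error $\epscheck$ (which I set to $0.1/Q$ precisely so that $Q$ copies of it stay $O(1)$ — this is the ``critical'' choice flagged in the overview, tying the $\Otil(d)$ phase count to the sketch error budget), and the fact that rows are halved, not deleted, so one must track that $a_i^{(q)}$ may differ from $a_i^{(\qhat)}$ by a scaling. I would handle the latter by noting a halved row only \emph{decreases} its contribution, so replacing $a_i^{(q)}$ with $a_i^{(\qhat)}$ in the negative term only helps the bound; then the telescoping is over a genuinely nested sequence of Gram matrices. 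Combined with \cref{lem:checkercorrect} (which guarantees the checker-induced sequence only ever needs to update rows that the locator flags) and the per-candidate cost of $\Otil(d)$, this yields the claimed $\Otil(n)$ total candidate-list size and completes the proof of \cref{lem:locator}.
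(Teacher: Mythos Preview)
Your overall architecture matches the paper's: maintain the rows in a \textsc{HeavyHitter} instance, query with (something equivalent to) $\Delta^{(\qhat,q)}$ at threshold $\Theta(d/(n\log n))$, and use the checker-induced sequence to argue the locator faces a non-adaptive adversary. The simulation argument you give is essentially the paper's.

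The substantive gap is in your bound on the total candidate-list size. Your trace manipulation contains two compounding errors. First, the monotonicity is backwards: $G^{(q)} \preceq G^{(\qhat)}$ gives $(G^{(q)})^{-1} \succeq (G^{(\qhat)})^{-1}$, hence $\mathrm{tr}((G^{(q)})^{-1}) \ge \mathrm{tr}((G^{(\qhat)})^{-1})$, not $\le$. Second, and more seriously, the formula $\sum_i (a_i^{(q)})^\top (G^{(\qhat)})^{-1} a_i^{(q)} = d - \tfrac{1}{\kappa}\mathrm{tr}((G^{(\qhat)})^{-1})$ is false unless $(A^{(q)})^\top A^{(q)} = G^{(\qhat)} - I/\kappa$, i.e.\ no halvings occurred in $[\qhat,q]$. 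Your proposed fix, replacing $a_i^{(q)}$ by $a_i^{(\qhat)}$ in the negative term, makes that term \emph{more} negative and therefore yields a \emph{lower} bound on the difference, not the upper bound you need. The resulting ``telescoped'' quantity $\tfrac{1}{\kappa}\bigl(\mathrm{tr}((G^{(\qhat)})^{-1}) - \mathrm{tr}((G^{(q)})^{-1})\bigr)$ is in fact nonpositive, so the conclusion is vacuous.

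The paper avoids this by telescoping at the \emph{matrix} level rather than the trace level: it bounds $\sum_{(\qhat,q)} \Delta^{(\qhat,q)}$ spectrally by $(1+\epscheck)(\Gtil^{(Q)})^{-1}$ plus an error term $2\epscheck\log n \sum_q (\Gtil^{(q)})^{-1}$. The first piece contributes $O(d)$ to the quadratic form (sum of leverage scores), and the second contributes $O(\epscheck \cdot Q \log n \cdot d)$; the choice $\epscheck = 0.1/Q$ makes this $\Otil(d)$ as well. Dividing by $\delta$ then gives $\Otil(n)$. The point you correctly flagged as the ``main obstacle''---that $a_i^{(q)}$ varies with $q$---is absorbed because for every $q' \le q$ one has $\sum_i (a_i^{(q)})^\top (\Gtil^{(q')})^{-1} a_i^{(q)} \le (1+\epscheck)d$, so each matrix appearing in the spectral bound contributes at most $O(d)$ regardless of which step's rows you contract against.
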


\begin{proof}
    We maintain the rows of $A$ by the heavy hitter data structure in parallel to the checker-induced sequence. This uses a total of $\Otil(n)$ $\textsc{Modify}$ operations.
    At step $q \in [Q]$, the heavy hitter calls $\textsc{Query}(Z^{(\widehat{q}, q)}, \frac{d}{20 n \log n})$ to generate a candidate list.
    The rows in \cref{label:checkerpositive} are then identified by enumerating over the candidate list and checking the inequality
    \[
    \norm{Z^{(\widehat{q}, q)} a^{\left( q \right)}_{i} }_2^2 \geq \frac{d}{10 n \log{n}}.
    \]

    It suffices to show the inputs to  \Cref{lem:heavy-hitter} are not adaptive, so the correctness follows and the candidate set is valid. We prove this using a \emph{simulation} argument. The checker-induced sequence is solely determined by the initial input $A$, the adversary, and the randomness of JL projection matrices in previous batches. The sequence can be simulated by a checker without using any heavy hitters. Therefore, the inputs generated by the adversary are independent with the randomness of the heavy hitter, concluding that the interface with the heavy hitter is non-adaptive.
    
    It remains to bound the total size of the candidate lists for heavy hitter.
    Plugging in the value of $\delta$ into the upper bound on the set returned from \cref{lem:heavy-hitter} gives
    \begin{align} \label{eq:candidate-list-size}
        \sum_{i=1}^{n} \sum_{(\widehat{q}, q)} 
        \norm{Z^{(\widehat{q}, q)} a^{\left( q \right)}_{i} }_2^2 \cdot \frac{20n \log n}{d}
        \le
        \Otil(n / d) \cdot 
        \sum_{i=1}^{n}
        \left( a^{\left( q \right)}_{i} \right)^\top
        \left(
        \sum_{(\widehat{q}, q)} 
        \Delta^{(\widehat{q}, q)}
        \right) 
        a^{\left( q \right)}_{i} .
    \end{align}
    By dyadic tiling, the sum of the approximated Gram matrices can be upper bounded by
    \begin{align*}
        \sum_{(\widehat{q}, q)} 
        \Delta^{(\widehat{q}, q)}
        \preceq (1+\epscheck)
        \left( \Gtil^{(Q)} \right)^{-1}
        +
        2 \log n \cdot \epscheck \sum_{q=1}^{Q-1} 
        \left( \Gtil^{(q)} \right)^{-1},
    \end{align*}
    and then we bound the sum of quadratic forms using the bound on the sum of leverage scores as
    \begin{align*}
       \sum_{i=1}^{n}
        \left( a^{\left( q \right)}_{i} \right)^\top
        \left(
        \sum_{(\widehat{q}, q)} 
        \Delta^{(\widehat{q}, q)}
        \right) 
        a^{\left( q \right)}_{i} 
        \le (1 + \epscheck) d + 2 \log n \cdot \epscheck \cdot Q d \le \Otil(d).
    \end{align*}
    Plugging it into \eqref{eq:candidate-list-size}, we conclude that the total size of the candidate list is bounded by $\Otil(n)$.    
\end{proof}

\subsection{Buffering to Form Batches}

We can now prove the overall decremental bound by buffering the
halving until their decreased leverage scores exceed a constant threshold.
This buffering preserves operator approximation by \cref{lem:perturbapprox},
and the total number of batches is bounded by \cref{lem:batchbound}.
We remark that the batches in the decremental data structure are created lazily by our algorithm, which is different from the batches given by the inputs in \cref{thm:sparsify}.

\begin{proof}(of \cref{thm:decr})
Build a buffer set $H_{buf}$ of the halving operations from the last batch $q_{last}$. We maintain the sum of leverage scores with respect to the sketched Gram matrix $Z^{(q_{last})}$ in the last batch. 

If the sum does not exceed $0.01$, by \cref{lem:perturbapprox}, the current Gram matrix is still $10$-approximated by $Z^{(q_{last})}$ so the leverage score overestimates are good enough.

Once the sum exceeds $0.01$, we create a new batch with all the halving operations in the buffer $H_{buf}$, and then clear the buffer and reset the sum. The total leverage score with respect to the previous batch $(A^{(q_{last})})^\tp A^{(q_{last})} + I/\kappa$ is at most $0.01 \cdot 10 + 1 = 1.1$ and at least $0.01 / 10 = 0.001$, which makes the batch valid as defined in \cref{def:batchedhalvingsequence}.

We can handle halving one row for multiple times in one batch by extending the set to a multiset and tracking the removed leverage scores. Alternatively, in the reduction from \cref{thm:sparsify} to \cref{thm:decr}, we can afford to pay $\Otil(Q_{out})$ extra batches to assure that one row is halved for at most once in one batch, where $Q_{out}$ denotes the number of outer batches in \cref{thm:sparsify}.

\paragraph{Running time.}  
By \cref{lem:locator}, the total size of the candidate list produced by the heavy hitter is $\Otil(n)$. 
The heavy hitter runs in $\Otil(nd)$ total time. 
The checker checks each row of the candidate list in $O(d)$ time, so the total time is also $\Otil(nd)$.

For each batch, the running time is dominated by creating the approximated Gram matrix $\Gtil^{(q)}$. The number of sampled rows is $\Otil(\epscheck^{-2} \cdot d) = \Otil(d^3)$. Computing the Gram matrix requires multiplying a $d \times d^3$ matrix with a $d^3 \times d$ matrix, which runs in $\Otil(d^{3 + \omega})$ time. This concludes the total running time $O(nd + d^{3 + \omega})$ since the number of batches is $\Otil(d)$ by \cref{lem:batchbound}.

\end{proof}

\section{Implementation and Runtime Analysis}

In this section we describe how to implement $\O(\sqrt{n})$ steps of the short-step IPM of \cref{sec:ipm} using the adaptive sparsifier data structure built in \cref{sec:sparsifier} as well as standard heavy-hitter data structures from prior works \cite{BLSS20,BLNPSSSW20}.

\subsection{Primal, Slack, and Gradient Maintenance}
\label{subsec:hhmaintain}

In this section we discuss how to efficiently maintain the vectors $\xbar$, $\sbar$, and $g = \alpha g^t(\xbar, \sbar)$ over the course of $\O(\sqrt{n})$ iterations of \cref{alg:shortstep}. We start by discussing $\sbar$ which is mostly a simple adaptation of previous works \cite{BLSS20,BLNPSSSW20,BLLSSSW21} which uses $\ell_2$-heavy hitters. Then we discuss $\xbar$, which amounts to discussing how to efficiently sample the matrix $R$ to be \emph{valid} (see \cref{def:valid}). Finally, we discuss how to maintain $g$, which is simple given a list of explicit changes to the $\xbar$ and $\sbar$ vectors.

The approximation $\sbar$ can be maintained using the following general lemma, which is an adaptation of \cite[Theorem E.1]{BLLSSSW21}.
\begin{lemma}[Slack maintenance]
\label{lem:slack}
Let $A \in \R^{n \times d}$ be a matrix, $s \in \R^n$ initially be $s \assign \vec{0}$, and $D \in \R^{n \times n}$ be a positive definite block-diagonal matrix, where for $i \in [m]$ we denote the $i$-th block as $D_i \in \R^{n_i \times n_i}$ and $\sum_{i \in [m]} n_i = n$. Let $M = \max_{i \in [m]} n_i$. There is a data structure that supports the following operations, with the following runtimes.
\begin{itemize}
    \item $\textsc{UpdateScaling}(i, M \in \R^{n_i \times n_i})$. Set the $i$-th block of $D$ to $M$, i.e., $D_i \assign M$.
    \item $\textsc{UpdateSlack}(h \in \R^d)$. Set $s \assign s + Ah$, where it is guaranteed that $\|DAh\|_2 \le 1$.
\end{itemize}
The algorithm maintains a vector $\sbar \in \R^n$ satisfying $\|D_i(s_i - \sbar_i)\|_2 \le \eps$ for all $i \in [m]$, and reports changes to $\sbar$ explicitly after each operation. The algorithm succeeds with high probability against an adaptive adversary, with initialization times $\O(\eps^{-2}nd)$ and:
\begin{itemize}
    \item The amortized update time of $\textsc{UpdateScaling}$ is $\O(d)$, and updates $\sbar$ in one coordinate, and
    \item After the $j$-th call to $\textsc{UpdateSlack}$, the algorithm updates $\sbar$ in at most $\O(\eps^{-2}2^{2 v_2(j)})$ coordinates with total update time $\O(\eps^{-2} d \cdot 2^{2v_2(j)})$.
\end{itemize}
\end{lemma}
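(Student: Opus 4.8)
The plan is to adapt the $\ell_2$-heavy-hitter based vector-maintenance data structure of \cite{BLLSSSW21} (Theorem E.1) to the block setting. The key structural observation is that $s$ starts at $\vec 0$ and only changes by $Ah$, so at all times $s = Ac$ where $c \coloneqq \sum_j h^{(j)} \in \R^d$ is the accumulated increment; we store $c$ explicitly, so the \emph{exact} value of a block $s_i = A_{S_i} c$ can be recomputed in $O(n_i d) = \Otil(d)$ time. We maintain $\sbar$ with the invariant that each $\sbar_i$ equals $s_i$ \emph{as of the last time block $i$ was refreshed}, where refreshing block $i$ means setting $\sbar_i \gets A_{S_i} c$ exactly and recording the current $c$; this costs $\Otil(d)$ and touches $O(n_i) = \Otil(1)$ coordinates. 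On $\textsc{UpdateScaling}(i, M)$ we set $D_i \gets M$ and refresh block $i$ (so its error becomes $0$), meeting the claimed $\Otil(d)$ amortized bound. Thus the only real work is, after $\textsc{UpdateSlack}$ calls, to \emph{detect} which blocks drifted, i.e.\ have $\norm{D_i (s_i - \sbar_i)}_2 > \eps$, and refresh them.

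For detection we use a binary-counter / dyadic-window scheme together with \Cref{thm:hh}. We keep bucket vectors $B_0, \dots, B_{O(\log n)} \in \R^d$; at the $j$-th $\textsc{UpdateSlack}$, with $\ell^* = v_2(j)$ the $2$-adic valuation of $j$, we fold $B_0, \dots, B_{\ell^*-1}$ and the new increment $h^{(j)}$ into $B_{\ell^*}$ in the usual carry fashion, forming along the way, for every $\ell \le \ell^*$, the sum $z^{(j,\ell)}$ of the last $2^\ell$ increments. For each window $W = (j-2^\ell, j]$ we want every block $i$ with $\norm{D_i A_{S_i} z^{(j,\ell)}}_2 \ge \theta$ for $\theta = \Theta(\eps/\log n)$; since blocks have $O(1)$ size this is just an $\ell_2$-heavy-hitter query for the coordinates of $D A z^{(j,\ell)}$ of magnitude $\gtrsim \theta$. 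To answer it without touching $\R^n$ we maintain, for each window size (with dimension dictated by that size), the sketch $QDA \in \R^{N \times d}$, $Q$ the matrix of \Cref{thm:hh}: building costs $\Otil(\eps^{-2} nd)$; since each column of $Q$ has only $\Otil(1)$ nonzeros, $\textsc{UpdateScaling}(i, M)$ updates $QDA$ in $\Otil(d)$ time; a query computes $(QDA)z^{(j,\ell)} \in \R^N$ and calls $\textsc{Recover}$, in $\Otil(\eps^{-2} 2^{2\ell} d)$ time, returning $\Otil(\eps^{-2} 2^{2\ell})$ candidate coordinates (using $\norm{DAz^{(j,\ell)}}_2 = O(2^\ell)$, discussed below), each verified exactly in $O(d)$ time and refreshed if genuinely drifted. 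Summing over window sizes $2^0, \dots, 2^{v_2(j)}$ yields exactly the stated $\Otil(\eps^{-2} d\cdot 2^{2v_2(j)})$ time and $\Otil(\eps^{-2} 2^{2v_2(j)})$ changed coordinates after the $j$-th $\textsc{UpdateSlack}$ (and inside the $\Otil(\sqrt n)$-iteration IPM this telescopes to $\Otil(\eps^{-2} nd)$, matching initialization).

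The step that needs care is the error bound, which I expect to be the main obstacle. Fix a block $i$ and a time $j$, and let $t_0 < j$ be the last time block $i$ was refreshed; since every $\textsc{UpdateScaling}(i,\cdot)$ refreshes block $i$, the scaling $D_i$ is unchanged on $(t_0, j]$, and $\sbar_i = A_{S_i} c^{(t_0)}$ while $s_i = A_{S_i} c^{(j)}$. A greedy right-to-left argument shows that $(t_0, j]$ is a disjoint union of $O(\log n)$ windows of the form $(b-2^\ell, b]$ with $\ell \le v_2(b)$ — exactly the windows queried at their right endpoint $b$ — and each has $b > t_0$, so when it was queried block $i$ had not been refreshed since $\le t_0$; hence, on the high-probability event that \Cref{thm:hh} never fails (a union bound over the $\mathrm{poly}(n)$ queries), that query did not flag $i$, so $\norm{D_i A_{S_i} z_{(b-2^\ell,b]}}_2 < \theta$. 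The increments over these windows telescope to $c^{(j)} - c^{(t_0)}$, so the triangle inequality gives $\norm{D_i (s_i - \sbar_i)}_2 < O(\log n)\cdot\theta \le \eps$ for the right constant in $\theta$; checking that the greedy cover uses only windows actually queried \emph{after} $t_0$ is the delicate point.

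Two remaining points. First, the normalization needed for $\textsc{Recover}$ (which assumes input $\ell_2$ norm at most $1$): $\norm{DAz_W}_2 \le \sum_{j'\in W}\norm{DAh^{(j')}}_2$, and each term is $O(1)$ by the input promise once $D$ agrees now with when $h^{(j')}$ was inserted — this fails only on blocks whose scaling changed inside $W$, but those were already refreshed at that moment, so before querying $W$ we refresh all such blocks (whose total number over all windows is $\Otil(n)$, each $\textsc{UpdateScaling}$ lying in $O(\log n)$ windows), after which $\norm{DAz_W}_2 = O(|W|)$. Second, the adaptive adversary: the set flagged in each window, $\{i : \norm{D_i A_{S_i} z_W}_2 \ge \theta\}$, is a deterministic function of the input sequence, and every refreshed block is set to its exact current value, so the whole history of $\sbar$ (and the reported coordinate changes) is, conditioned on \Cref{thm:hh} never failing, a deterministic function of the inputs; hence the adversary learns nothing about $Q$, the failure probability is as against an oblivious adversary, and the guarantee holds against an adaptive one.
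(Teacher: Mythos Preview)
Your approach is essentially the paper's: dyadic windows, one heavy-hitter sketch per window scale, refresh a block on $\textsc{UpdateScaling}$, decompose $(t_0,j]$ into $O(\log n)$ queried windows and use the triangle inequality for correctness, and argue determinism of the output (conditioned on \cref{thm:hh} succeeding) for the adaptive-adversary guarantee. The paper makes exactly these moves.

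There is one genuine gap, in your normalization paragraph. You claim that refreshing $\sbar_i$ on all blocks whose scaling changed inside $W$ yields $\norm{DAz_W}_2 = O(|W|)$. This is false as written: $\norm{DAz_W}_2$ depends only on $D$, $A$, and $z_W$, and updating $\sbar$ does nothing to it. If $D_i$ was replaced at some point inside the window, then $\norm{D_i A_{S_i} h^{(j')}}_2$ for $j'$ before the replacement is not controlled by the input promise (which only bounds the norm in the scaling active \emph{when} $h^{(j')}$ arrived), so $\norm{DAz_W}_2$ can be arbitrarily large and $\textsc{Recover}$ on $(QDA)z_W$ is meaningless. The fix, which is what the paper does, is to maintain for each scale $k$ a separate matrix $D^{(k)}$ that \emph{zeros out} every block updated since the start of the current level-$k$ window, and sketch $Q_k D^{(k)} A$ instead of $Q_k D A$. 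Then on each block $D^{(k)}$ either equals the scaling that was in force at time $j'$ or is zero, so $\norm{D^{(k)} A h^{(j')}}_2 \le \norm{D^{(j')} A h^{(j')}}_2 \le 1$ and the norm over the window is at most $2^k$. Maintaining $Q_k D^{(k)} A$ costs the same $\O(d)$ per $\textsc{UpdateScaling}$ (zero one block now, restore it when the window rolls over, charging the restore to the zeroing), and since those blocks were already refreshed the rest of your correctness and runtime arguments go through unchanged.
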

\begin{proof}
Let us start by defining the algorithm. Define $\beps = \frac{\eps}{2 \log n}$ and for $k \in \mathbb{Z}_{\ge0}$ such that $2^k \le \sqrt{n}$ define $\beps_k \coloneqq \frac{\beps}{5M \cdot 2^k}$. Use \cref{thm:hh} to define heavy hitter matrices $Q_k \in \R^{\O(\beps_k^{-2}) \times n}$. The algorithm will also maintain matrices $D^{(k)} \in \R^{n \times n}$ defined as follows. Let $t$ be the current total number of calls to $\textsc{UpdateSlack}$ and let $\that = 2^k \lfloor t/2^k \rfloor$, i.e., the largest multiple of $2^k$ which is at most $t$. Define $D^{(k)}$ to equal $D$ on all blocks which were not updated by $\textsc{UpdateScaling}$ between times $\that$ and $t$, and otherwise set the block to be $0$.

We first argue that we can maintain the matrix $Q_k D^{(k)} A$ in amortized time $\O(d)$ per call to $\textsc{UpdateScaling}$. Indeed, in a call to $\textsc{UpdateScaling}$, one block of $D^{(k)}$ may get set to $0$, which sets at most $M$ rows of $D^{(k)}A$ to be $0$. Because each column of $Q_k$ has $\O(1)$ nonzero entries, we can maintain $Q_k D^{(k)} A$ in $\O(Md)$ time. Now, during a call to $\textsc{UpdateSlack}$ that makes $t$ a multiple of $2^k$, $D^{(k)}$ gets reset to $D$. Because every block of $D^{(k)}$ may get set to $0$ and reset to $D_i$ at most once, the runtime of this step can get charged to $\textsc{UpdateScaling}$.

Now we discuss how to implement the $t$-th call to $\textsc{UpdateSlack}$ for $t \ge 1$. Let $h^{(t})$ be the vector $h$ in the $t$-th call to $\textsc{UpdateSlack}(h)$ and let $s^{(t)}$ be the slack vector. For $k \in \mathbb{Z}_{\ge0}$ such that $2^k \mid t$, call $\textsc{Recover}(v)$ for $v = Q_k D^{(k)} A \sum_{s = t-2^k+1}^t h^{(s)}$ which returns a subset $S_k \subseteq [n]$ containing coordinates $j$ such that
\[ \left|\left(D^{(k)} A \sum_{s = t-2^k+1}^t h^{(s)}\right)_j\right| \ge \beps_k \left\|D^{(k)} A \sum_{s = t-2^k+1}^t h^{(s)} \right\|_2. \]
Now for a block $i$ containing $j \in S_k$, if there were no calls to $\textsc{UpdateScaling}(i, \cdot)$ in the times $[t - 2^k + 1, t]$, and
\begin{equation} \left\|D_i(s^{(t)}_i - s^{(t-2^k+1)}_i)\right\|_2 \ge \beps, \label{eq:checks} \end{equation}
then set $\sbar_i = s^{(t)}_i$. Finally, also update $\sbar_i \assign s^{(t)}_i$ after every call to $\textsc{UpdateScaling}(i, \cdot)$.

\paragraph{Analysis.} We now analyze the algorithm described above. We already described how to maintain the matrices $Q_k D^{(k)}A$ in amortized $\O(d)$ time. Updating $\sbar$ during a call to $\textsc{UpdateScaling}$ also costs $\O(d)$ time.

For $\textsc{UpdateSlack}$ we will bound $|S_k|$ and the time needed to find $S_k$. The matrix $Q_k D^{(k)} A$ is $\O(\beps_k^{-2}) \times d$, so computing
\[ Q_k D^{(k)} A \sum_{s = t-2^k+1}^t h^{(s)} \] costs time $\O(\beps_k^{-2}d) = \O(\eps^{-2}2^{2v_2(t)} d)$ (we can use partial sums to find the vector $\sum_{s = t-2^k+1}^t h^{(s)}$ in $O(d)$ time). Similarly, $|S_k| \le \O(\beps_k^{-2}) = \O(\eps^{-2}2^{2v_2(t)})$ by the guarantees of \cref{thm:hh}.
Thus checking the relevant blocks in \eqref{eq:checks} costs time $\O(d|S_k|) \le \O(\eps^{-2}2^{2v_2(t)} d)$.

All that is left is to verify the correctness of the algorithm. First, note that
\begin{align*}
    \left\|D^{(k)} A \sum_{s = t-2^k+1}^t h^{(s)} \right\|_2 &\le \sum_{s=t-2^k+1} \left\|D^{(k)} Ah^{(s)}\right\|_2 \le 2^k,
\end{align*}
because by definition, on each block either $D^{(k)}$ was not updated in times $[t - 2^k + 1, t]$ or was set to $0$. Thus, $S_k$ contains all coordinates $j$ such that
\[ \left(D(s^{(t)} - s^{(t-2^k)})\right)_j \ge \beps_k 2^k = \frac{\beps}{5M} \] whp, by the guarantees of $\textsc{Recover}$ in \cref{thm:hh}. Thus every block $i$ with $\|D_i(s^{(t)}_i - s^{(t-2^k)})_i\|_2 \ge \frac{\beps}{2}$ is checked in \eqref{eq:checks} because blocks are size $n_i \times n_i$ for $n_i \le M$. This establishes that each $\|D_i(\sbar_i - \sbar)\|_2 \le \eps$ at all times because every interval can be broken up into at most $2 \log n$ intervals of the form $[t-2^k+1, t]$.

The algorithm succeeds against an adaptive adversary because the update sequence of $\sbar$ works against an adaptive adversary because the coordinates it is defined on only depend on the $h^{(t)}$.
\end{proof}

Next we describe the main results we need for maintaining $\xbar$. The key point is to maintain a data structure that can sample a valid block-diagonal matrix $R$. For this, we first need a JL-based algorithm that lets us sample a coordinate of a vector proportional to its contribution to the $\ell_2$-norm. This is based on \cite[Lemma B.3]{BLLSSSW21}, adapted to our setting.
\begin{lemma}
\label{lem:samplel2}
There is a data structure that given a matrix $A \in \R^{n \times d}$ and block-diagonal PSD matrix $D \in \R^{n \times n}$ with blocks $D_i \in \R^{n_i \times n_i}$ for $i \in [m]$ and $M = \max_{i \in [m]} n_i$, initializes in time $\O(nd)$ and supports the following operations.
\begin{itemize}
    \item $\textsc{UpdateScaling}(i, N \in \R^{n_i \times n_i})$. Sets $D_i \assign N$.
    \item $\textsc{Sample}(h \in \R^d)$. Returns a random single blocks $b' \in [m]$ such that for all $b \in [m]$ (corresponding to block $B \subseteq [n]$) it holds that
    \[ \Pr[b' = b] = \frac{\sum_{j \in B} (DAh)_j^2}{\|DAh\|_2^2}. \]
\end{itemize}
The algorithm initializes in time $\O(nd)$ and each operation can be handled in $\O(d)$ time whp.
\end{lemma}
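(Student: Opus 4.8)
The plan is to build a balanced binary tree over the $m$ blocks, attach a small Johnson–Lindenstrauss sketch of the scaled rows $D_SA_S$ to every subtree, and implement $\textsc{Sample}$ by a randomized root-to-leaf descent guided by these sketches, followed by a rejection step that corrects the output distribution to be \emph{exactly} the one prescribed. Concretely, I would fix a balanced binary tree $\cT$ with leaves $1,\dots,m$, write $S_v\subseteq[n]$ for the coordinate set of the blocks under node $v$, draw a single Gaussian matrix $\Pi\in\R^{r\times n}$ with $r=\O(1)$ (polylogarithmic) rows, and store at each node $v$ the $r\times d$ matrix $W_v\coloneqq \Pi_{:,S_v}D_{S_v}A_{S_v}$. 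Since $W_v$ is the sum of the $W$'s of its two children, all $W_v$ are built bottom-up in time $\O(nd)$ (leaves cost $\O(n_i d)$, which sums to $\O(nd)$), and a call $\textsc{UpdateScaling}(i,N)$ touches only the block-$i$ contribution of the $O(\log n)$ ancestors of leaf $i$, costing $\O(n_i d)=\O(d)$ total. By \cref{lem:JL} with $\epsjl=\Theta(1/\log n)$ and a union bound over the $O(n)$ nodes, for any fixed vector $h'$ we get $r^{-1}\|W_v h'\|_2^2 \approx_{O(1/\log n)} \|D_{S_v}A_{S_v}h'\|_2^2$ for all $v$ at once, with high probability.

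To answer $\textsc{Sample}(h)$ I would descend from the root: at an internal node $v$ with children $u_1,u_2$ compute the two sketched norms $\widehat q_j\coloneqq\|W_{u_j}h\|_2^2$ and recurse into $u_j$ with probability $\widehat q_j/(\widehat q_1+\widehat q_2)$, recording the realized branch probabilities. This reaches a block $b$ with probability $\widetilde p_b$ equal to the product of the recorded probabilities; because each branch probability is within $1\pm O(1/\log n)$ of its ``true'' counterpart and the tree has depth at most $\log_2 n$, this guarantees $\widetilde p_b\ge\tfrac12 p_b$ simultaneously for every block, where $p_b$ is the target probability. The descent costs $\O(d)$. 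I would then compute the \emph{exact} quantity $q_b\coloneqq\|D_bA_bh\|_2^2=\sum_{j\in B}(DAh)_j^2$ in time $O(n_b d)=\O(d)$, form the estimate $\widehat Q\coloneqq r^{-1}\|W_{\mathrm{root}}h\|_2^2\ge\tfrac12\|DAh\|_2^2$, set the constant $c\coloneqq 8\widehat Q$ (so $c\ge\max_b q_b/\widetilde p_b$), and \emph{accept} $b$ with probability $q_b/(c\,\widetilde p_b)\le 1$, restarting from the root on rejection. Conditioned on acceptance, $b$ is returned with probability proportional to $\widetilde p_b\cdot q_b/(c\,\widetilde p_b)=q_b/c$, i.e.\ exactly $q_b/\|DAh\|_2^2$ as required; and the per-trial acceptance probability is $\sum_b\widetilde p_b\cdot q_b/(c\,\widetilde p_b)=\|DAh\|_2^2/c\ge\tfrac1{16}$, so the number of trials is $O(\log n)$ with high probability and the whole query runs in $\O(d)$ time whp.

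The delicate point — and the reason the \emph{exact} probabilities in the statement are not immediate — is that the normalizer $\|DAh\|_2^2$ cannot be computed exactly within the $\O(d)$-per-query budget, so the tree descent can only produce an approximate proposal $\widetilde p$. The rejection step is exactly what rescues this: the (only approximately known) normalizer cancels between the proposal weight $\widetilde p_b$ and the acceptance weight $q_b/(c\,\widetilde p_b)$, so the returned distribution is correct regardless of how inaccurate $\widetilde p$ is, and the sketch guarantees are used only to certify that $c$ is a valid upper bound on $\max_b q_b/\widetilde p_b$ and that the acceptance rate is $\Omega(1)$ — hence they affect only the running time, not correctness. Since the lemma asks only for a per-operation high-probability bound, a union bound over the $O(n)$ tree nodes for each individual query is enough and no randomness needs to be hidden from the queries; if robustness against an adaptive adversary were wanted one would additionally refresh $\Pi$ between phases.
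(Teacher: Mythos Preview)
Your proposal is correct and follows essentially the same approach as the paper: build a balanced binary tree over the blocks, store a JL sketch of $D_{S_v}A_{S_v}$ at each node, descend root-to-leaf guided by sketched norms, and correct to the exact target distribution via rejection at the leaf. The only cosmetic difference is that you use one global Gaussian $\Pi$ (exploiting $W_v=W_{u_1}+W_{u_2}$) whereas the paper draws a separate JL matrix per node, and you spell out the rejection constant $c=8\widehat Q$ a bit more explicitly; neither changes the argument.
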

\begin{proof}
Build a binary tree of intervals over the block indices $[m]$ and for an interval $I \subseteq [m]$ let $S_I \subseteq [n]$ be the union of the coordinates in the blocks in $I$, and let $J_I \in \R^{\O(1) \times S_I}$ be a JL matrix. Our algorithm will maintain the matrices $J_I D_I A_I h$ where $D_I \in \R^{S_I \times S_I}$ is the restriction of $D$ to the blocks $I$, and $A_I \in \R^{S_I \times d}$ is the restriction of $A$ to the coordinates in $S_I$. Because $\sum_I |S_I| \le \O(n)$, the time to initialize all the $J_I D_I A_I$ matrices is $\O(nd)$. Also, because each block $i \in [m]$ is only in $O(\log n)$ intervals $I$, the total time to update the matrices $J_I D_I A_I$ during a call to $\textsc{UpdateScaling}$ is bounded by $\O(d)$.

Now we describe how to implement $\textsc{Sample}(h)$. Initialize the interval $I = [m]$. While $I$ is not size $1$, let $I_L$ and $I_R$ be its children and consider the quantities $\|J_ID_IA_Ih\|_2^2 \approx_{1+\eps} \|D_IA_Ih\|_2^2$, $\|J_{I_L}D_{I_L}A_{I_L}h\|_2^2 \approx_{1+\eps} \|D_{I_L}A_{I_L}h\|_2^2$, $\|J_{I_R}D_{I_R}A_{I_R}h\|_2^2 \approx_{1+\eps} \|D_{I_R}A_{I_R}h\|_2^2$ where $\eps \le 1+\frac{1}{100\log n}$. Now, go from $I$ down to $I_L$ with probability
\[ \frac{\|J_{I_L}D_{I_L}A_{I_L}h\|_2^2}{\|J_{I_L}D_{I_L}A_{I_L}h\|_2^2 + \|J_{I_R}D_{I_R}A_{I_R}h\|_2^2} \]
and move to $I_R$ otherwise. Finally, when you get to a single block $i \in [m]$ define
\[ p_i = \frac{\|D_i A_i h\|_2^2}{2 \prod_{I \ni i} \frac{\|J_{I_Y}D_{I_Y}A_{I_Y}h\|_2^2}{\|J_{I_L}D_{I_L}A_{I_L}h\|_2^2 + \|J_{I_R}D_{I_R}A_{I_R}h\|_2^2}} \]
where $Y \in \{L, R\}$ such that $i \in I_Y$. It can be checked that $p_i \le \frac{1}{2}(1+\eps)^{3 \log n} < 1$ and $p_i \ge \frac{1}{2}(1-\eps)^{3 \log n} > 1/4$. Now, return $i$ with probability $p_i$, and otherwise return nothing. If nothing is returned, restart the process. We need at most $\O(1)$ runs with high probability because $p_i > 1/4$. Evidently, each step can be implemented in time $\O(d)$ because computing each $\|J_ID_IA_Ih\|_2^2$ and $\|D_i A_i h\|_2^2$ takes $\O(Md)$ time.
\end{proof}

Finally, we establish that sampling by a combination of (1) proportional to the $\ell_2$-norm of blocks, and (2) uniform, and (3) leverage score overestimates, produces a \emph{valid} block-diagonal matrix $R$, as defined in \cref{def:valid}.

\begin{lemma}
\label{lem:validsampling}
Consider a block-diagonal matrix $D \in \R^{n \times n}$, $A \in \R^{n \times d}$, and $\delta \in \R^n$. For $i \in [m]$ corresponding to block $S_i \subseteq [n]$ let $\tilde{\tau}_i \ge \sum_{j \in S_i} \tau(DA)_j$ and $T = \sum_{i \in [m]} \tilde{\tau}_i$.

Let $K = 2\sqrt{m} + T$. Sample a single $i \in [m]$ with probability
\[ p_i \coloneqq \frac{\sqrt{m}\left(\frac{\|\delta_i\|_2^2}{\|\delta\|_2^2} + \frac{1}{m}\right) + \tilde{\tau}_i}{K}. \]
For a sufficiently large constant $C$ take $K' = C(\alpha\gamma)^{-2}\log n \cdot K$ samples $i_1, \dots, i_{K'}$, and let \[ R = \sum_{j=1}^{K'} \frac{1}{p_{i_j}K'} I_{S_{i_j}}, \] where $I_{S_i}$ is the identity matrix on block $i$. Then $R$ is valid according to \cref{def:valid}.
\end{lemma}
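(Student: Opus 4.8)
The plan is to verify the six conditions of \cref{def:valid} one at a time, relying on standard matrix-Chernoff / Bernstein arguments and the specific form of $p_i$. First, the (Block form) property is immediate since $R$ is defined as a sum of identity matrices $I_{S_{i_j}}$ supported on full blocks, so all coordinates in a block share the same multiplier. Next, (Expectation): a single sample contributes $\frac{1}{p_i K'} I_{S_i}$ with probability $p_i$, so its expectation restricted to block $i$ is $\frac{1}{K'} I_{S_i}$; summing over the $K'$ independent samples and over all blocks gives $\E[R] = I$.

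For the (Variance) bound I would compute, for a fixed coordinate $\ell$ in block $i$, $\Var[R_{\ell\ell}\delta_\ell] \le \E[R_{\ell\ell}^2]\delta_\ell^2 \le \frac{1}{K' p_i}\delta_\ell^2$ (using that one sample has second moment $\frac{1}{K'^2 p_i}$ and there are $K'$ of them, the cross terms vanishing by independence and $\E[R]=I$ handling the mean-squared part up to constants). Now plug in $p_i \ge \frac{\sqrt m}{K}\cdot\frac{\|\delta_i\|_2^2}{\|\delta\|_2^2}$, so $\frac{1}{p_i} \le \frac{K\|\delta\|_2^2}{\sqrt m\,\|\delta_i\|_2^2}$, and then $\frac{\delta_\ell^2}{p_i} \le \frac{K\|\delta\|_2^2 \delta_\ell^2}{\sqrt m \|\delta_i\|_2^2} \le \frac{K\|\delta\|_2^2 |\delta_\ell|}{\sqrt m}$ since $\delta_\ell^2/\|\delta_i\|_2^2 \le 1$ and $|\delta_\ell| \le \|\delta_i\|_2$... more carefully $\delta_\ell^2 \le \|\delta_i\|_2 \cdot |\delta_\ell|$ is false in general, so instead bound $\delta_\ell^2/\|\delta_i\|_2^2 \le 1$ directly to get $\frac{\delta_\ell^2}{p_i} \le \frac{K\|\delta\|_2^2}{\sqrt m}$; combined with $K/\sqrt m \le 1 + T/\sqrt m$ and $K' = C(\alpha\gamma)^{-2}\log n\cdot K$ this yields $\Var[R_{\ell\ell}\delta_\ell] \le \frac{\|\delta\|_2^2}{K'}\cdot\frac{K}{\sqrt m} \le \frac{(\alpha\gamma)^2\|\delta\|_2^2}{C\log n}$, which one then massages into the required $\frac{\alpha|\delta_\ell|\|\delta\|_2}{C_{var}^2}$ form by additionally exploiting the uniform component $\frac{1}{m}$ of $p_i$ when $|\delta_\ell|$ is small — this bookkeeping is the fiddliest part of the routine calculations. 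The (Covariance) bound $\E[R_{ii}R_{jj}] \le 2$ for $i,j$ in different blocks follows because across the $K'$ independent samples the events landing in block $i$ versus block $j$ are negatively correlated within a single draw and independent across draws, so $\E[R_{ii}R_{jj}] \le \E[R_{ii}]\E[R_{jj}] + o(1) = 1 + o(1) \le 2$.

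The (Maximum) property $\|R\delta - \delta\|_\infty \le \frac{\alpha\|\delta\|_2}{C_{var}^2}$ with high probability follows from a scalar Bernstein inequality applied coordinatewise: each coordinate $\ell$ of $R\delta - \delta$ is a sum of $K'$ independent mean-zero terms each bounded by $\frac{|\delta_\ell|}{K'p_i}$ in magnitude, and using $\frac{1}{p_i} \le \frac{K\|\delta\|_2^2}{\sqrt m \|\delta_i\|_2^2}$ together with $K' = \Theta((\alpha\gamma)^{-2}K\log n)$ gives concentration at the claimed scale after a union bound over $n$ coordinates (the extra $\log n$ in $K'$ absorbs it). Finally, the (Spectral approximation) property is exactly the conclusion of \cref{lem:sample}: the scaling $\frac{1}{p_{i_j}K'} I_{S_{i_j}}$ is precisely the reweighting that makes $A^\top \g^2\Phi(x)^{-1/2} R\, \g^2\Phi(x)^{-1/2}A$ an unbiased importance-sampling estimator of $A^\top\g^2\Phi(x)^{-1}A$, and since $p_i \ge \frac{\tilde\tau_i}{K}$ dominates (a constant times) the block leverage score of $\g^2\Phi(x)^{-1/2}A$ — here I would invoke the hypothesis $\tilde\tau_i \ge \sum_{j \in S_i}\tau(DA)_j$ with $D = \g^2\Phi(x)^{-1/2}$ — the number of samples $K' = \Theta(\alpha^{-2}\log n\cdot K) \ge \Theta(\alpha^{-2}\log n \cdot T)$ meets the threshold $100\alpha^{-2}\log n\cdot\sum w_i$ of \cref{lem:sample}, giving $A^\top\g^2\Phi(x)^{-1/2}R\g^2\Phi(x)^{-1/2}A \approx_\alpha A^\top\g^2\Phi(x)^{-1}A$ whp.

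The main obstacle I anticipate is not any single inequality but rather threading the three sampling components — the $\ell_2$-proportional term $\|\delta_i\|_2^2/\|\delta\|_2^2$, the uniform term $1/m$, and the leverage term $\tilde\tau_i$ — so that \emph{all six} conditions hold simultaneously with the \emph{same} choice of $K'$ and constants. In particular, the Variance condition wants $p_i$ large where $|\delta_i|$ is large (handled by the $\ell_2$ term), the Maximum condition additionally needs control when $|\delta_i|$ is small (handled by the uniform $1/m$ term, which guarantees $p_i \ge \sqrt m/(mK) = 1/(\sqrt m K)$ everywhere), and the Spectral approximation needs $p_i \gtrsim \tilde\tau_i/K$ (the leverage term); verifying that $K = 2\sqrt m + T$ is exactly the normalization making $\sum_i p_i = 1$ while keeping each individual lower bound strong enough is the crux, and is where I would spend the most care.
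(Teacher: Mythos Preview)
Your proposal follows the same overall structure as the paper's proof---verifying the six items of \cref{def:valid} in turn, with essentially the same arguments for Block form, Expectation, Covariance (negative correlation), and Spectral approximation (invoking \cref{lem:sample} with $p_i \ge \tilde\tau_i/K$). The one place where you diverge, and where there is a real gap, is the Variance bound (and the closely related per-term bound feeding into Maximum).

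You lower-bound $p_i$ only by its $\ell_2$-proportional component $\frac{\sqrt m}{K}\cdot\frac{\|\delta_i\|_2^2}{\|\delta\|_2^2}$, which produces $\Var[R_{\ell\ell}\delta_\ell] \lesssim \frac{K\|\delta\|_2^2}{K'\sqrt m}$. This is missing the required $|\delta_\ell|$ factor, and your proposed fix (``exploiting the uniform component $1/m$ when $|\delta_\ell|$ is small'') is left as unspecified ``fiddly bookkeeping.'' The paper dispatches this in one line by applying AM--GM \emph{inside} $p_i$, combining the $\ell_2$ and uniform parts simultaneously:
\[
\sqrt m\left(\frac{\|\delta_i\|_2^2}{\|\delta\|_2^2} + \frac{1}{m}\right)
\;\ge\; 2\,\frac{\|\delta_i\|_2}{\|\delta\|_2}
\;\ge\; \frac{|\delta_\ell|}{\|\delta\|_2},
\]
the last step because $|\delta_\ell| \le \|\delta_i\|_2$. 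This gives $p_i \ge \frac{|\delta_\ell|}{K\|\delta\|_2}$ directly, hence $\Var[R_{\ell\ell}\delta_\ell] \le \frac{\delta_\ell^2}{p_iK'} \le \frac{K|\delta_\ell|\|\delta\|_2}{K'}$ with no case split. The same inequality $\frac{|\delta_\ell|}{p_i} \le K\|\delta\|_2$ also supplies the per-summand magnitude bound for Bernstein in the Maximum property. This AM--GM step is precisely the reason for the $\sqrt m(\,\cdot\, + 1/m)$ shape of $p_i$, and it turns what you flagged as the hardest part into a triviality.
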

\begin{proof}
The first two items of \cref{def:valid} follow by construction. For item 3 (Variance), let $E_j$ be the $i$-th entry of $\frac{1}{p_{i_j}K'} I_{S_{i_j}}$ so that $\E[E_j] = \frac{1}{K'}$ and $R_{ii} = \sum_{j \le K'} E_j$. Let $b$ be the block containing $i$. Then
\begin{align*}
\Var(R_{ii}) &= \E[R_{ii}^2] - 1 = \sum_j \E[E_j^2] + \sum_{j \neq j'} \E[E_j E_{j'}] - 1 \\
&= K' \frac{1}{(p_b K')^2} p_b + \frac{K'(K'-1)}{(K')^2} - 1 \le \frac{1}{p_b K'}.
\end{align*}
Also, note that
\[ \frac{1}{p_b K'} \ge \frac{K}{K' \cdot \frac{|\delta_i|}{\|\delta\|_2}}, \]
where we have applied the inequality $a+b \ge 2\sqrt{ab}$ to say that
\[ \sqrt{n}\left(\frac{\|\delta_b\|_2^2}{\|\delta\|_2^2} + \frac{1}{m}\right) \ge \frac{|\delta_b\|_2}{\|\delta\|_2} \ge \frac{|\delta_i|}{\|\delta\|_2^2}. \]
Thus
\[ \Var(R_{ii}\delta_i) \le \frac{K}{K' \cdot \frac{|\delta_i|}{\|\delta\|_2}} \cdot \delta_i^2 = \frac{K|\delta_i|\|\delta\|_2}{K'}, \] which completes the proof by the choice of $K'$.
Item 4 (Covariance) follows because $R_{ii}$ and $R_{jj}$ are negatively correlated. Item 5 (Maximum) follows because the maximum value of $E_j\delta_i$ is at most $\frac{1}{p_iK'}\delta_i \le \frac{K\|\delta\|_2}{K'}$, so the result follows from this plus the bound on $\Var(R_{ii}\delta_i)$, and Bernstein's inequality. Finally, item 6 (Spectral approximation) follows by the matrix Bernstein bound (see \cref{lem:sample}) applied to our choice of sampling probabilities $p_i$, which are leverage score overestimates.
\end{proof}
Note that we can sample according to the necessary probabilities $p_i$ as defined in \cref{lem:validsampling} by using \cref{lem:samplel2}.

\subsection{Initial and Final Point}
\label{subsec:initialfinal}

To initialize the IPM with a well-centered point, we can directly use \cite[Lemma D.2]{LSZ19}. To prove that a well-centered point for small path parameter $t$ is approximately optimal, we mimic the proof of \cite[Lemma D.3]{LSZ19} combined with \cite[Lemma 4.11]{BLLSSSW21}.

\begin{lemma}[Final point]
\label{lem:finalpoint}
Given an $\eps$-well-centered point $(x, s)$ for path parameter $t$, we can compute a feasible pair $(x^{(\final)}, s^{(\final)})$ such that:
\begin{enumerate}
    \item $A^\top x^{(\final)} = b$ and $s^{(\final)} = c - Ay$ for some $y \in \R^d$, and
    \item $c^\top x^{(\final)} - \min_{\substack{x \in K_1 \times \dots \times K_m \\ A^\top x = b}} c^\top x \lesssim nt$,
\end{enumerate}
\end{lemma}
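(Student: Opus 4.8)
The plan is to project the given $\eps$-well-centered pair $(x,s)$ to a point that is \emph{exactly} primal feasible by a single projected Newton step, and then bound the resulting duality gap by combining weak duality with the defining gradient inequality of self-concordant barriers, following \cite[Lemma D.3]{LSZ19} together with \cite[Lemma 4.11]{BLLSSSW21}.

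\emph{Rounding to feasibility.} Set $x^{(\final)} \coloneqq x - \nabla^2\Phi(x)^{-1} A\left(A^\top \nabla^2\Phi(x)^{-1} A\right)^{-1}\left(A^\top x - b\right)$ and $s^{(\final)} \coloneqq s$. Then $A^\top x^{(\final)} = b$ by direct computation, and $s^{(\final)} = c - Ay$ because $(x,s)$ was dual feasible, which gives item~1. Furthermore $\|\nabla^2\Phi(x)^{1/2}(x^{(\final)}-x)\|_2^2 = (A^\top x - b)^\top(A^\top\nabla^2\Phi(x)^{-1}A)^{-1}(A^\top x - b) \le (\alpha\eps)^2$ by the primal feasibility bound in \cref{def:wellcenter}, so each block obeys $\|x^{(\final)}_i - x_i\|_{\nabla^2\phi_i(x_i)} \le \alpha\eps < 1$, which keeps $x^{(\final)}$ in the interior of $K_1\times\dots\times K_m$ by the Dikin-ellipsoid property of self-concordance.

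\emph{Duality gap.} Let $x^\star$ be an optimal primal solution. Since $x^{(\final)}$ and $x^\star$ both satisfy $A^\top\cdot = b$ and $s^{(\final)} = c - Ay$, we have $c^\top x^{(\final)} - c^\top x^\star = (s^{(\final)})^\top(x^{(\final)} - x) + \sum_{i=1}^m s_i^\top(x_i - x_i^\star)$. Writing $s_i = t\left(\mu_i^t(x,s) - \nabla\phi_i(x_i)\right)$ and using $\gamma_i^t(x,s)\le\eps^2$ and $\|\nabla\phi_i(x_i)\|_{\nabla^2\phi_i(x_i)^{-1}}^2 \le \nu_i$ (\cref{def:sc}), the first term is at most $\|x^{(\final)}-x\|_{\nabla^2\Phi(x)}\,\|s\|_{\nabla^2\Phi(x)^{-1}} \lesssim \alpha\eps\cdot t\sqrt{\textstyle\sum_i(\nu_i+\eps^2)} \lesssim t\sqrt n$, and the barrier part of the second term is $t\sum_i \nabla\phi_i(x_i)^\top(x_i^\star - x_i) \le t\sum_i \nu_i \le t\sum_i n_i = nt$ by the defining property $\nabla\phi_i(x_i)^\top(z-x_i)\le\nu_i$ for all $z\in K_i$.

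\emph{The error term, and the main obstacle.} What remains is the cross term $t\sum_i \mu_i^t(x,s)^\top(x_i - x_i^\star)$, which by block-wise Cauchy--Schwarz is at most $t\left(\sum_i\gamma_i^t(x,s)\right)^{1/2}\|x - x^\star\|_{\nabla^2\Phi(x)}$. The obstacle is that $x^\star$ may lie on the boundary of $K$, so $\|x-x^\star\|_{\nabla^2\Phi(x)}$ is not a priori bounded; this is the delicate point handled in \cite[Lemma D.3]{LSZ19}. The remedy is to compare instead against the feasible perturbed optimum $\tilde x^\star = (1-\sigma)x^\star + \sigma z$, where $z$ is a fixed interior feasible anchor (e.g.\ the well-centered initial point of \cite[Lemma D.2]{LSZ19}) and $\sigma$ is an inverse polynomial in $n$ and $\kappa$: since all entries are at most $\kappa$ we have $|c^\top z - c^\top x^\star|\le 2n\kappa^2$, so $\sigma \lesssim t/(n\kappa^2)$ makes $c^\top\tilde x^\star - c^\top x^\star \lesssim t$, while the standard self-concordant-barrier estimate bounds $\|x_i - \tilde x^\star_i\|_{\nabla^2\phi_i(x_i)}$ in terms of $\nu_i/\sigma$. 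If this still leaves a term exceeding $O(nt)$, one first runs $O(\log\log(1/t) + \log(n\kappa))$ pure Newton centering steps at parameter $t$ — which converge quadratically since $(x,s)$ is already $\eps$-centered — to push every $\gamma_i^t(x,s)$ below a small enough inverse polynomial, making the cross term negligible against $nt$. Combining the three contributions with the feasibility and dual-form guarantees from the rounding step yields $c^\top x^{(\final)} - c^\top x^\star \lesssim nt$, as required.
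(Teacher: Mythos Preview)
Your feasibility-rounding step and the bound $\|x^{(\final)}-x\|_{\nabla^2\Phi(x)}\le\alpha\eps$ are exactly what the paper does. The divergence is in item~2: the paper does \emph{not} try to bound the gap from scratch. It first shows that $(x^{(\final)},s^{(\final)})$ is still $O(\eps)$-well-centered (by estimating how $\mu_i^t$ moves under a step of size $\alpha\eps$ in the local norm), and then cites \cite[Lemma~D.3]{LSZ19} as a black box for any feasible well-centered pair.

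Your direct argument for the gap has a genuine hole at the cross term $t\sum_i \mu_i^\top(x_i-x_i^\star)$. Bounding it by Cauchy--Schwarz requires control of $\|x_i-x_i^\star\|_{\nabla^2\phi_i(x_i)}$, and neither of your fixes supplies that. The claimed ``standard estimate'' $\|x_i-\tilde x^\star_i\|_{\nabla^2\phi_i(x_i)}\lesssim\nu_i/\sigma$ is false in general: already for the log barrier on an interval, if $x_i$ is $\delta$-close to one endpoint, every fixed interior point sits at Hessian-norm distance $\Theta(1/\delta)$ from $x_i$, regardless of $\sigma$. And ``extra Newton steps'' cannot rescue a bound that multiplies $\sqrt{\sum_i\gamma_i}$ against an unbounded factor; besides, the lemma takes the well-centered pair as given.

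The clean way to kill the cross term (and presumably what \cite[Lemma~D.3]{LSZ19} does) avoids $\|x-x^\star\|$ entirely: since $\|\mu_i\|_{\nabla^2\phi_i(x_i)^{-1}}\le\eps<1$, there exists $\hat x_i\in K_i$ with $\nabla\phi_i(\hat x_i)=-s_i/t$ exactly and $\|\hat x_i-x_i\|_{\nabla^2\phi_i(x_i)}=O(\eps)$. Then
\[
s_i^\top(x_i-x_i^\star)
= t\,\nabla\phi_i(\hat x_i)^\top(x_i^\star-\hat x_i)
+ t\,\nabla\phi_i(\hat x_i)^\top(\hat x_i-x_i)
\le t\nu_i + O(\eps)\,t\sqrt{\nu_i},
\]
using the barrier inequality at $\hat x_i$ rather than at $x_i$. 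Summing gives $\lesssim nt$ with no reference to $x^\star$'s location. Alternatively, just follow the paper: verify $O(\eps)$-well-centeredness of the rounded pair and invoke \cite[Lemma~D.3]{LSZ19}.
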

\begin{proof}
We set $s^{(\final)} = s$ and $x^{(\final)} = x - \g^2 \Phi(x)^{-1}A(A^\top \g^2 \Phi(x)^{-1} A)^{-1} (A^\top x - b)$. We start by arguing that $x^{(\final)}$ is feasible. Towards this, by standard self-concordance facts (see eg. \cite{Nes98}), it suffices to argue that
\begin{equation} \|\g^2 \Phi(x)^{1/2}(x^{(\final)} - x)\|_{\infty, 2} \le \alpha\eps. \label{eq:closesc} \end{equation}
Indeed, this follows because
\begin{align*}
    \|\g^2 \Phi(x)^{1/2}(x^{(\final)} - x)\|_{\infty, 2} &\le \|\g^2 \Phi(x)^{1/2}(x^{(\final)} - x)\|_2 \\
    &= \|\g^2 \Phi(x)^{-1/2} A(A^\top \g^2 \Phi(x)^{-1} A)^{-1}(A^\top x - b)\|_2 \\
    &= \|A^\top x - b\|_{(A^\top \g^2 \Phi(x)^{-1} A)^{-1}} \le \alpha\eps
\end{align*}
where the final step is because $(x, s)$ is well-centered. Next, we argue that $(x^{(\final)}, s^{(\final)})$ is $5\eps$-well-centered. Indeed, for a block $i \in [m]$, we bound
\begin{align*}
\left\|\frac{s_i}{t} + \g \phi_i(x^{(\final)}_i)\right\|_{\g^2 \phi_i(x^{(\final)}_i)^{-1}} &\le 2 \left\|\frac{s_i}{t} + \g \phi_i(x^{(\final)}_i)\right\|_{\g^2 \phi_i(x_i)^{-1}} \\
&\le 2\gamma_i^t(x, s)^{1/2} + 2\left\|\g \phi_i(x^{(\final)}_i) - \g \phi_i(x_i)\right\|_{\g^2 \phi_i(x_i)^{-1}} \\
&\le 2\eps + 4\alpha\eps,
\end{align*}
where the final line uses standard self-concordance facts, i.e., $\g^2 \phi_i(x_i) \approx_2 \g^2 \phi_i(x^{(\final)}_i)$ and
\[ 2\left\|\g \phi_i(x^{(\final)}_i) - \g \phi_i(x_i)\right\|_{\g^2 \phi_i(x_i)^{-1}} \lesssim \alpha\eps, \]
by \eqref{eq:closesc}. Because $(x^{(\final)}, s^{(\final)})$ are feasible points that are well-centered, by second item now follows by \cite[Lemma D.3]{LSZ19}.
\end{proof}

\subsection{Overall Runtime Analysis}
\label{subsec:overall}

In this section we analyze the runtime of implementing $\O(\sqrt{n})$ iterations of \cref{alg:shortstep}, which will prove our main theorem (\cref{thm:main}). Towards this we need to maintain $\xbar$, $\sbar$, the vector $A^\top x - b$, and sample the sparsifier $H \approx A^\top \g^2 \Phi(\xbar)^{-1} A$ during each iteration.

\begin{proof}[Proof of \cref{thm:main}]
The algorithm is as follows. Initialize the initial program as in \cite[Lemma D.2]{LSZ19}, then run the short-step procedure in \cref{alg:shortstep} for $\O(\sqrt{n})$ steps, and return the final point as described in \cref{lem:finalpoint}. By \cref{lem:pot_drop} it holds whp that the points $(x, s)$ in the algorithm are all $\eps$-well-centered.

Throughout the algorithm is running an instance $\mathcal{D}^{(\lev)}$ of \cref{thm:sparsify} to maintain leverage score overestimates of the matrix $A^\top \g^2 \Phi(\xbar)^{-1} A$. Formally, because $\g^2 \Phi(\xbar)^{-1}$ is block diagonal (with PSD blocks) instead of diagonal as is required by \cref{thm:sparsify}, we need to make a small modification in its implementation. Every time $\xbar_i$ updates for a block $i \in [m]$, pass deletions of all rows $a_i$ corresponding to that block, and pass insertions the following rows to $\mathcal{D}^{(\lev)}$. Let $\g^2 \phi_i(\xbar_i) = U^\top DU \in \R^{n_i \times n_i}$ be the SVD, and insert the rows of the matrix $UA_i$, where $A_i$ is the restriction of $A$ to the $i$-th block.

Next we discuss the maintenance of $\xbar$ and $\sbar$. We will prove inductively that $\xbar$ and $\sbar$ can be maintained in a way where only $\O(n)$ coordinates update ever.

\paragraph{Maintaining $x$ and $\xbar$.} It is useful to discuss how to maintain $x$ and $\xbar$ together. $x$ is maintained implicitly: we maintain $g$ which changes in at most $\O(n)$ coordinates, and maintain running sums. By \cref{lem:validsampling} there is a way to sample a valid matrix $R$ with at most $O(\sqrt{m} + d)$ samples, where the leverage scores are maintained and returned by $\mathcal{D}^{(\lev)}$. The term $R\delta_r$ is handled explicitly, which costs $\O(\sqrt{n} \cdot (\sqrt{n} + d))$ times. By \cref{lem:samplel2}, each sample can be done in time $\O(d)$ plus $\O(nd)$ preprocessing. This allows us to sample $R$ and thus maintain $x$ in time
\[ \O(d \cdot \sqrt{n} \cdot (\sqrt{n} + d)) \le \O(nd + d^2\sqrt{n}) \le \O(nd + d^3). \]
Also, we can maintain $A^\top x - b$ in the same runtime: $O(d)$ per change to a coordinate of $x$, plus the time needed to maintain $A^\top g$, which is $\O(nd)$ total because $g$ changes in at most $\O(n)$ coordinates throughout.

Maintenance of $\xbar$ is done by maintaining changes on the $g$ and $R\delta_r$ terms separately. The data structure can maintain running sums of $\g^2 \Phi(\xbar)^{1/2} g$ and decide when partial sums have accumulated more than $\beta/3$ and use these to update $\xbar$. This happens only $\O(n)$ times, because $\|g\|_2 \le \eps$ at each iteration. Now we discuss how to maintain when accumulations of the $\Phi(\xbar)^{-1/2} R\delta$ terms are large. This is done greedily, which is acceptable for runtime because the vector $\Phi(\xbar)^{-1/2} R\delta$ is $O(\sqrt{n} + d)$-sparse. To argue that this only changes $\O(n)$ coordinates, we apply \cref{lem:xstable}: there is a sequence $\hat{x}$ which is an $\alpha < \beta/10$-approximation to $x$ (see item 1 of \cref{lem:xstable}) which satisfies $\|\g^2 \Phi(x^{(k)})^{1/2}(\hat{x}^{(k+1)} - \hat{x}^{(k)})\|_2 \le \eps$ (this is item 2), so coordinates of $\hat{x}$ only undergo changes of size at least $\eps/10$ at most $\O(n)$ times over $\O(\sqrt{n})$ IPM steps.

\paragraph{Maintaining $\sbar$.} We will prove that we can maintain $\sbar$ to be an $\eps$-approximation of $s$ as in \cref{def:epsapprox} with at most $\O(n)$ total changes. Indeed, we can simply use the data structure in \cref{lem:slack}, along with the fact that $\|\Phi(x)^{1/2}\bar{\delta}_s\|_2 \le t$ by \cref{lem:step_size}.
Because $\xbar$ changes at most $\O(n)$ total times, the running time is at most $\O(nd)$.

\paragraph{Running time of $\mathcal{D}^{(\lev)}$.} The total number of row updates is at most $\O(n)$ and the number of batches is $\O(\sqrt{n})$. Additionally, the condition number of $A^\top \g^2 \Phi(x)^{-1} A$ is lower and upper bounded by $\poly(\kappa n)$ throughout by \cref{lemma:hessianapprox}.
So by \cref{thm:sparsify} the total running time is at most $\O(nd + d^6\sqrt{n}) \le \O(nd + d^{11})$.

The overall runtime is dominated by the running time of the sparsifier, which completes the proof of \cref{thm:main}.

\end{proof}

\section*{Acknowledgments}

Yang P. Liu would like to thank Yin Tat Lee and Aaron Sidford for discussions about ERM.
Richard Peng would like to thank
Chenxin Dai, Alicia Stepin, and Zhizheng Yuan for
discussions related to weighted central paths,
and Michael B. Cohen and Jelani Nelson for various discussions related to dynamic matrix sparsification.
Albert Weng was supported by NSF Award CCF-2338816.

\bibliographystyle{alpha}
\bibliography{refs}

\newcommand{\etalchar}[1]{$^{#1}$}
\begin{thebibliography}{KNPW11}

\bibitem[AJK25]{AJK25}
Deeksha Adil, Shunhua Jiang, and Rasmus Kyng.
\newblock Acceleration meets inverse maintenance: Faster $\ell_{\infty}$-regression.
\newblock In Keren Censor{-}Hillel, Fabrizio Grandoni, Jo{\"{e}}l Ouaknine, and Gabriele Puppis, editors, {\em 52nd International Colloquium on Automata, Languages, and Programming, {ICALP} 2025, July 8-11, 2025, Aarhus, Denmark}, volume 334 of {\em LIPIcs}, pages 5:1--5:16. Schloss Dagstuhl - Leibniz-Zentrum f{\"{u}}r Informatik, 2025.

\bibitem[AKPS19]{AKPS19}
Deeksha Adil, Rasmus Kyng, Richard Peng, and Sushant Sachdeva.
\newblock Iterative refinement for {$\ell_p$}-norm regression.
\newblock In {\em Proceedings of the {T}hirtieth {A}nnual {ACM}-{SIAM} {S}ymposium on {D}iscrete {A}lgorithms}, pages 1405--1424. SIAM, Philadelphia, PA, 2019.

\bibitem[BCLL18]{BCLL18}
S\'ebastien Bubeck, Michael~B. Cohen, Yin~Tat Lee, and Yuanzhi Li.
\newblock An homotopy method for {$\ell_p$} regression provably beyond self-concordance and in input-sparsity time.
\newblock In {\em S{TOC}'18---{P}roceedings of the 50th {A}nnual {ACM} {SIGACT} {S}ymposium on {T}heory of {C}omputing}, pages 1130--1137. ACM, New York, 2018.

\bibitem[BGJ{\etalchar{+}}22]{BGJLLPS22}
Jan van~den Brand, Yu~Gao, Arun Jambulapati, Yin~Tat Lee, Yang~P. Liu, Richard Peng, and Aaron Sidford.
\newblock Faster maxflow via improved dynamic spectral vertex sparsifiers.
\newblock In Stefano Leonardi and Anupam Gupta, editors, {\em {STOC} '22: 54th Annual {ACM} {SIGACT} Symposium on Theory of Computing, Rome, Italy, June 20 - 24, 2022}, pages 543--556. {ACM}, 2022.

\bibitem[BGKS23]{BVKS23:journal}
Jess Banks, Jorge {Garza-Vargas}, Archit Kulkarni, and Nikhil Srivastava.
\newblock Pseudospectral shattering, the sign function, and diagonalization in nearly matrix multiplication time.
\newblock {\em Foundations of computational mathematics}, 23(6):1959--2047, 2023.

\bibitem[BLL{\etalchar{+}}21]{BLLSSSW21}
Jan van~den Brand, Yin~Tat Lee, Yang~P. Liu, Thatchaphol Saranurak, Aaron Sidford, Zhao Song, and Di~Wang.
\newblock Minimum cost flows, mdps, and $\ell_1$-regression in nearly linear time for dense instances.
\newblock In Samir Khuller and Virginia~Vassilevska Williams, editors, {\em {STOC} '21: 53rd Annual {ACM} {SIGACT} Symposium on Theory of Computing, Virtual Event, Italy, June 21-25, 2021}, pages 859--869. {ACM}, 2021.

\bibitem[BLN{\etalchar{+}}20]{BLNPSSSW20}
Jan van~den Brand, Yin~Tat Lee, Danupon Nanongkai, Richard Peng, Thatchaphol Saranurak, Aaron Sidford, Zhao Song, and Di~Wang.
\newblock Bipartite matching in nearly-linear time on moderately dense graphs.
\newblock In Sandy Irani, editor, {\em 61st {IEEE} Annual Symposium on Foundations of Computer Science, {FOCS} 2020, Durham, NC, USA, November 16-19, 2020}, pages 919--930. {IEEE}, 2020.

\bibitem[BLSS20]{BLSS20}
Jan van~den Brand, Yin~Tat Lee, Aaron Sidford, and Zhao Song.
\newblock Solving tall dense linear programs in nearly linear time.
\newblock In Konstantin Makarychev, Yury Makarychev, Madhur Tulsiani, Gautam Kamath, and Julia Chuzhoy, editors, {\em Proceedings of the 52nd Annual {ACM} {SIGACT} Symposium on Theory of Computing, {STOC} 2020, Chicago, IL, USA, June 22-26, 2020}, pages 775--788. {ACM}, 2020.

\bibitem[Bra20]{Brand20}
Jan van~den Brand.
\newblock A deterministic linear program solver in current matrix multiplication time.
\newblock In {\em Proceedings of the 2020 {ACM}-{SIAM} {S}ymposium on {D}iscrete {A}lgorithms}, pages 259--278. SIAM, Philadelphia, PA, 2020.

\bibitem[Bra21]{Brand21}
Jan van~den Brand.
\newblock Unifying matrix data structures: Simplifying and speeding up iterative algorithms.
\newblock In Hung~Viet Le and Valerie King, editors, {\em 4th Symposium on Simplicity in Algorithms, {SOSA} 2021, Virtual Conference, January 11-12, 2021}, pages 1--13. {SIAM}, 2021.

\bibitem[BZ23]{BZ23}
Jan van~den Brand and Daniel~J. Zhang.
\newblock Faster high accuracy multi-commodity flow from single-commodity techniques.
\newblock In {\em 64th {IEEE} Annual Symposium on Foundations of Computer Science, {FOCS} 2023, Santa Cruz, CA, USA, November 6-9, 2023}, pages 493--502. {IEEE}, 2023.

\bibitem[Che21]{Chewi21}
Sinho Chewi.
\newblock The entropic barrier is n-self-concordant.
\newblock {\em CoRR}, abs/2112.10947, 2021.

\bibitem[CKL{\etalchar{+}}25]{CKLPPS25}
Li~Chen, Rasmus Kyng, Yang Liu, Richard Peng, Maximilian {Probst Gutenberg}, and Sushant Sachdeva.
\newblock Maximum flow and minimum-cost flow in almost-linear time.
\newblock {\em J. ACM}, 72(3):Art. 19, 103, 2025.

\bibitem[CKM{\etalchar{+}}11]{CKMST11}
Paul Christiano, Jonathan~A. Kelner, Aleksander Madry, Daniel~A. Spielman, and Shang-Hua Teng.
\newblock Electrical flows, {L}aplacian systems, and faster approximation of maximum flow in undirected graphs.
\newblock In {\em {STOC}}, 2011.

\bibitem[Cla05]{C05}
Kenneth~L. Clarkson.
\newblock Subgradient and sampling algorithms for $\ell_1$ regression.
\newblock In {\em Proceedings of the {S}ixteenth {A}nnual {ACM}-{SIAM} {S}ymposium on {D}iscrete {A}lgorithms}, pages 257--266. ACM, New York, 2005.

\bibitem[CLS21]{CLS19}
Michael~B. Cohen, Yin~Tat Lee, and Zhao Song.
\newblock Solving linear programs in the current matrix multiplication time.
\newblock {\em J. ACM}, 68(1):Art. 3, 39, 2021.

\bibitem[CMP20]{CMP20}
Michael~B. Cohen, Cameron Musco, and Jakub Pachocki.
\newblock Online row sampling.
\newblock {\em Theory Comput.}, 16:1--25, 2020.

\bibitem[Cox58]{Cox58}
D.~R. Cox.
\newblock The regression analysis of binary sequences.
\newblock {\em J. Roy. Statist. Soc. Ser. B}, 20:215--242, 1958.

\bibitem[CV95]{CV95}
Corinna Cortes and Vladimir Vapnik.
\newblock Support-vector networks.
\newblock {\em Machine Learning}, 20(3):273--297, 1995.

\bibitem[DDH{\etalchar{+}}08]{DDHKM08}
Anirban Dasgupta, Petros Drineas, Boulos Harb, Ravi Kumar, and Michael~W. Mahoney.
\newblock Sampling algorithms and coresets for {$\ell_p$} regression.
\newblock In {\em Proceedings of the {N}ineteenth {A}nnual {ACM}-{SIAM} {S}ymposium on {D}iscrete {A}lgorithms}, pages 932--941. ACM, New York, 2008.

\bibitem[FMP{\etalchar{+}}18]{FMPSWX18}
Matthew Fahrbach, Gary~L. Miller, Richard Peng, Saurabh Sawlani, Junxing Wang, and Shen~Chen Xu.
\newblock Graph sketching against adaptive adversaries applied to the minimum degree algorithm.
\newblock In Mikkel Thorup, editor, {\em 59th {IEEE} Annual Symposium on Foundations of Computer Science, {FOCS} 2018, Paris, France, October 7-9, 2018}, pages 101--112. {IEEE} Computer Society, 2018.

\bibitem[FS97]{FS97}
Yoav Freund and Robert~E. Schapire.
\newblock A decision-theoretic generalization of on-line learning and an application to boosting.
\newblock volume~55, pages 119--139. 1997.
\newblock Second Annual European Conference on Computational Learning Theory (EuroCOLT '95) (Barcelona, 1995).

\bibitem[GLP21]{GLP21:journal}
Yu~Gao, Yang Liu, and Richard Peng.
\newblock Fully dynamic electrical flows: Sparse maxflow faster than goldberg–rao.
\newblock {\em SIAM Journal on Computing}, 0(0):FOCS21--85--FOCS21--156, 2021.

\bibitem[HLS13]{HLS13}
David~W. Hosmer, Stanley Lemeshow, and Rodney~X. Sturdivant.
\newblock {\em Applied Logistic Regression}.
\newblock John Wiley \& Sons, Hoboken, NJ, 3 edition, 2013.

\bibitem[JL{\etalchar{+}}84]{JL84}
William~B Johnson, Joram Lindenstrauss, et~al.
\newblock Extensions of lipschitz mappings into a hilbert space.
\newblock {\em Contemporary mathematics}, 26(189-206):1, 1984.

\bibitem[JSWZ21]{JSWZ21}
Shunhua Jiang, Zhao Song, Omri Weinstein, and Hengjie Zhang.
\newblock A faster algorithm for solving general lps.
\newblock In Samir Khuller and Virginia~Vassilevska Williams, editors, {\em {STOC} '21: 53rd Annual {ACM} {SIGACT} Symposium on Theory of Computing, Virtual Event, Italy, June 21-25, 2021}, pages 823--832. {ACM}, 2021.

\bibitem[KH01]{KH01}
Roger Koenker and Kevin~F. Hallock.
\newblock Quantile regression.
\newblock {\em Journal of Economic Perspectives}, 15(4):143--156, 2001.

\bibitem[KNPW11]{KNPW11}
Daniel~M. Kane, Jelani Nelson, Ely Porat, and David~P. Woodruff.
\newblock Fast moment estimation in data streams in optimal space.
\newblock In {\em Proceedings of the 43rd {ACM} Symposium on Theory of Computing, {STOC} 2011, San Jose, CA, USA, June 6-8 2011}, pages 745--754. {ACM}, 2011.
\newblock Available at~\url{https://arxiv.org/abs/1007.4191}.

\bibitem[Koe00]{K00}
Roger Koenker.
\newblock Galton, edgeworth, frisch, and prospects for quantile regression in econometrics.
\newblock {\em Journal of Econometrics}, 95(2):347--374, 2000.

\bibitem[LS14]{LS14}
Yin~Tat Lee and Aaron Sidford.
\newblock Path finding methods for linear programming: Solving linear programs in $\widetilde{O}(\sqrt{\mathsf{rank}})$ iterations and faster algorithms for maximum flow.
\newblock In {\em 55th {IEEE} Annual Symposium on Foundations of Computer Science, {FOCS} 2014, Philadelphia, PA, USA, October 18-21, 2014}, pages 424--433. {IEEE} Computer Society, 2014.

\bibitem[LS15]{LS15}
Yin~Tat Lee and Aaron Sidford.
\newblock Efficient inverse maintenance and faster algorithms for linear programming.
\newblock In Venkatesan Guruswami, editor, {\em {IEEE} 56th Annual Symposium on Foundations of Computer Science, {FOCS} 2015, Berkeley, CA, USA, 17-20 October, 2015}, pages 230--249. {IEEE} Computer Society, 2015.

\bibitem[LSZ19]{LSZ19}
Yin~Tat Lee, Zhao Song, and Qiuyi Zhang.
\newblock Solving empirical risk minimization in the current matrix multiplication time.
\newblock In Alina Beygelzimer and Daniel Hsu, editors, {\em Conference on Learning Theory, {COLT} 2019, 25-28 June 2019, Phoenix, AZ, {USA}}, volume~99 of {\em Proceedings of Machine Learning Research}, pages 2140--2157. {PMLR}, 2019.

\bibitem[LY21]{LY21}
Yin~Tat Lee and Man{-}Chung Yue.
\newblock Universal barrier is \emph{n}-self-concordant.
\newblock {\em Math. Oper. Res.}, 46(3):1129--1148, 2021.

\bibitem[Nes98]{Nes98}
Yurii Nesterov.
\newblock Introductory lectures on convex programming volume i: Basic course.
\newblock {\em Lecture notes}, 3(4):5, 1998.

\bibitem[NN94]{NN94}
Yurii~E. Nesterov and Arkadii Nemirovskii.
\newblock {\em Interior-point polynomial algorithms in convex programming}, volume~13 of {\em Siam studies in applied mathematics}.
\newblock {SIAM}, 1994.

\bibitem[Ren88]{Ren88}
James Renegar.
\newblock A polynomial-time algorithm, based on {N}ewton's method, for linear programming.
\newblock {\em Math. Programming}, 40(1):59--93, 1988.

\bibitem[Tib96]{Tib96}
Robert Tibshirani.
\newblock Regression shrinkage and selection via the lasso.
\newblock {\em Journal of the Royal Statistical Society: Series B (Methodological)}, 58(1):267--288, 1996.

\bibitem[Vai89]{Vaidya89}
Pravin~M. Vaidya.
\newblock Speeding-up linear programming using fast matrix multiplication (extended abstract).
\newblock In {\em 30th Annual Symposium on Foundations of Computer Science, Research Triangle Park, North Carolina, USA, 30 October - 1 November 1989}, pages 332--337. {IEEE} Computer Society, 1989.

\end{thebibliography}

\newpage

\appendix

\section{Deferred Proofs from Preliminaries}
\label{sec:deferred}

\begin{proof}[Proof of \cref{lem:heavy-hitter}]
Since $\| a_i \|_M^2 = \|M^{1/2} a_i\|_2^2$,
we can compute an $O(\log{n})$-by-$d$ random projection matrix $S$
and obtain the matrix $N \coloneqq M^{1/2} S^{\top}$
in $\Otil(d^{\omega})$ time via nearly matrix multiplication time 
diagonalization methods~\cite{BVKS23:journal},
after which we are looking for the rows with large norms in the
$n \times O(\log{n})$ matrix $AN$.
Note that the dimensions of $N$ also means that we can compute each
row of $AN$ in $\Otil(d)$ time.

To invoke the matrix given in \cref{thm:hh},
first note that we can left-multiply $AN$ by another random
projection $R \in \R^{O(\log{n}) \times n}$ so that
\[
\norm{ RAN }_F^2
\approx_{0.1} \norm{ AN}_{F}
\approx_{0.1} \sum_{i} \norm{a_i}_{M}^2
\]
so we can rescale $M$ so that $\sum_{i} \norm{a_i}_M^2 \approx 1$,
and consider a matrix $Q$ given by \cref{thm:hh} with
\[
\epshh \leftarrow 0.01 \sqrt{\frac{\delta}{ \sum_{i} \norm{a_i}_{M}^2}}.
\]
For a row of $i$ to have $\|a_i^{\top} N\|_2 > \epshh$,
at least some coordinate of it must have magnitude
more than $\epshh / O(\log{n})$.
So we can identify all such rows by calling
$\textsc{Recover}(Q A N_{:, j})$ for each of the $O(\log{n})$ columns of $N$ separately.
The resulting $\O( \delta^{-1}  \sum_{i} \norm{a_i}_{M}^2)$
row indices can then be checked in $\Otil(d)$ time each,
giving the runtime for $\textsc{Query}$.

The runtime of Initialize and update is then the cost of computing
and maintaining $QA$ for sufficiently many values of $\epshh$
to ensure constant factor approximation for any relative threshold value.
We can create one such copy per each $\epshh = 0.9^i$,
and maintain the values $QA$ as $A$ get updated.
Both the initializtion and update cossts then follow from 
the $O(\log^3 n)$-nonzeros per column of $Q$,
and there only being $O(\log{n})$ different values of $\epshh$
due to the cost of the $\epshh < 1/n^{10}$ case exceeding
that of running brute force on all rows of $A$.
\end{proof}

\section{ERM Duality via. Convex Conjugates}
\label{sec:ERMDuality}

Let $f_i^*$ denote the convex conjugate of $f_i$, defined as \[f_i^*(x^*)=\sup_{x\in\R^{n_i}}\langle x^*,x\rangle-f_i(x),\] which is convex as it is the supremum of linear functions. Let $x_i \in \R^{n_i}$ be new variables. We can write \eqref{eq:erm1} as
\begin{align*} \min_{y \in \R^d}\sum_{i=1}^m
f_i\left(A_i y - c_i\right) &= \min_{y \in \R^d} \max_{\substack{x_1, \dots, x_m : x_i \in \R^{n_i}}} \sum_{i=1}^m x_i^\top(A_i y - c_i) - f_i^*(x_i) \\
&= \max_{\substack{x_1, \dots, x_m : x_i \in \R^{n_i}}}
\min_{y \in \R^d} \sum_{i=1}^m x_i^\top \left(A_i y - c_i\right) - f_i^*(x_i) \\
&= \max_{x \in \R^n, A^\top x = 0} \sum_{i=1}^m -c_i^\top x_i - f_i^*(x_i)
= -\min_{x \in \R^n, A^\top x = 0} \sum_{i=1}^m c_i^\top x_i + f_i^*(x_i).
\end{align*}

\section{Central Path Stability}
\label{sec:cpstable}

In this section we establish that the Hessian matrices of a path following IPM are stable along the central path. We start by noting standard properties of self-concordant functions, largely citing facts from a book of Nesterov \cite{Nes98}. Specifically, we cite Theorems 4.1.5, 4.1.7, and 4.2.5.

\begin{lemma}
\label{lemma:cpfacts}
Let $K$ be a convex, compact subset of $\R^n$.
Let $\Phi: \mathsf{int}(K) \to \R$ be a $\nu$-self-concordant barrier function. Then:
\begin{enumerate}
    \item \label{item:containment} If $x \in K$ and $\|y-x\|_{\g^2 \Phi(x)} < 1$ then $y \in K$, and
    \item \label{item:gradbound} If $x, y \in K$ then
    $\l \g \Phi(y) - \g \Phi(x), y - x \r
    \ge
    \frac{\|y-x\|_{\g^2 \Phi(x)}^2}{1 + \|y-x\|_{\g^2 \Phi(x)}}$, and
    \item \label{item:normbound} If $x \in K$ and $u \in \R^n$ such that $x+u, x-u \in K$ it holds that $\|u\|_{\g^2 \Phi(x)} \le \nu + 2\sqrt{\nu}$.
\end{enumerate}
\end{lemma}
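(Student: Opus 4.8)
The plan is to observe that all three items are classical facts about self-concordant barriers — they are Theorems~4.1.5, 4.1.7, and 4.2.5 of Nesterov's book~\cite{Nes98} applied to $\Phi$ as a $\nu$-self-concordant barrier on $\inte(K)$ — so the proof is, at its core, just to invoke these. For completeness I would also reproduce the short self-contained derivations of the first two items; both follow by restricting $\Phi$ to a line through $x$ and using the one-dimensional self-concordance inequality $|\psi'''| \le 2(\psi'')^{3/2}$, equivalently the statement that $\tau \mapsto \psi''(\tau)^{-1/2}$ is $1$-Lipschitz on the domain of $\psi$.

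For items~\ref{item:containment} and~\ref{item:gradbound}, set $\psi(\tau) \coloneqq \Phi(x + \tau(y-x))$ and write $r \coloneqq \norm{y-x}_{\g^2\Phi(x)}$, so that $\psi''(0) = r^2$. The Lipschitz bound yields both $\psi''(\tau) \le r^2(1 - \tau r)^{-2}$ (while $\tau r < 1$) and $\psi''(\tau) \ge r^2(1 + \tau r)^{-2}$ (for all $\tau$ in the domain). For item~\ref{item:containment}: if $r < 1$ the first estimate shows $\psi''$, hence $\psi'$ and $\psi$, stays finite on all of $[0,1]$, so the segment from $x$ to $y$ cannot reach the boundary of $K$, where the barrier $\Phi$ diverges; thus $y \in \inte(K) \subseteq K$. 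For item~\ref{item:gradbound}: from $\g^2\Phi(x+\tau(y-x)) \succeq (1+\tau r)^{-2}\g^2\Phi(x)$ we get $\l\g\Phi(y) - \g\Phi(x), y-x\r = \int_0^1 \psi''(\tau)\,d\tau \ge \int_0^1 r^2(1+\tau r)^{-2}\,d\tau = \dfrac{r^2}{1+r}$, which is exactly the claimed bound.

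For item~\ref{item:normbound}, the extra ingredient beyond self-concordance is the barrier-parameter bound $\l\g\Phi(x), z - x\r \le \nu$ for every $z \in K$ (Theorem~4.2.5 of~\cite{Nes98}), together with $\norm{\g\Phi(x)}_{\g^2\Phi(x)^{-1}} \le \sqrt\nu$. I would first reduce to the case $x \pm u \in \inte(K)$ by replacing $u$ with $(1-\eps)u$ and letting $\eps \to 0$ (the target inequality is closed), then restrict to the line $\psi(\tau) \coloneqq \Phi(x+\tau u)$ — a one-dimensional $\nu$-self-concordant barrier whose domain contains $[-1,1]$ — and bound $\psi''(0) = \norm{u}_{\g^2\Phi(x)}^2$ by combining the endpoint gradient bounds at $\tau = \pm 1$ with the Hessian comparisons along $[-1,1]$; this is precisely Nesterov's computation, which produces the constant $\nu + 2\sqrt\nu$. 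This is the item I expect to be the main obstacle: unlike items~\ref{item:containment} and~\ref{item:gradbound} it genuinely uses finiteness of the barrier parameter rather than only self-concordance, and pinning down the sharp constant is somewhat delicate, so in the writeup I would simply cite Theorem~4.2.5 of~\cite{Nes98} rather than re-derive it.
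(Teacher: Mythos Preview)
Your proposal is correct and matches the paper's approach exactly: the paper does not prove this lemma at all but simply cites Theorems~4.1.5, 4.1.7, and 4.2.5 of Nesterov~\cite{Nes98}, which are precisely the references you identify. Your additional self-contained derivations of items~\ref{item:containment} and~\ref{item:gradbound} via the one-dimensional Lipschitz bound on $\psi''(\tau)^{-1/2}$ are standard and correct, and go beyond what the paper provides.
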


From here we come to a key claim: if $\l \g \Phi(y) - \g \Phi(x), y - x \r$ is bounded then we can spectrally relate $\g^2 \Phi(x)$ and $\g^2 \Phi(y)$.

\begin{lemma}
\label{lemma:hessianrelate}
Let $K$ be a convex set and
$\Phi: \mathsf{int}(K) \to \R$ be a $\nu$-self-concordant function.
For any parameter $M \geq 1$ any $x, y \in K$ with
\[
\l \Phi(y) - \g \Phi(x), y - x \r
\leq
M,
\]
we have
\[
h^{-1} \g^2 \Phi(x) \pe \g^2 \Phi(y) \pe h \g^2 \Phi(x)
\]
for some $h = O(\nu M)$.
\end{lemma}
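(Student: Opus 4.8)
The plan is to reduce the statement to a one-dimensional analysis along the segment connecting $x$ and $y$, and then use the self-concordance inequality (Property~1 of \cref{def:sc}) together with the hypothesis on $\l \g\Phi(y) - \g\Phi(x), y-x\r$. First I would set $u = y - x$ and define $r \coloneqq \|u\|_{\g^2\Phi(x)}$, the local norm of the step. The key case distinction is whether $r$ is small (say $r \le 1/2$) or not. When $r \le 1/2$, the standard self-concordance consequence (e.g.\ \cite[Theorem 4.1.6]{Nes98}) immediately gives $(1-r)^2 \g^2\Phi(x) \preceq \g^2\Phi(y) \preceq (1-r)^{-2}\g^2\Phi(x)$, so $h = O(1)$ suffices and we are done without even using $M$. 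So the real content is the regime where $r$ is bounded away from $0$ — possibly even large — and there the bound on the Bregman-type quantity must be what saves us.

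In the large-$r$ regime I would argue as follows. By Property~2 of \cref{item:gradbound} in \cref{lemma:cpfacts}, the hypothesis $\l \g\Phi(y) - \g\Phi(x), y-x\r \le M$ forces $\frac{r^2}{1+r} \le M$, hence $r = O(M)$ (using $M \ge 1$). Now I want to convert a bound on $r$ into a spectral relation between the Hessians. The cleanest route is to write $\g^2\Phi(y)$ and $\g^2\Phi(x)$ in terms of the self-concordant barrier's Dikin ellipsoid / symmetry properties: since $\Phi$ is a $\nu$-self-concordant \emph{barrier}, I can invoke the standard fact that for any $x,y \in \mathsf{int}(K)$ with $y$ in the interior one has $\g^2\Phi(y) \preceq \left(\frac{\nu + 2\sqrt{\nu}}{\ldots}\right) \cdot (\text{something in terms of the symmetry of }K\text{ around }x)$ — more concretely I would use that the \emph{symmetrized} local norm $\|u\|_x^* $ controlling how far one can move before leaving $K$ is bounded below in terms of $r$, and combine this with \cref{item:normbound}. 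An alternative and perhaps more robust approach: integrate the third-derivative bound. Parametrize $\gamma(t) = x + tu$ for $t \in [0,1]$, let $H(t) = \g^2\Phi(\gamma(t))$, and observe that $\frac{d}{dt}\left(v^\top H(t) v\right) = \g^3\Phi(\gamma(t))[u,v,v]$, which by self-concordance is at most $2\|u\|_{H(t)} \cdot v^\top H(t) v$. This gives a Gronwall-type differential inequality $\big|\frac{d}{dt}\log(v^\top H(t) v)\big| \le 2\|u\|_{H(t)}$, so $v^\top H(1) v / v^\top H(0) v \in [\exp(-2\int_0^1 \|u\|_{H(t)}dt), \exp(2\int_0^1 \|u\|_{H(t)}dt)]$, uniformly in $v$, which is exactly a spectral relation with $h = \exp(2\int_0^1 \|u\|_{H(t)}\,dt)$.

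The main obstacle is bounding $\int_0^1 \|u\|_{H(t)}\,dt$: a priori $\|u\|_{H(t)}$ could blow up as $\gamma(t)$ approaches the boundary, and Gronwall alone is circular since $H(t)$ is what we're trying to control. The resolution is to use the barrier hypothesis twice. First, the same differential inequality shows $\|u\|_{H(t)}$ is \emph{locally} Lipschitz in $\log$ scale, so it cannot blow up faster than exponentially, and in particular it stays finite on $[0,1]$ since $y \in \mathsf{int}(K)$. Second — and this is where $M$ enters essentially — I would bound the integral by relating it to $\l\g\Phi(y)-\g\Phi(x),u\r = \int_0^1 u^\top H(t) u\,dt = \int_0^1 \|u\|_{H(t)}^2\,dt \le M$; then by Cauchy–Schwarz $\int_0^1 \|u\|_{H(t)}\,dt \le \left(\int_0^1 \|u\|_{H(t)}^2\,dt\right)^{1/2} \le \sqrt{M}$. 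This yields $h = \exp(2\sqrt{M})$, which is unfortunately not $O(\nu M)$. To get the claimed polynomial bound $h = O(\nu M)$ I would instead split $[0,1]$ at the point $t^*$ where $\|u\|_{H(t)}$ first reaches, say, $1/2$ (if it ever does): on $[0,t^*]$ use the small-step bound directly, and on $[t^*,1]$ note that $\|u\|_{H(t)} \ge 1/2$ there forces the length of that sub-interval to be $O(M)$ via the $\int \|u\|_{H(t)}^2 \le M$ bound, while on that sub-interval one can instead use the barrier property \cref{item:normbound} to control $H(t)$ at the endpoint $t=1$ against $H$ at a fixed reference point, picking up only the $\nu$ factor and an $O(M)$ factor from the path length. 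Combining the two regimes gives the stated $h = O(\nu M)$; the bookkeeping to make the two-regime argument clean is the part I expect to require the most care.
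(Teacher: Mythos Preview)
Your Gronwall route correctly yields $h = \exp\bigl(2\int_0^1 \|u\|_{H(t)}\,dt\bigr) \le \exp(2\sqrt{M})$, but this is genuinely exponential in $\sqrt{M}$ and cannot be $O(\nu M)$. Your two-regime patch does not close the gap. First, the split at $t^*$ where $\|u\|_{H(t)}$ ``first reaches $1/2$'' is ill-posed: you already know $\|u\|_{H(0)} = r$ can be as large as $2M$, so $t^* = 0$ and the entire interval is in the ``large'' regime. Second, the argument that ``$\|u\|_{H(t)} \ge 1/2$ forces the length of $[t^*,1]$ to be $O(M)$'' is vacuous, since $1 - t^* \le 1$ always; even if you meant to count how many unit-local-norm sub-steps cover the segment, that would give $O(M)$ sub-intervals each contributing a constant multiplicative factor to $h$, i.e.\ $h = 2^{O(M)}$, which is again exponential. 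The third-derivative bound alone only controls $\log h$ linearly in arc length, so no integration scheme will produce a polynomial bound.

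The missing idea is a direct geometric use of the barrier property, avoiding integration entirely. The paper's argument is: the Dikin balls $B_x = \{z : \|z-x\|_{\g^2\Phi(x)} \le 1\}$ and $B_y = \{z : \|z-y\|_{\g^2\Phi(y)} \le 1\}$ both lie in $K$ by \cref{lemma:cpfacts}\cref{item:containment}. Having bounded $\|y-x\|_{\g^2\Phi(x)} \le 2M$, place a point $z = x - \tfrac{1}{4M}(y-x)$ on the far side of $x$ from $y$; then $z \in B_x \subseteq K$. For any $u$ with $y \pm u \in B_y \subseteq K$, a convex combination of $z$ and $y \pm u$ with weights $\tfrac{4M}{1+4M}, \tfrac{1}{1+4M}$ equals $x \pm \tfrac{u}{1+4M}$, so these points lie in $K$ by convexity. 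Now \cref{lemma:cpfacts}\cref{item:normbound} gives $\bigl\|\tfrac{u}{1+4M}\bigr\|_{\g^2\Phi(x)} \le \nu + 2\sqrt{\nu}$, i.e.\ $\|u\|_{\g^2\Phi(x)} = O(\nu M)$ for every $u$ in the unit ball of $\g^2\Phi(y)$. This is exactly the spectral comparison $\g^2\Phi(y) \succeq h^{-1}\g^2\Phi(x)$ with $h = O(\nu M)$, and the other direction follows by symmetry. The $\nu$ enters only through \cref{item:normbound}, and the $M$ only through the homothety ratio --- no exponential ever appears.
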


\begin{proof}
We first show the assumptions imply $\|y-x\|_{\g^2 \Phi(x)} \le 2M$.
By \cref{lemma:cpfacts}~\cref{item:gradbound}
we get that the given condition implies,
when $\|y-x\|_{\g^2 \Phi(x)} \geq 1$,
\[
M \geq
\frac{\|y-x\|_{\g^2 \Phi(x)}^2}{1 + \|y-x\|_{\g^2 \Phi(x)}}
\geq
\frac{\|y-x\|_{\g^2 \Phi(x)}^2}{2\|y-x\|_{\g^2 \Phi(x)}}
=
\frac{1}{2} \|y-x\|_{\g^2 \Phi(x)}.
\]
When $\|y-x\|_{\g^2 \Phi(x)} \leq 1$,
the assumption of $M \geq 1$ also implies
$\|y-x\|_{\g^2 \Phi(x)} \le 2M$.

Let
\begin{align*}
B_x &= \left\{z : \norm{z-x}_{\g^2 \Phi\left(x\right)} \le 1 \right\},\\
B_y &= \left\{z : \norm{z-y}_{\g^2 \Phi\left(y\right)} \le 1 \right\},
\end{align*}
be the Hessian balls around $x$ and $y$ respectively.
By \cref{lemma:cpfacts}~\cref{item:containment}
we know that $B_x, B_y \subseteq K$.

We will prove that $B_y \subseteq x + h(B_x - x)$,
i.e., $y$ is contained in a dilation of the Hessian ball around $x$.
This would imply that $\g^2 \Phi(y) \pe h \g^2 \Phi(x)$,
which completes the proof by symmetry

We first create a point past $x$ on the line from $y$ to $x$
which is in $B_x$, and thus $K$.
Let
\[
z \coloneq x - \frac{1}{4M} \left(y-x\right)
\]
This is the homothety center of $y$ w.r.t. $x$,
and the direction was chosen to ensure that
$\|z-x\|_{\g^2 \Phi(x)} < \| y - x \|_{\g^2 \Phi(x)} / (4M) < 1$,
and thus $z \in B_x$, and in turn $z \in K$.

Let $u$ be a step in $y$'s Hessian ball,
aka. $y+u, y-u \in B_y$.
Direct algebraic manipulations give
\begin{align*}
\left( 1 - \frac{1}{1 + 4M} \right) z
+ \frac{1}{1 + 4M} \left( y \pm u \right)
&=
\frac{4M}{1 + 4M}
\left(x - \frac{1}{4M}\left(y-x\right)\right)
+ \frac{1}{1 + 4M} \left(y \pm u\right)
\\ &=
x \pm \frac{1}{1 + 4M} u.
\end{align*}

Thus $x \pm u / (1 + 4M)$ can be expressed as a linear
combination of $z$ and $y \pm u$.
As both $z$ and $y \pm u$ are both in $K$,
the convexity of $K$ gives that $x \pm u/(1 + 4M) \in K$.
So \cref{lemma:cpfacts}~\cref{item:normbound} implies that
for all $u$ such that $y + u \in B_y$, we have
\[
\norm{\frac{u}{1 + 4M}}_{\nabla^2 \Phi\left( x \right)}
\leq
\nu + 2 \sqrt{\nu},
\]
or equivalently
$\|u\|_{\nabla^2 \Phi(x)} \leq (1 + 4M)(\nu + 2 \sqrt{\nu})$.

This can be incorporated back into bounding $y \pm u - x$
by triangle inequality:
\[
\|(y\pm u)-x\|_{\g^2 \Phi(x)}
    \le
    \|y-x\|_{\g^2 \Phi(x)} + \|u\|_{\g^2 \Phi(x)}
    \le
    2M + O\left(M (\nu + 2\sqrt{\nu})\right)
    \le
    O(\nu M),
\]
where the first part of the second inequality is from the
bound obtained at the start of this proof.
Thus we have $B_y \subseteq x + h(B_x - x)$ for $h = O(\nu M)$.
\end{proof}

Finally we use the above bound to establish a relationship between the Hessians of points along the robust central path.
Here recall that $C_K$ is the maximum self-concordance parameter
among the $m$ functions.

\begin{lemma}
\label{lemma:hessianapprox}
Let $(x, s)$ and  $(\xhat, \shat)$ be
$\eps$-well-centered solutions for $\eps < 1/1000$
and path parameters $t$, $\that$.
Then for $r = \max\{\that/t, t/\that\}$
and $h = O_{C_K}(nr)$ it holds that
\[
\frac{\g^2 \Phi_i\left(x_i\right)}{h}
\pe
\g^2 \Phi_i(\xhat_i)
\pe
h \g^2 \Phi_i(x_i)
\]
for all $i \in [m]$.
\end{lemma}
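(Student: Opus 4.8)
The plan is to reduce the statement, block by block, to a single application of \cref{lemma:hessianrelate}. Fix $i\in[m]$ and set $M_i:=\l \g\phi_i(\hat x_i)-\g\phi_i(x_i),\,\hat x_i-x_i\r$, which is $\ge 0$ by convexity of $\phi_i$. Once we show $M_i=O_{C_K}(nr)$, \cref{lemma:hessianrelate} applied with $\Phi=\phi_i$, $x=x_i$, $y=\hat x_i$, self-concordance parameter $\nu_i\le C_K$, and $M=\max\{1,M_i\}$ gives $\g^2\phi_i(x_i)/h\pe \g^2\phi_i(\hat x_i)\pe h\,\g^2\phi_i(x_i)$ with $h=O(\nu_i M)=O_{C_K}(nr)$, and the same $h$ works for every block since $\nu_i\le C_K$. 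Because each $M_i\ge 0$, it suffices to bound the sum $D:=\sum_{i=1}^m M_i=\l\g\Phi(\hat x)-\g\Phi(x),\,\hat x-x\r$ by $O_{C_K}(nr)$.

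To bound $D$, I would use centrality and dual feasibility to turn it into a duality gap plus error terms. Writing $\mu:=\mu^t(x,s)$ and $\hat\mu:=\mu^{\hat t}(\hat x,\hat s)$, centrality gives $\|\mu_i\|_{\g^2\phi_i(x_i)^{-1}}\le\eps$ and $\|\hat\mu_i\|_{\g^2\phi_i(\hat x_i)^{-1}}\le\eps$. Substituting $\g\phi_i(x_i)=-s_i/t+\mu_i$, $\g\phi_i(\hat x_i)=-\hat s_i/\hat t+\hat\mu_i$, using $s=c-Ay$, $\hat s=c-A\hat y$, and setting $\rho:=A^\top x-b$, $\hat\rho:=A^\top\hat x-b$ for the feasibility residuals (of norm $\le\alpha\eps$ in the appropriate norms), the slack terms telescope and I obtain
\[
D=\Bigl(\tfrac1t-\tfrac1{\hat t}\Bigr)\bigl(c^\top\hat x-c^\top x\bigr)+\Bigl(\tfrac{\hat y}{\hat t}-\tfrac{y}{t}\Bigr)^\top(\hat\rho-\rho)+\l\hat\mu-\mu,\,\hat x-x\r .
\]
The first term is controlled by the standard near-optimality of well-centered points: weak duality together with the self-concordant-barrier bound $\l\g\phi_i(z_i),z_i'-z_i\r\le\nu_i$ for $z_i'\in K_i$ (this is essentially \cite[Lemma D.3]{LSZ19}, cf.\ the proof of \cref{lem:finalpoint}) gives $0\le c^\top x-\mathrm{OPT}\le O(nt)$ and $0\le c^\top\hat x-\mathrm{OPT}\le O(n\hat t)$ up to lower-order feasibility corrections, where $\mathrm{OPT}$ is the optimal value of \eqref{eq:ermmain}; hence $|c^\top\hat x-c^\top x|=O(n\max\{t,\hat t\})$, and since $|1/t-1/\hat t|\le1/\min\{t,\hat t\}$ this term is $O(nr)$. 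The second term is lower order: $\|A\hat y/\hat t\|$ and $\|Ay/t\|$ are $\mathrm{poly}(n,\kappa)$-bounded via $Ay=c-s$ and the magnitude bounds on $c,s$, and the residuals are $\le\alpha\eps$, so this term is $\alpha\eps\cdot\mathrm{poly}(n,\kappa)$, negligible for the parameter range used in \cref{subsec:overall}.

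The delicate term is $\l\hat\mu-\mu,\hat x-x\r$, which must not be bounded circularly. Blockwise it is at most $\eps\sum_i(\theta_i+\hat\theta_i)$ with $\theta_i:=\|\hat x_i-x_i\|_{\g^2\phi_i(x_i)}$ and $\hat\theta_i:=\|\hat x_i-x_i\|_{\g^2\phi_i(\hat x_i)}$, and a priori these could be large. The key observation is \cref{lemma:cpfacts}~\cref{item:gradbound}: applying it with $(x,y)=(x_i,\hat x_i)$ and with $(x,y)=(\hat x_i,x_i)$ yields $M_i\ge\theta_i^2/(1+\theta_i)$ and $M_i\ge\hat\theta_i^2/(1+\hat\theta_i)$, hence $\theta_i,\hat\theta_i\le M_i+\sqrt{M_i}$. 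Therefore $\l\hat\mu-\mu,\hat x-x\r\le 2\eps\sum_i(M_i+\sqrt{M_i})\le 2\eps(D+\sqrt{mD})\le 3\eps D+\eps m$ by Cauchy--Schwarz and AM--GM. Combining the three bounds, $D\le O(nr)+3\eps D+\eps m+(\text{lower order})$, and since $\eps<1/1000$ and $m\le n\le nr$ this rearranges to $D=O_{C_K}(nr)$; feeding $M_i\le D$ into \cref{lemma:hessianrelate} as above finishes the proof. I expect the main obstacles to be exactly the two facts I leaned on: the near-optimality estimate $c^\top x-\mathrm{OPT}=O(nt)$ for well-centered (as opposed to exactly central) points, and the self-concordance bootstrap bounding $\theta_i,\hat\theta_i$ by $M_i$, which is what keeps the $\mu$-error term from being circular.
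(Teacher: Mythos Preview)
Your overall route coincides with the paper's: decompose $D=\sum_i M_i$ via the slack identities into a cost-gap term (controlled by near-optimality, \cref{lem:finalpoint}) plus a centering-error term absorbed back into $D$ using \cref{lemma:cpfacts}~\cref{item:gradbound}, and then invoke \cref{lemma:hessianrelate} blockwise. Your bootstrap $\theta_i,\hat\theta_i\le M_i+\sqrt{M_i}$ is the same mechanism as the paper's cruder $\theta_i\le M_i+1$.

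The gap is your middle term $\bigl(\hat y/\hat t-y/t\bigr)^\top(\hat\rho-\rho)$. Declaring it ``$\alpha\eps\cdot\mathrm{poly}(n,\kappa)$, negligible for the parameter range'' does not prove the lemma as stated: the conclusion is $h=O_{C_K}(nr)$ with no dependence on $t$ or $\kappa$, yet writing $Ay=c-s$ brings in a contribution of order $\|c\|/t$, which blows up as $t\to0$ even when $r=1$. Worse, the residuals $\rho,\hat\rho$ are small only in the \emph{local} norms $(A^\top\g^2\Phi(x)^{-1}A)^{-1}$ and $(A^\top\g^2\Phi(\hat x)^{-1}A)^{-1}$ respectively, so to bound the cross terms $\hat y^\top\rho$ and $y^\top\hat\rho$ you must compare these two norms, which is exactly the Hessian relation you are proving; the argument is circular there.

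The paper avoids this term altogether: it first replaces $x,\hat x$ by the exactly feasible projections $x^{(\final)},\hat x^{(\final)}$ of \cref{lem:finalpoint}, which move each block by at most $\alpha\eps$ in local norm (so by self-concordance the Hessians change only by a constant factor). After this, $A^\top x=A^\top\hat x=b$, the term $\bigl(\hat y/\hat t-y/t\bigr)^\top A^\top(\hat x-x)$ vanishes identically, and the remainder of your argument goes through cleanly to give $D\le O_{C_K}(nr)$.
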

\begin{proof}
\cref{lem:finalpoint} gives that for points that are well centered,
there are nearby points that are completely feasible.
So we may assume that $x$ and $\xhat$ are feasible,
and $(x^{(i)}, s^{(i)})$ are $\eps$-centered for $\eps = 1/200$.

Consider the centering errors
\[
\Delta_i
\coloneq
\frac{s_i}{t} + \g \Phi_i(x_i)
\qquad \text{ and } \qquad
\Deltahat_i
\coloneq
\frac{\shat_i}{\that} + \g \Phi_i(\xhat_i)
\]
for all $i \in [m]$.

We can compute for $i \in [m]$ that
\[
\l \g\Phi_i(x_i), x_i - \xhat_i \r
=
\l \Delta_i, x_i - \xhat_i \r
-
\frac{1}{t} \l s_i, x_i - \xhat_i \r
\le
\eps\norm{x_i - \xhat_i}_{\g^2 \Phi_i(x_i)}
- \frac{1}{t} \l s_i, x_i - \xhat_i \r,
\]
where we used Cauchy-Schwarz inequality on the first dot product
and $\| \Delta_i \|_{\g^2 \Phi_i(x_i)^{-1}} \le \eps$ implied by the $\epsilon$-centeredness of $(x, s)$.
Similarly,
\[
\l -\g\Phi_i(\xhat_i), x_i - \xhat_i \r
\le \eps\norm{x_i - \xhat_i}_{\g^2 \Phi_i(\xhat_i)}
+ \frac{1}{\that} \l \shat_i, x_i - \xhat_i \r.
\]
We will sum these conditions to create the overall dot
products between gradient difference and $x - \xhat$.
First, consider the trailing terms,
$\sum_{i} \l s_i, x_i - \xhat_i\r = \l s, x - \xhat \r$
and $\sum_{i} \l \shat_i, x_i - \xhat_i \r = \l \shat, x - \xhat \r$.
Since $A^\tp x = A^\tp \xhat = b$,
we have
$A^\tp (x - \xhat) = 0$,
so writing out $s$ as $c$ adjusted by a vector in the column space of $A$,
$s = c - A y$,
allows us to simplify this dot product to be just in $c$:
\[
\l s, x - \xhat \r
=
\l c, x - \xhat \r
- y^\tp A^\tp \left( x - \xhat \right)
=
\l c, x - \xhat \r
\]
and similarly $\l \shat, x - \xhat \r = \l c, x - \xhat \r$.

So summing the per function conditions over all $i$ gives us:
\begin{multline*}
\sum_{i\in[m]} \left\l \g\Phi_i(x_i) - \g\Phi_i(\xhat_i), x_i - \xhat_i \right \r\\
\le
\eps\sum_{i\in[m]}
\left(\norm{x_i - \xhat_i}_{\g^2\Phi_i(x_i)}
+ \norm{x_i - \xhat_i}_{\g^2 \Phi_i(\xhat_i)}\right)
+ \abs{\Big(\frac{1}{t} - \frac{1}{\that}\Big)\l c, x - \xhat\r}.
\end{multline*}

Now by \cref{lemma:cpfacts}~\cref{item:gradbound} we get that
\[
\left \l \g\Phi_i(x_i) - \g\Phi_i(\xhat_i), x_i - \xhat_i \right \r
\ge
\frac{\norm{x_i - \xhat_i}_{\g^2 \Phi_i(x_i)}^2}
{1+\norm{x_i - \xhat_i}_{\g^2 \Phi_i(x_i)}}
\ge
\norm{x_i - \xhat_i}_{\g^2 \Phi_i(x_i)} - 1.
\]
Combining this with the above gives
\[
\left(1-2\eps\right) \sum_{i\in[m]}
\l \g\Phi_i(x_i) - \g\Phi_i(\xhat_i), x_i - \xhat_i \r
\le
2\eps m + \abs{\Big(\frac{1}{t} - \frac{1}{\that}\Big) \l c, x - \xhat\r}.
\]

By \cref{lem:finalpoint} we know that
\[
\abs{\l c, x - \xhat\r} \le O_{C_K}\left(n \max\{t, \that\}\}\right)
\]
and thus
\[
\left|\Big(\frac{1}{t} - \frac{1}{\that}\Big)\l c,
x - \xhat\r\right| \le O(nr). \]
Thus
\[
\sum_{i\in[m]}
\l \g\Phi_i(x_i) - \g\Phi_i(\xhat_i), x_i - \xhat_i \r
\le
O_{C_K}\left(nr\right).
\]
The desired result then follows from applying \cref{lemma:hessianrelate} for each $i \in [m]$.
Note that the terms on the LHS are all nonnegative due to convexity of the $\Phi_i$'s,
and the requirement of $M \geq 1$ can be satisfied while increasing the overall bound by at most $m \leq O(n)$.
\end{proof}

\end{document}